\def\d{\delta}
\def\om{\omega}
\def\s{\sigma}
\def\t{\theta}
\def\vp{\varphi}
\def\Id{\mathop{\rm Id}\nolimits}
\def\p{\partial}
\def\ot{\otimes}
\def\odots{\ot\cdots\ot}
\def\rt{\triangleright}
\def\lt{\triangleleft}
\def\Id{\mathop{\rm Id}\nolimits}
\def\ad{\mathop{\rm ad}\nolimits}
\newcommand{\G}[1]{\mathfrak{#1}}
\newcommand{\C}[1]{\mathcal{#1}}
\newcommand{\B}[1]{\mathbb{#1}}
\renewcommand{\leq}{\leqslant}
\renewcommand{\geq}{\geqslant}
\numberwithin{equation}{section}
\newtheorem{theorem}{Theorem}[section]
\newtheorem{proposition}[theorem]{Proposition}
\newtheorem{corollary}[theorem]{Corollary}
\theoremstyle{definition}
\newtheorem{remark}[theorem]{Remark}
\author{O\u{g}ul Esen}
\address{Department of Mathematics, Gebze Technical University,  41400 Gebze-Kocaeli, Turkey}
\email{oesen@gtu.edu.tr}
\author{Serkan Sütlü}
\address{Department of Mathematics, I\c{s}ik University, 34980 \c{S}ile-\.{I}stanbul, Turkey}
\email{serkan.sutlu@isikun.edu.tr}
\begin{document}
\title{Matched pair analysis of the Vlasov plasma}
\date{}
\begin{abstract}
We present the Hamiltonian (Lie-Poisson) analysis of the Vlasov plasma, and the dynamics of its kinetic moments, from the matched pair decomposition point of view. We express these (Lie-Poisson) systems as couplings of \textit{mutually interacting} (Lie-Poisson) subdynamics. The mutual interaction is beyond the well-known semi-direct product theory. Accordingly, as the geometric framework of the present discussion, we address the \textit{matched pair Lie-Poisson} formulation allowing mutual interactions. Moreover, both for the kinetic moments and the Vlasov plasma cases, we observe that one of the constitutive subdynamics is the compressible isentropic fluid flow, and the other is the dynamics of the kinetic moments of order $\geq 2$. In this regard, the algebraic/geometric (matched pair) decomposition that we offer, is in perfect harmony with the physical intuition. To complete the discussion, we present a momentum formulation of the Vlasov plasma, along with its matched pair decomposition.   
\newline   
\textbf{Key words:} Vlasov plasma; Matched pair Lie algebra; Lie-Poisson equation. 
\newline  \textbf{MSC2010:} 37K30, 70H33, 35Q83.
\end{abstract}
\maketitle
\setcounter{tocdepth}{2}
\tableofcontents

\onehalfspace
\setlength{\parskip}{0.5cm}

\section{Introduction} 
Once a Hamiltonian realization of a physical system has been achieved, the analysis of many qualitative aspects of the system; such as the control, integrability, stability, and the asymptotic behaviour, become much more accessible  \cite{abraham1978foundations,arnol2013mathematical}. As such, in recent years, many physical systems have been studied in the realm of the Hamiltonian dynamics; from the classical models to continuum, as well as the  field theories. We refer the reader to \cite{binz2011geometry,LeRo,holm2009geometric,
libermann2012symplectic,MarsdenRatiu-book} for an incomplete list of examples. 

The present paper is on the (de)coupling problem of the Hamiltonian systems, which may be summarized along the following lines. Given two dynamical systems in interaction, the equation of motion of the coupled system does not merely consists of the equations of motions of the individual systems. Instead, there additional terms appear as a reflection of the interaction between the constitutive systems. 

The examples, built on a one-way action (only one of the systems effects the other), fell in the realm of the semi-direct product\footnote[1]{We shall, following the terminology of \cite{Majid-book} - upon which the algebraic framework of the present paper is built, prefer ``semi-direct sum'' Lie algebras. ``Semi-direct product'', on the other hand, will be reserved for Lie groups.} theory,  \cite{holm1982noncanonical,
marsden1984reduction,MarsRatiWein84}. The coupling of the Maxwellian property of a continuum with its fluid motion, such as the magnetohydrodynamics \cite{holm1982noncanonical} or the Maxwell-Vlasov system \cite{MaWe81}, are the quintessential examples studied by the semi-direct product theory.

On the other hand, the Hamiltonian systems built on a pair of mutually interacting subsystems, has been studied only recently in the Lie-Poisson formulation \cite{EsSu16}, and has already found applications in the field of reversible thermodynamics and kinetic models \cite{EsGrGuPa19,esen2017hamiltonian}. The algebraic foundations of this more general theory lie in the \textit{matched pair} construction of \cite{Maji90}. The generality is reflected by the fact that in case the effect of one of the subsystems is assumed to be trivial, then the matched pair theory reduces to the semi-direct product theory. Let us note that although we follow the terminology of matched pairs from \cite{Maji90,Maji90-II,Majid-book}, the construction appears in the literature in different names (for different purposes); such as the twilled extension in \cite{KoMa88}, the double Lie group in \cite{LuWein90}, or the Zappa-Sz\'ep product in \cite{Br05}. 

Now, the main goal of the present paper may be stated as to show that the equations governing the Vlasov plasma and the kinetic moments both admit matched pair decompositions. More precisely, their dynamics may be built upon the dynamics of the isentropic compressible fluid motion, and the dynamics of the kinetic moments of order $\geq 2$.

It worths mentioning that the matched pair approach appears to be a fertile strategy. In \cite{EsenSutl17}, it is used in the study of  the Lagrangian dynamics on Lie groups, wherein the Lagrangian and the Hamiltonian dynamics of matched pairs are linked by proper Legendre transformations. The matched pair strategy was also applied successfully to the higher order Lagrangian systems in \cite{esen2019matched}, and to the discrete dynamics in the Lie groupoid setting in \cite{esen2018matched}. Let us next present a brief review of the literature; in order to be able to state what exactly is missing in the literature, and how the present paper aims to fills it. 

\subsubsection*{\textbf{Lie-Poisson equations}} 

Let us begin with the Lie-Poisson formulation \cite{holm2008geometric,MarsdenRatiu-book} which provides a tangible framework that many continuum models fit. Let a continuum rests in a finite region $\C{Q}\subset \mathbb{R}^3$ without boundary. Then, the symmetry group of a physical motion is an infinite dimensional Lie group, say $K$, preserving the motion. In particular, the symmetry group of the incompressible isentropic Euler's fluid is the group of volume preserving diffeomorphisms on $\C{Q}$, while the symmetry group of the Vlasov plasma is the group of canonical diffeomorphisms on the cotangent bundle $T^*\C{Q}$. The (Lie-Poisson) equation that govern the motion, on the other hand, is given on the linear algebraic dual $\G{K}^*$ of the Lie algebra $\G{K}$. The dual space $\G{K}^*$ is  a Poisson space, called the Lie-Poisson space, on which the Lie-Poisson bracket is given by
\begin{equation} \label{LP-bracket}
\{\C{H},\C{G}\}(z)=-\Big \langle z,\big[\frac{\delta \C{H}}{\delta z},\frac{\delta \C{G}}{\delta z} \big] \Big \rangle,
\end{equation}
for any $z \in \G{K}^*$, and any $\C{H},\C{G}\in\G{K}^*$. Let us note that ${\delta \C{H}}/{\delta z}$ and ${\delta \C{G}}/{\delta z}$ denote the Fr\'{e}chet derivatives of the functionals, and (assuming the reflexivity) they are elements of $\G{K}$. Accordingly, the equation of motion of the system, due to a Hamiltonian functional $\C{H}$, is computed to be
\begin{equation} \label{LP-eqn}
\frac{\partial z}{\partial t}=-\ad^*_{{\delta \C{H}}/{\delta z}} z,
\end{equation}
in terms of the (left) coadjoint action, which corresponds to the negative of the linear algebraic dual of the (left) adjoint action. That is,
\begin{equation} \label{coad}
\langle \ad^\ast_{x} z, x' \rangle =-\langle z, \ad_{x}x'\rangle = - \langle \mu, [x,x']\rangle.
\end{equation}

\subsubsection*{\textbf{Hamiltonian analysis of the Vlasov plasma}} 
Let us now consider the non-relativistic collisionless plasma particles in $\C{Q}$, and the momentum-phase space $T^\ast \C{Q}$, with the Darboux' coordinates $(q^i,p_j)$. The plasma dynamics is determined by the evolution of the plasma density function $f$, defined on $T^\ast \C{Q}$, according to the Vlasov equation
\begin{equation}\label{Boltzmann-intro}
\frac{\partial f}{\partial t}+\frac{1}{m}\delta^{ij} p_i \frac{\partial f}{\partial q^j} -e \frac{\partial \phi}{\partial q^i}   \frac{\partial f}{\partial p_i}=0,
\end{equation}
where $\phi$ is the potential function, $m$ is the mass, and $e$ is the electrical charge. Coupling the Vlasov equation with the Poisson equation
\begin{equation}\label{poi}
\nabla^{2}\phi(q)=-e\int f(q,p)dp,
\end{equation}
in the case of the non-relativistic framework, one arrives at the Vlasov-Poisson equations. Coupling the Vlasov equation \eqref{Boltzmann-intro} with the Maxwell equations, on the other hand, one obtains the Maxwell-Vlasov equations. 

Hamiltonian analysis of the Vlasov plasma was achieved in \cite{MaWe81} through the symmetry group ${\rm Diff}_{\rm can}(T^*\C{Q})$ of canonical diffeomorphisms on $T^\ast \C{Q}$, which acts on $T^*\C{Q}$ from the right \cite{marsden1983hamiltonian}, known as the particle relabelling symmetry. 
The Lie algebra of ${\rm Diff}_{\rm can}(T^*\C{Q})$ is identified with the space $\C{F}(T^\ast\C{Q})/\B{R}$ of smooth functions modulo the constants, equipped with the opposite (\textit{minus}) canonical Poisson bracket, as a manifestation of the right symmetry. Let us note also that once the symplectic volume $dqdp$ is fixed as the top form, the $L_2$-pairing between $\C{F}(T^*\C{Q})$ and its dual $\C{F}^\ast(T^*\C{Q})$ - which happens to be the space of densities $Den(T^*\C{Q})$ on $T^*\C{Q}$ - allows to identify the dual space $\C{F}^\ast(T^*\C{Q})$ with $\C{F}(T^*\C{Q})$ itself. Accordingly, the Lie-Poisson bracket \eqref{LP-bracket}, for two functionals $\C{H}$ and $\C{G}$ on $Den(T^*\C{Q})$, takes the particular form
\begin{equation} \label{Vlasov-bracket}
\{\C{H},\C{G}\}^V(f)=\int_{T^*\C{Q}}\, f~\big\{\frac{\delta \C{H}}{\delta f}, \frac{\delta \C{G}}{\delta f}\big\}\, dqdp.
\end{equation}
The bracket that appear in the integral is the canonical Poisson bracket. Then, the Vlasov equation \eqref{Boltzmann-intro} may be written as a Lie-Poisson equation in this dual space, if the Hamiltonian functional is assumed to be $\C{H}(h)=\int (hf) dqdp$. Here, $h=p^2/2m+e\phi$ is taken to be total energy of a single particle motion, in which case the Fr\'{e}chet derivative $\delta \C{H}/ \delta f$ is equal to $h$. A straightforward calculation reveals that the coadjoint action on $Den(T^*\C{Q})$ is given by the opposite canonical Poisson bracket, and the Vlasov equation \eqref{Boltzmann-intro} may be written in the form
\begin{equation}\label{Ham-Vlasov}
\frac{\partial f}{\partial t}=-\ad^*_{\frac{\partial \C{H}}{\partial f}}f=
\{h,f\}.
\end{equation}
We refer the reader to \cite{morrison1982poisson,PaKiEsGr16} for further details on the Hamiltonian realization of the Vlasov motion. 

\subsubsection*{\textbf{Kinetic moments of the Vlasov plasma}}

As discussed in the previous paragraph, the motion of the plasma is governed by the evolution of the density function $f$ in the Vlasov equation \eqref{Boltzmann-intro}. The kinetic moments \cite{chapman1939mathematical} of the density function is determined through the integral hierarchy
\begin{equation} \label{kinetic moments-intro}
\mathbb{A}_{i_1 \dots i_m}(q)=\int_{T^*_q\C{Q}}\,p_{i_1}\dots p_{i_m} f(q,p)\,dp
\end{equation}
for all non-negative integers $m\geq 0$. As such, the kinetic moments determine the symmetric covariant tensor fields. Accordingly, the dynamics of the kinetic moments may be expressed as a Lie-Poisson structure as follows. The space ${\mathfrak{T}\mathcal{Q}}$ of symmetric contravariant tensor fields on $\C{Q}$  carries a (graded) Lie algebra structure via the symmetric Schouten concomitant \cite{KoMiSl93,Ma97,Sc40,Tr08}. For a $k$-th order contravariant field $\mathbb{X}^k$ and an $m$-th order contravariant field $\mathbb{Y}^m$, the symmetric Schouten concomitant is defined to be the contravariant tensor field 
\begin{equation} \label{SC-def-intro}
\left[ \mathbb{X}^k,\mathbb{Y}^m\right]:=\left(k\mathbb{X}^{i_{m+1}...i_{m+k-1}\ell}\mathbb{Y}^{i_{1}...i_{m}}_{,\ell}
-
m\mathbb{Y}^{i_{k+1}...i_{k+m-1}\ell}  \mathbb{X}^{i_{1}i_{2}...i_{k}}_{,\ell}\right)\, \partial {q^{i_{1}}}\otimes ...\otimes \partial {q^{i_{k+m-1}}}.
\end{equation}
of order $m+k-1$. We refer the reader to Subsection \ref{subsect-symm-contra-tensor} for further details. 

On the other hand, once a volume form $dq$ on $\C{Q}$ is fixed, we can consider the space ${\mathfrak{T}^*\mathcal{Q}}$ of symmetric covariant tensor fields as the dual of the symmetric contravariant tensor fields ${\mathfrak{T}\mathcal{Q}}$. Let us note that the kinetic moments \eqref{kinetic moments-intro} are elements of ${\mathfrak{T}^*\mathcal{Q}}$. Being the dual of a Lie algebra, ${\mathfrak{T}^*\mathcal{Q}}$ carries a Lie-Poisson bracket called, in literature, the Kupershmidt–Manin bracket. The Kupershmidt-Manin bracket, for two functionals $\C{H}$ and $\C{G}$ on ${\mathfrak{T}^*\mathcal{Q}}$, is
\begin{equation} \label{KM-bracket}
\{\C{H},\C{G}\}^{KM} (\mathbb{A}_{m+k-1})=-\int \mathbb{A}_{m+k-1}\cdot \big[ 
\frac{\delta \C{H}}{\delta \mathbb{A}_m},
\frac{\delta \C{G}}{\delta \mathbb{A}_k}
\big] dq,
\end{equation} 
where the bracket inside the integral is the symmetric Schouten concomitant \eqref{SC-def-intro}. In \cite{gibbons1981collisionless}, it is established that the kinetic moments are actually Poisson mappings from $Den(T^*\C{Q})$ to ${\mathfrak{T}^*\mathcal{Q}}$, respecting the Vlasov bracket \eqref{Vlasov-bracket} and the Kupershmidt-Manin bracket \eqref{KM-bracket}. We refer the reader to the recent papers \cite{GiHoTr08,gibbons2008vlasov} for such an analysis of the kinetic moments. These papers have also  motivational importance for the present study. 

\subsubsection*{\textbf{Plasma-to-fluid map}}

The semi-direct product $\G{s}=\C{F}(\C{Q})  \rtimes \mathfrak{X}(\mathcal{Q})$ of the zeroth order tensor fields $\C{F}(\C{Q})$, that is the space of smooth functions, with the first order contravariant tensor fields $\mathfrak{X}(\mathcal{Q})$, namely the space of smooth vector fields, constitutes a Lie subalgebra of the symmetric contravariant tensor fields ${\mathfrak{T}\mathcal{Q}}$ of all orders. Let us note that $\G{s}$ is the largest Lie subalgebra containing $\C{F}(\C{Q})$. 
Now, given two elements $\hat{\lambda} :=(\eta,Z) $ and $\hat{\s}:=(\sigma,Y )$ in $\C{F}(\C{Q})  \rtimes \mathfrak{X}(\mathcal{Q})$, the semi-direct product bracket is computed from \eqref{SC-def-intro} as
\begin{equation}\label{pa-intro}
 [  (\eta,Z  ) , (\sigma,Y  )  ]
= (Z(\sigma) -Y(\eta), [ Z,Y]  ),
\end{equation}
where $Z(\sigma)$ and $Y(\eta)$ are the directional derivatives, whereas the latter bracket is the Jacobi-Lie bracket of vector fields. The linear dual of $\G{s}$, on the other hand, may be given by $\G{s}^*=\C{F}(\C{Q})  \oplus \Lambda^1(\mathcal{Q})$, where $\Lambda^1(\mathcal{Q})$ denotes the space of 1-forms on $\C{Q}$. Let us note also that the dual space $\G{s}^\ast$ is the configuration space of the 
compressible isentropic fluid flow \cite{marsden1984reduction,MarsRatiWein84,
marsden1983coadjoint}, where the compressible fluid  bracket is given by 
\begin{equation}\label{CF}
\{ \C{H},\C{G}\}^{CF}( \rho ,M)=-\int_{\mathcal{Q}}\left\langle M,\left[\frac{\delta \C{H}}{\delta M},
\frac{\delta \C{G}}{\delta M} \right]\right\rangle  -\rho \left( {\frac{\delta\C{H}}{
\delta M}}\left( \frac{\delta \C{G}}{\delta \rho }\right) -{\frac{\delta \C{G}}{\delta M}}\left( \frac{\delta \C{H}}{
\delta \rho }\right) \right) dq.
\end{equation}
A direct  observation reveals that the first two kinetic moments \eqref{kinetic moments-intro} of the plasma density function, that is,
\begin{equation}
\rho(q)=\mathbb{A}_{0}(q)=\int_{T^*_q\C{Q}}f(q,p)dp, \qquad M_i=\mathbb{A}_{i}(q)=\int_{T^*_q\C{Q}}p_if(q,p)dp
\end{equation}
determine a Poisson map from $Den(T^*\C{Q})$ to $\G{s}^\ast$, compatible both with the Vlasov bracket \eqref{Vlasov-bracket} and the compressible fluid bracket \eqref{CF}.

\subsubsection*{\textbf{Hamiltonian dynamics on matched pairs}}

The algebraic foudations of our discussions lie in the theory of matched pairs of Lie algebras, and their Lie-Poisson counterparts, presented in Section \ref{Sec-DMIS}. Given two Lie algebras $\G{g}$ and $\G{h}$ with mutual interactions (given by compatible Lie algebra actions), their direct sum $\G{g}\oplus \G{h}$ may be endowed with a Lie algebra structure. In this case, the Lie algebra $\G{g}\oplus \G{h}$ is denoted by $\G{g}\bowtie \G{h}$, and is called the double cross sum Lie algebra \cite{Maji90-II,Maji90,Majid-book,Ta81}.  Moreover, the Lie bracket on the matched pair Lie algebra is given by 
\begin{equation}\label{mpla-intro}
\big[ (\xi\oplus \eta),(\xi'\oplus \eta')\big ]_{\bowtie}=\underbrace{\big( [\xi,\xi']+\eta\rt \xi'-\eta'\rt \xi \big)}_{\in ~\G{g}}
\oplus \underbrace{\big( [\eta,\eta']+\eta\lt \xi'-\eta' \lt \xi \big)}_{\in ~\G{h}},
\end{equation}
for any $\xi\oplus\eta$ and $\xi'\oplus\eta'$ in $\G{g}\bowtie \G{h}$, where the bracket in the first summand is the Lie bracket on $\G{g}$, while the one in the latter summand is the Lie bracket on  $\G{h}$. The remaining terms may be considered as the incarnations of the mutual interactions of $\G{g}$ and $\G{h}$. It worths mentioning that if one of the actions is trivial then a double cross sum Lie algebra reduces to a semi-direct sum Lie algebra. That is, the matched pair construction is a generalization of the semi-direct sum construction.  

From the decomposition point of view, if a Lie algebra $\G{K}$ can be decomposed (as a vector space) into a direct sum of two Lie subalgebras, say $\G{K}\cong\G{g}\oplus \G{h}$, then $\G{K}\cong\G{g}\bowtie \G{h}$ as Lie algebras. In this case, the mutual actions may be obtained from 
\begin{equation}\label{mpla-intro-2}
[\eta,\xi' ]=\underbrace{(\eta\rt \xi')}_{\in ~ \G{g}}
\oplus  \underbrace{( \eta\lt \xi')}_{\in ~ \G{h}}.  
\end{equation}
for any $\eta$ in $\G{h}$, and any $\xi'$ in $\G{g}$.

The matched pair decomposition $\G{K}=\G{g}\bowtie \G{h}$ leads to a matched pair decomposition of the Lie-Poisson bracket on $\G{K}^\ast = \G{g}^*\oplus \G{h}^*$, and accordingly, a matched pair decomposition of the Lie-Poisson equations. 

Let us recall the matched pair realization of the Lie-Poisson bracket from \cite{EsSu16}. Given  two function(al)s, $\C{H}=\C{H}(\mu,\nu)$ and $\C{G}=\C{G}(\mu,\nu)$ on the dual space $\G{K}^*$, the Lie-Poisson bracket turns out to be

\begin{equation} 
\begin{split} \label{LiePoissonongh-intro}
 \left\{ \mathcal{H},\mathcal{G}\right\}(\mu, \nu) &=\underbrace{-\left\langle \mu ,\left[\frac{\delta \mathcal{H}}{%
\delta \mu},\frac{\delta \mathcal{G}}{\delta \mu}\right]\right\rangle
-\left\langle \nu ,\left[\frac{\delta \mathcal{H}}{\delta \nu},\frac{\delta
\mathcal{G}}{\delta \nu}\right]\right\rangle}_{\text{A: direct product}} 
-\underbrace{\left\langle \mu ,\frac{\delta \mathcal{H}}{\delta \nu}
\rt \frac{\delta \mathcal{G}}{\delta \mu}\right\rangle
+\left\langle \mu ,\frac{\delta \mathcal{G}}{\delta \nu}\rt
\frac{\delta \mathcal{H}}{\delta \mu}\right\rangle}_ {\text{B: via the left action of $\G{h}$ on 
$\G{g}$}} \\
&  \qquad -\underbrace{ \left\langle \nu ,\frac{\delta \mathcal{H}}{\delta \nu}\lt \frac{\delta \mathcal{G}}{\delta \mu}\right\rangle
+\left\langle \nu ,\frac{\delta \mathcal{G}}{\delta \nu}\lt
\frac{\delta \mathcal{H}}{\delta \mu}\right\rangle} _{\text{C: via the
right action of $\G{g}$ on 
$\G{h}$}}.
\end{split}
\end{equation}
We note that, the label A refers to a sum of individual Poisson brackets on the dual spaces $\G{g}^* $ and $\G{h}^* $. The label B, on the other hand, follow from the left action of $\G{h}$ on $\G{g}$, while C is the result of the right action of $\G{g}$ on $\G{h}$. In case of a one-sided action, as in semi-direct sum theories, B or C drops. If, furthermore, there is no action, then both  B and C vanish. 

Accordingly, the Lie-Poisson equation \eqref{LP-eqn}, generated by a Hamiltonian function 
$\mathcal{H}=\mathcal{H}(\mu,\nu)$ on $\G{K}^\ast=\G{g}^\ast\oplus\G{h}^\ast$ is decomposed into two equations of the form
\begin{align}\label{LPEgh-intro}
\begin{split}
& \underbrace{\frac{d\mu}{dt} = -
\ad^{\ast}_{\frac{\delta\mathcal{H}}{\delta\mu}}(\mu)}_{\text{Lie-Poisson Eq. on }\G{g}^*}
+
\underbrace{\mu\overset{\ast }{\lt}
\frac{\delta\mathcal{H}}{\delta\nu}}
_{\text{action of } \G{h}}
+
\underbrace{\mathfrak{a}_{\frac{\delta\mathcal{H}}{\delta\nu}}^{\ast}\nu}
_{\text{action of }\G{g}}, 
\\
&\underbrace{\frac{d\nu}{dt} =
-
\ad^{\ast}_{\frac{\delta\mathcal{H}}{\delta\nu}}(\nu)}_
{\text{Lie-Poisson Eq. on }\G{h}^*}
-
\underbrace{
\frac{\delta\mathcal{H}}{\delta\mu} \overset{\ast }{\rt}\nu}_{\text{action of }\G{g}}
- \underbrace{\mathfrak{b}
_{\frac{\delta\mathcal{H}}{\delta\mu}}^{\ast}\mu
}_{\text{action of }\G{h}}
.
\end{split}
\end{align}
The first terms on the right hand sides are the individual Lie-Poisson equations on $\G{g}^\ast$ and $\G{h}^\ast$, respectively. The other terms are  dual and cross actions appear
 as the manifestations of the mutual Lie algebra actions. In Subsection \ref{Subsec-mp}, we shall precisely define the dual mappings $\overset{\ast }{\lt}$ and $\overset{\ast }{\rt}$, as well as the cross actions $\mathfrak{a}^*$ and $\mathfrak{b}^*$. We are now ready to state our first goal.

\subsubsection*{\textbf{The first goal of the present work. Matched pair decomposition of the kinetic moments.}} So far, we have mentioned the Lie algebra ${\mathfrak{T}\mathcal{Q}}$  of symmetric contravariant tensor fields of all orders equipped with the symmetric Schouten concomitant \eqref{SC-def-intro}, and its Lie subalgebra $\G{s}=\C{F}(\C{Q})  \rtimes \mathfrak{X}(\mathcal{Q})$ with the semi-direct sum Lie bracket \eqref{pa-intro}. Let us note also the complement $\G{n}$ of $\G{s}$ in ${\mathfrak{T}\mathcal{Q}}$, that is ${\mathfrak{T}\mathcal{Q}}=\G{s}\oplus \G{n}$, which consist of the symmetric contravariant tensors of order $\geq 2$. It follows at once from the graded character of the Schouten concomitant \eqref{SC-def-intro} that  $\G{n}$ is also a Lie subalgebra of $\mathfrak{T}\mathcal{Q}$. Hence, it follows from the universal property of the double cross sum construction that ${\mathfrak{T}\mathcal{Q}} =\G{s}\bowtie \G{n}$, which is our first objective. In view of \eqref{mpla-intro-2}, then, it suffices to compute $[\G{n},\G{s}]$ in ${\mathfrak{T}\mathcal{Q}}$ for the mutual actions. Being the dual of a matched pair Lie algebra, the Kuperschmidt-Manin bracket \eqref{KM-bracket} thus admits a matched pair decomposition in the form of \eqref{LiePoissonongh-intro}. Accordingly, the dynamics of the kinetic moments may, in turn, be given by the Lie-Poisson equations in the form \eqref{LPEgh-intro}. We refer the reader to Subsection \ref{subsect-kinetic-moments-dynamics} for details of the last two assertions. Here, the dynamics on $\G{s}^*$ encodes to that of the compressible fluid flow, whereas the dynamics on $\G{n}^*$ governs the motion of the kinetic moments of order $\geq 2$.

\subsubsection*{\textbf{The Vlasov plasma in momentum formulation}}

While investigating possible Hamiltonian realizations of the Vlasov plasma, in \cite{Gu10}, an intermediate level of equations has been introduced. In this intermediate level, which is connected to the classical Hamiltonian formalism of the Vlasov dynamics by means of a Poisson mapping, the Lie algebra of the configuration group ${\rm Diff}_{\rm can}(T^*\C{Q})$ appears to be the space $\mathfrak{X}_{\rm{Ham}}(T^\ast \C{Q})$ of Hamiltonian vector fields, rather then the function space. 

The dual space $\mathfrak{X}_{\mathrm{\mathrm{Ham}}}^*(T^\ast \C{Q})$, on the other hand, may be identified (see Proposition \ref{dualHam} below) with the space of 1-forms with non-trivial divergences, once the symplectic volume $dqdp$ is fixed as the top-form. We thus have the Vlasov bracket in momentum formulation, namely the momentum-Vlasov  bracket
\begin{equation}
\{\C{H},\C{G} \}^{mV}(\Pi_f)= \int_{T^*Q} \Pi_f\cdot \big [\frac {\partial \C{H}}{\partial \Pi_f},\frac {\partial \C{G}}{\partial \Pi_f}\big]\,dqdp,
\end{equation}
where the bracket on the right hand side is the Jacobi-Lie bracket of vector fields.
The coadjoint action of a Hamiltonian vector field $-X_h$ in $\mathfrak{X}_{\mathrm{\mathrm{Ham}}}(T^\ast \C{Q})$ on a 1-form $\Pi_f$ in $\mathfrak{X}_{\mathrm{\mathrm{Ham}}}^*(T^\ast \C{Q})$ is given by the opposite (minus) Lie derivative. 
In this representation, the Hamiltonian formulation turns out to be  a coadjoint flow described by 
\begin{equation} \label{mom-Vlasov-intro}
\frac{d\Pi_f}{dt}=-\ad^*_{X_h}\Pi_f=- \C{L}_{X_h}\Pi_f.
\end{equation}
Once again, the minus sign in front of the Hamiltonian vector field is a manifestation of the right symmetry. 
The system \eqref{mom-Vlasov-intro} is called the momentum-Vlasov equations \cite{esen2012geometry,Gu10}, and there exists a Poisson mapping (as the dual of a Lie algebra isomorphism) between the Vlasov equation \eqref{Ham-Vlasov} and the momentum-Vlasov equation \eqref{mom-Vlasov-intro}. 

\subsubsection*{\textbf{Advantages of the momentum-Vlasov approach.}} The momentum-Vlasov formulation \eqref{mom-Vlasov-intro} of the plasma dynamics has the following advantages. 

\textit{\textbf{1. The inverse Legendre transformation.}} In the  Hamiltonian formulation of the (Poisson-)Vlasov equation, the Hamiltonian functional is degenerate \cite{MaWe81,morrison1981hamiltonian}. As such, there is no direct way to define the (inverse) Legendre transformation in order to establish an equivalence between the Lie-Poisson and the Euler-Poincar\'e  formulations of the Vlasov dynamics. 

Furthermore, one has to overcome the complications related to the function spaces. The latter  involves the problem of the identification of the function space and its dual, since they are isomorphic under the $L_2$-pairing. The momentum-Vlasov equation \eqref{mom-Vlasov-intro} has been introduced, see \cite{EsGrGuPa19,esen2011lifts,esen2012geometry,Gu10}, to separate the Lie algebra and the dual space by considering the Lie algebra as the Lie algebra of Hamiltonian vector fields $\mathfrak{X}_{\mathrm{\mathrm{Ham}}}(T^\ast \C{Q})$, and the dual space as the 1-form densities  $\mathfrak{X}_{\mathrm{\mathrm{Ham}}}^*(T^\ast \C{Q})$. Proper definitions of the Lie algebra and its dual are especially important for the formulation of the (inverse) Legendre transformation of the Hamiltonian formulation of the (Poisson-)Vlasov plasma, which is still an open question. We shall, however, postpone the Legendre transformation to a future work, where we shall apply the Tulczyjew triplet \cite{tulczyjew1977legendre} to the Vlasov motion, which has been successfully exercised on the dynamics involving symmetries by Lie groups \cite{esen2014tulczyjew,esen2017tulczyjew}. We do hope that, the (matched pair) decomposition of the Hamiltonian formulation of both Vlasov and the momentum-Vlasov equations (which will be established in the present work) will lead to a more precise realization of the Tulczyjew triplet in this framework. Furthermore, the matched pairs seem to provide a promising formalism to define a proper (inverse) Legendre transformation to the Hamiltonian dynamics.  

\textit{\textbf{2. A geometric pathway to plasma dynamics.}} Another important consequence of the momentum realization of the plasma motion is the existence of a \textit{geometric pathway} to the momentum-Vlasov equation \eqref{mom-Vlasov-intro}, starting from the single particle motion. This geometry that the momentum-Vlasov equation admit is indeed proper to the kinetic theory \cite{VeSiDu11}. We find this harmony of the physical intuition and the geometrical realization interesting, and worth mentioning. To describe the motion of a continuum, one may start to write down the whole microscopic data, involving the interactions, which is very
difficult. The kinetic theory, on the other hand, uses the statistical concepts to
handle the practical problems of the microscopic theory.   
Indeed, starting with a Hamiltonian vector field $X_h$ generating the motion of a single particle, and applying the geometric operations such as complete lifts and vertical representatives, one arrives at the momentum-Vlasov equations without referring to \textit{any Poisson bracket or any Hamiltonian functional}, \cite{esen2012geometry}.  Precisely, it has been proved in \cite{esen2012geometry} that the momentum-Vlasov equation \eqref{mom-Vlasov-intro} can be written in the form
\begin{equation}\label{gp}
\dot{\Pi}^{v}=VX_h^{c\ast }\left( z,\Pi \right),   
\end{equation}
where $VX_h^{c\ast }$ is the vertical (evolutionary) representative (see \cite{Olver-93,Saunders-Jets} for the theory of the vertical representatives) of the
complete cotangent lift $X_h^{c\ast }$ of the Hamiltonian vector field $X_h$. Here, $\dot{\Pi}^{v}$ is a vector field defined as the vertical lift of the 1-form $\dot{\Pi}$, see for example \cite{yano}.  

Finally, let us comment on the geometrization \eqref{gp}, with some informal intuitions. A single particle traces a curve in the momentum phase space $T^*Q$ so that its motion is determined by an ordinary differential equation (ODE) with time as the independent variable. This is a Lagrangian submanifold of the iterated tangent bundle $TT^*Q$. On the other hand, the motion of the whole continuum is determined by a partial differential equation (PDE) governing a (scalar or vectorial) field. That is, mathematically, the single particle motion is a submanifold of a tangent bundle, whereas the motion of the continuum is a submanifold a jet bundle. Therefore, in order to find a link between these two motions, one needs to propose a geometric operator taking a tangent vector to a jet bundle element. Further, the final product of this operator  must forget the motion on the base level in order just to concentrate on the field parameters. The geometrization in \eqref{gp} does both of these two tasks simultaneously, relating the base motion to the motion of the field $\Pi$, by removing at the same time the affects of the dynamics on the base level. This geometric pathway approach has already found applications in some fluid theories as well \cite{EsGrGuPa19,esen2011lifts}.

\subsubsection*{\textbf{The second goal of this work. Matched pair decomposition of the momentum-Vlasov equations.}} The second aim of this work is to show that both the Vlasov equation \eqref{Boltzmann-intro} and the momentum-Vlasov equations \eqref{mom-Vlasov-intro} admit matched pair decompositions in the form \eqref{LPEgh-intro}. To establish this goal, we first carry the matched pair decomposition ${\mathfrak{T}\mathcal{Q}}=\G{s}\bowtie\G{n}$ of the symmetric contravariant tensor fields to the functions $\C{F}(T^*\C{Q})$, and then to  $\mathfrak{X}_{\mathrm{\mathrm{Ham}}}(T^*\C{Q})$ by the Lie algebra homomorphisms  \eqref{TQ-to-F} and \eqref{GCCL} respectively. More precisely, the matched pair decomposition of the non-flat functions $\C{F}_0(T^*\C{Q}) \subseteq \C{F}(T^*\C{Q})$ is stated in \eqref{C-decomp-1-}, while the matched pair decomposition of the Hamiltonian vector fields $\mathfrak{X}_{\mathrm{\mathrm{Ham,0}}}(T^*\C{Q})$ of non-flat functions in \eqref{formal-Ham-vf}. As a result, the Vlasov equation and the momentum-Vlasov equations are presented as matched pair Lie-Poisson systems in Corollary \ref{coroll-matched-Vlasov} and Corollary \ref{coroll-matched-mom-Vlasov}, respectively. We further express, in detail, all mutual actions and induced cross terms. 

\subsubsection*{\textbf{Organization}}

The paper is organized as follows. 

In Section \ref{Sec-DMIS} we present the background material. Namely, we shall recall in Subsection \ref{Subsec-mp} the basics of the matched pair Lie algebras (and their double cross sums), and then in Subsection \ref{Subsec-mrd} the Hamiltonian dynamics on double cross sums of Lie algebras. 

On the following three sections, we shall analyse the Lie-Poisson theory of the (plasma) kinetic moments, and the functions and the Hamiltonian vector fields on cotangent bundle. To this end, we reserve Section \ref{sect-LPD} to the Lie-Poisson dynamics of the kinetic moments of Vlasov plasma. More precisely, in Subsection \ref{subsect-symm-contra-tensor} we recall the space of (formal) contravariant tensor fields, while its dual, the space of covariant tensor fields is presented in \ref{subsect-symm-cov-tensors}. Finally, in Subsection \ref{subsect-kinetic-moments-dynamics} we present the (matched) Lie-Poisson equations for kinetic moments.

Next, in Section \ref{sect-mp-Vla} we present the matched pair analysis of the Lie-Poisson realization of the Vlasov plasma. In Subsection \ref{subsect-formal-power-series-Lie-alg} we recall the Borel's theorem to introduce, in the Lie algebra of functions, the subalgebra of non-flat functions (which is isomorphic to the space of formal power series in momentum variables). As a result, we gain the computational advantage of its graded structure. The dual space of this subalgebra is introduced in Subsection \ref{subsect-finite-moments}, and is identified with the functions with finite moments. The decomposition of the Vlasov plasma dynamics, then, is discussed in Subsection \ref{subsect-mp-Vla}.

The momentum-Vlasov dynamics, on the other hand, is studied in Section \ref{mp-mVla-sec}. In Subsection \ref{subsect-Ham-Lie-alg} we introduce the basics on the Lie algebra of Hamiltonian vector fields. Moreover, in view of the canonical projection from the functions to the Hamiltonian vector fields, we introduce the graded subalgebra of the non-flat Hamiltonian vector fields. Then, in Subsection \ref{subsect-1-form-div} we consider the dual space of the Hamiltonian vector fields, and formulate the coadjoint action. Lastly, in Subsection \ref{Sec-m-Vlasov} we present the (decomposition) of the Lie-Poisson equations.

Finally, in Section \ref{sect-can-diff} we study the Lie group counterpart of the decompositions we have considered in the previous sections. To this end, we first recall the fundamentals of the matched pairs of Lie groups, and their double cross products, in Subsection \ref{subsect-matched-Lie-gr}. Then, in Subsection \ref{subsect-decomp-diff-can} we present the double cross product group structure of the (sub)group of canonical diffeomorphisms preserving the canonical 1-form, which may be considered to be the Lie group of the Lie (sub)algebra of non-flat Hamiltonian vector fields.

\subsubsection*{Notations and Conventions}~

Throughout the text $\C{Q}\subseteq \B{R}^3$ will stand for a closed manifold. We shall denote by $\C{F}(\C{Q})$ the set of (real valued) smooth functions on $\C{Q}$. The integration over $\C{Q}$ (resp. $T^\ast\C{Q}$) will be against the top form $dq$ (resp. $dqdp$).

\section{Lie-Poisson dynamics on double cross sum Lie algebras} \label{Sec-DMIS}

In this section, se shall present the material that will be needed in the sequel. More precisely, we shall first recall the matched pairs of Lie algebras and their double cross sums, and then the basics of the Hamiltonian dynamics on double cross sums of Lie algebras.

\subsection{Double cross sum Lie algebras}\label{Subsec-mp}~

\subsubsection*{The matched pair theory}

We begin with a brief review of the matched pair theory for Lie algebras from \cite{Maji90,Majid-book}, see also \cite{LuWein90,Maji90-II,Ta81,Zhan10}. 

A pair $(\G{g},\G{h})$ of Lie algebras, with mutual actions
\begin{align} 
& \rt:\G{h}\ot\G{g}\rightarrow \G{g},\quad \eta\ot \xi \mapsto \eta\rt \xi, \label{Lieact-left}\\ &\lt:\G{h}\ot\G{g}\rightarrow \G{h}, \quad \eta\ot \xi \mapsto \eta\lt \xi \label{Lieact-right}
\end{align}
is called a ``matched pair of Lie algebras'' if the mutual actions and the individual Lie brackets satisfy
\begin{equation} \label{compcon-mpl}
\begin{split}
\eta \rt \lbrack \xi,\xi']=[\eta \rt \xi,\xi']+[\xi,\eta \rt \xi']+(\eta\lt \xi)\rt \xi'-(\eta \lt \xi')\rt \xi, \\
\lbrack \eta,\eta']\lt\xi =[\eta,\eta'\lt\xi ]+[\eta\lt\xi ,\eta']+\eta\lt (\eta' \rt \xi)-\eta'\lt (\eta\rt \xi ),
\end{split}
\end{equation}
for any $\xi,\xi'\in\mathfrak{g}$, and any $\eta,\eta'\in \mathfrak{h}$. In this case, their ``double cross sum'' $\G{g}\bowtie \G{h}:= \G{g}\oplus \G{h}$ becomes a Lie algebra through
\begin{equation}\label{mpla}
\Big[ (\xi,\eta);(\xi',\eta')\Big ]=\Big( [\xi,\xi']+\eta\rt \xi'-\eta'\rt \xi; 
 [\eta,\eta']+\eta\lt \xi'-\eta' \lt \xi \Big)
\end{equation}
for any $(\xi,\eta),(\xi',\eta')\in\G{g}\bowtie \G{h}$. 

\begin{remark}
The matched pair theory may be considered to be a generalization of the well-known semi-direct sum construction. More precisely, in case the left action \eqref{Lieact-left} is trivial, then $\G{g}\bowtie \G{h} = \G{g}\ltimes \G{h}$, and similarly if \eqref{Lieact-right} is trivial, then $\G{g}\bowtie \G{h}=\G{g}\rtimes \G{h}$. Needless to say, if both of the actions \eqref{Lieact-left} and \eqref{Lieact-right} are trivial, then the double cross sum Lie algebra reduces to a direct sum Lie algebra. 
\end{remark}

In the course of a double cross sum realization of a Lie algebra, \cite[Prop. 8.3.2]{Majid-book} is indispensible. As such, we record it here.

\begin{proposition} \label{universal-prop}
Given a Lie algebra $\G{K}$, with two Lie subalgebras $\G{g},\G{h} \subseteq \G{K}$, if $\G{K}\cong\G{g}\oplus\G{h}$ as vector spaces thorough 
\[
\G{g}\oplus \G{h}\to \G{K}, \qquad (\xi,\eta)\mapsto \xi+\eta,
\]
then $(\G{g},\G{h})$ is a matched pair of Lie algebras, and moreover $\G{K}\cong\G{g}\bowtie\G{h}$ as Lie algebras. The mutual actions, in this case, are given by 
\begin{equation} \label{mab-defn}
[\eta,\xi]=\Big(\eta\rt \xi, \eta\lt \xi\Big) \in \G{K},
\end{equation}
for any $\xi\in \G{g}$, and any $\eta\in \G{h}$.
\end{proposition}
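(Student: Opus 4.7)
The plan is to produce the mutual actions directly from the decomposition and then to cash out the matched pair axioms as consequences of the Jacobi identity in $\G{K}$.

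First I would define the actions: given $\eta\in\G{h}$ and $\xi\in\G{g}$, the bracket $[\eta,\xi]$ lies in $\G{K}=\G{g}\oplus\G{h}$, so it decomposes uniquely as $[\eta,\xi]=(\eta\rt\xi)+(\eta\lt\xi)$ with $\eta\rt\xi\in\G{g}$ and $\eta\lt\xi\in\G{h}$. This is the assignment in \eqref{mab-defn}. Bilinearity of $\rt$ and $\lt$ is immediate, since both the Lie bracket of $\G{K}$ and the projections onto the two summands are bilinear.

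Next I would verify that $\rt$ is a left Lie algebra action of $\G{h}$ on the underlying vector space of $\G{g}$, and $\lt$ a right action of $\G{g}$ on $\G{h}$. For the first of these, apply the Jacobi identity in $\G{K}$ to $\eta,\eta'\in\G{h}$ and $\xi\in\G{g}$:
\[
\bigl[[\eta,\eta'],\xi\bigr]=\bigl[\eta,[\eta',\xi]\bigr]-\bigl[\eta',[\eta,\xi]\bigr].
\]
Expanding each inner bracket by the defining decomposition, and using that $[\eta,\eta']\in\G{h}$ because $\G{h}$ is a subalgebra, the left-hand side reads $[\eta,\eta']\rt\xi+[\eta,\eta']\lt\xi$, while the right-hand side splits into contributions $[\eta,\eta'\rt\xi]$, $[\eta,\eta'\lt\xi]$, etc., where the mixed brackets decompose again via $\rt$ and $\lt$, and the like-type brackets stay in their respective subalgebras. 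Projecting onto $\G{g}$ then yields the action identity for $\rt$; projecting onto $\G{h}$ yields precisely the second compatibility relation of \eqref{compcon-mpl}. A symmetric Jacobi computation applied to $\eta\in\G{h}$, $\xi,\xi'\in\G{g}$ gives the action property for $\lt$ together with the first compatibility relation of \eqref{compcon-mpl}. Thus $(\G{g},\G{h})$ is a matched pair.

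Finally I would check that the Lie algebra isomorphism $\G{g}\oplus\G{h}\to\G{K}$, $(\xi,\eta)\mapsto \xi+\eta$, intertwines the matched pair bracket \eqref{mpla} with that of $\G{K}$. Expanding
\[
[\xi+\eta,\xi'+\eta']_{\G{K}}=[\xi,\xi']+[\eta,\eta']+[\eta,\xi']-[\eta',\xi],
\]
and substituting $[\eta,\xi']=\eta\rt\xi'+\eta\lt\xi'$ and $[\eta',\xi]=\eta'\rt\xi+\eta'\lt\xi$, the $\G{g}$ part collects to $[\xi,\xi']+\eta\rt\xi'-\eta'\rt\xi$ and the $\G{h}$ part to $[\eta,\eta']+\eta\lt\xi'-\eta'\lt\xi$, matching \eqref{mpla} on the nose.

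The only real obstacle is careful bookkeeping of which summand each term lands in while expanding the two Jacobi identities; once one is systematic about projecting onto $\G{g}$ versus $\G{h}$, both compatibility conditions of \eqref{compcon-mpl} and both action properties fall out at once, and the bracket computation is purely formal.
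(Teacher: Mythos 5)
Your proof is correct and complete. Note, however, that the paper itself gives no proof of this proposition: it records the statement as a citation to \cite[Prop.~8.3.2]{Majid-book}, so there is no in-paper argument to compare against. Your derivation — defining $\eta\rt\xi$ and $\eta\lt\xi$ as the $\G{g}$- and $\G{h}$-components of $[\eta,\xi]$, extracting the two action axioms and the two compatibility conditions \eqref{compcon-mpl} from the $\G{g}$- and $\G{h}$-projections of the Jacobi identities for the triples $(\eta,\eta',\xi)$ and $(\eta,\xi,\xi')$, and then verifying that $(\xi,\eta)\mapsto\xi+\eta$ carries the bracket \eqref{mpla} to $[\xi+\eta,\xi'+\eta']_{\G{K}}$ — is precisely the standard proof of this classical result, and all the individual steps (in particular the sign bookkeeping $-[\eta',\eta\lt\xi]=[\eta\lt\xi,\eta']$ needed to match the second line of \eqref{compcon-mpl}) check out.
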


\subsubsection*{The coadjoint representation of a double cross sum Lie algebra}

The coadjoint action of a Lie algebra on its dual is the integral part of the Lie-Poisson equation \eqref{LP-eqn}. Accordingly, we devote the present paragraph to a quick review of the coadjoint representation of a double cross sum Lie algebra on its dual, \cite{EsSu16}. To this end, we shall need the following mappings induced by the mutual actions \eqref{Lieact-left} and \eqref{Lieact-right}.

To begin with, the left action \eqref{Lieact-left} of $\G{h}$ on $\G{g}$ induces a right action 
\begin{equation} \label{eta-star}
\overset{\ast }{\lt} :\G{g}^\ast\ot \G{h}\longrightarrow \G{g}^\ast, 
\qquad \langle \mu \overset{\ast }{\lt} \eta, \xi \rangle:=\langle \mu, \eta \rt \xi \rangle,
\end{equation}
for any $\xi\in \G{g}$, any $\mu\in \G{g}^\ast$, and any $\eta\in \G{h}$. Induced also by \eqref{Lieact-left} is the mapping
\begin{equation} \label{b}
\G{b}_\xi: \G{h} \longrightarrow \G{g},\qquad \G{b}_\xi(\eta)=\eta\rt \xi,
\end{equation}
for any $\xi\in\G{g}$, and its transpose 
\begin{equation} \label{b*}
\G{b}_\xi^*:\G{g}^*\longrightarrow \G{h}^*, \qquad \langle \G{b}_\xi^*\mu,\eta \rangle: = \langle \mu, \G{b}_\xi \eta \rangle = \langle \mu, \eta\rt \xi  \rangle.
\end{equation}
Similarly, the right action \eqref{Lieact-right} induces the left action
\begin{equation} \label{xi-star}
\overset{\ast }{\rt}: \G{g}\ot \G{h}^* \to \G{h}^*, \qquad 
\langle \xi \overset{\ast }{\rt}\nu, \eta \rangle :=\langle\nu,  \eta \lt\xi\rangle
\end{equation}
of $\G{g}$ on $\G{h}^\ast$, and the mappings 
\begin{equation} \label{a}
\G{a}_\eta:\G{g}\mapsto \G{h}, \qquad \G{a}_\eta(\xi)=\eta\lt \xi, 
\end{equation}
with its transpose
\begin{equation}\label{a*}
\G{a}_\eta^*:\G{h}^*\mapsto \G{g}^*, \qquad \langle \G{a}_\eta^* \nu,\eta \rangle :=
\langle \nu,\G{a}_\eta \xi \rangle=\langle \nu,\eta\lt \xi \rangle.
\end{equation}
With all these at hand, the coadjoint action of $\G{g}\bowtie \G{h}$ on $\G{g}^\ast\oplus \G{h}^\ast$ is given by
\begin{equation} \label{coad}
\ad_{(\xi,\eta)}^{\ast}(\mu,\nu)=\Big(\ad^{\ast}_{\xi} \mu -\mu \overset{\ast }{\lt}\eta - \mathfrak{a}_{\eta}^{\ast}\nu,   
\ad^{\ast}_{\eta} \nu +\xi \overset{\ast }{\rt}\nu+ \mathfrak{b}%
_{\xi}^{\ast}\mu\Big),
\end{equation}
for any $(\xi,\eta)\in \G{g}\bowtie \G{h}$, and any 
$(\mu,\nu) \in \G{g}^\ast\oplus \G{h}^\ast$.

\subsection{Dynamics on double cross sum Lie algebras}\label{Subsec-mrd}~ 

Let us recall first that given a Lie algebra $\G{g}$, the dual space $\G{g}^\ast$ has the structure of a Poisson manifold, called the Lie-Poisson structure, with respect to which the Poisson Lie bracket on $\C{F}(\G{g}^\ast)$ reduces to the Lie bracket of $\G{g}$ on the linear functions on $\G{g}^\ast$. More precisely,  
\[
\langle \{\C{H}, \C{G}\} , \mu\rangle = - \langle \mu, [\frac{\d \C{H}}{\d \mu},\frac{\d \C{G}}{\d \mu}]\rangle
\]
for any $\C{H}, \C{G}  \in \G{g}^\ast$. Accordingly, the dynamics generated by a Hamiltonian functional $\C{H}:\G{g}^\ast\to \B{R}$ may be formulated as
\begin{equation}\label{Lie-Poisson-eqn}
\frac{d \mu}{dt} = -\ad^\ast_{\frac{\d \C{H}}{\d \mu}}\mu,
\end{equation}
which is called the Lie-Poisson equation. 

In the case of a double cross sum Lie algebra $\G{g}\bowtie\G{h}$, on the other hand, the Lie Poisson equation \eqref{Lie-Poisson-eqn} takes the form of
\begin{equation}\label{Lie-Poisson-eqn-double}
\frac{d (\mu,\nu)}{dt} = -\ad^\ast_{\left(\frac{\d \C{H}}{\d \mu},\frac{\d \C{H}}{\d \nu}\right)}(\mu,\nu),
\end{equation}
where $(\mu,\nu)\in \G{g}^\ast\oplus \G{h}^\ast$, and $\C{H} = \C{H}(\mu,\nu)$. In view of the coadjoint action \eqref{coad} of $\G{g}\bowtie \G{h}$ on $\G{g}^\ast\oplus \G{h}^\ast$, the right hand side of \eqref{Lie-Poisson-eqn-double} reads
\[
\left(\frac{d \mu}{dt}, \frac{d \nu}{dt}\right) = \left(-\ad^\ast_{\frac{\d \C{H}}{\d \mu}} \mu +\mu \overset{\ast }{\lt}\frac{\d \C{H}}{\d \nu} + 
\G{a}_{\frac{\d \C{H}}{\d \nu}}^\ast\nu,   
-\ad^{\ast}_{\frac{\d \C{H}}{\d \nu}} \nu -\frac{\d \C{H}}{\d \mu} \overset{\ast }{\rt}\nu- 
\G{b}_{\frac{\d \C{H}}{\d \mu}}^\ast\mu\right),
\]
or equivalently the ``matched Lie-Poisson equations''
\begin{equation}\label{LPEgh}
\begin{split}
& \underbrace{\frac{d\mu}{dt} = -
\ad^{\ast}_{\frac{\delta\mathcal{H}}{\delta\mu}}\mu}_{\text{Lie-Poisson eqn. on }\G{g}^*}
+
\underbrace{\mu\overset{\ast }{\lt}
\frac{\delta\mathcal{H}}{\delta\nu}}
_{\text{action of } \G{h}}
+
\underbrace{\mathfrak{a}_{\frac{\delta\mathcal{H}}{\delta\nu}}^{\ast}\nu}
_{\text{action of }\G{g}}, 
\\
&\underbrace{\frac{d\nu}{dt} =
-
\ad^{\ast}_{\frac{\delta\mathcal{H}}{\delta\nu}}\nu}_
{\text{Lie-Poisson eqn. on }\G{h}^*}
-
\underbrace{
\frac{\delta\mathcal{H}}{\delta\mu} \overset{\ast }{\rt}\nu}_{\text{action of }\G{g}}
- \underbrace{\mathfrak{b}
_{\frac{\delta\mathcal{H}}{\delta\mu}}^{\ast}\mu
}_{\text{action of }\G{h}}.
\end{split}
\end{equation}

\section{Lie-Poisson dynamics of the kinetic moments}\label{sect-LPD}

In the present section we shall analyse the Lie-Poisson dynamics of the kinetic moments of Vlasov plasma, from the matched pairs point of view. To this end, we shall recall the space of (formal) contravariant tensor fields, and its dual, the space of covariant tensor fields. Then, upon deriving the decomposition of the coadjoint action of contravariant formal tensors on covariant tensors, we shall conclude the (matched) Lie-Poisson equations for kinetic moments.
 
\subsection{Symmetric contravariant formal tensor fields}\label{subsect-symm-contra-tensor}~

Given an $n$-manifold $\mathcal{Q}$ without boundary, the ``space of contravariant formal tensor fields'' is the space denoted by
\[
{\mathfrak{T}\mathcal{Q}}:=\prod _{k\geq 0} \mathfrak{T}
^k\mathcal{Q},
\]
with $\mathfrak{T}^k\mathcal{Q}$ being the space of $k$-th order symmetric
contravariant tensor fields on $\mathcal{Q}$. Accordingly, given a local coordinate system $(q^\ell)$ on $\mathcal{Q}$, an element of $\mathfrak{T}\mathcal{Q}$ may be expressed as a formal sum
\begin{equation} \label{X^n}
\mathbb{X}=\sum_{k\geq 0}\mathbb{X}^{k}=\sum_{k\geq 0}\mathbb{X}^{i_{1}i_{2}...i_{k}}(q)
\partial {q^{i_{1}}}\otimes ...\otimes \partial {q^{i_{k}}},
\end{equation}
where $\mathbb{X}^{i_{1}i_{2}...i_{k}} \in \C{F}(\C{Q})$ are assumed to be symmetric on the indices. 

Let us note in particular that the space $\G{T}^0\C{Q}$ of zeroth order tensors is nothing but the space $\C{F}(\C{Q})$ of smooth functions over $\C{Q}$, and the space $\mathfrak{T}^1\mathcal{Q}$ of first order tensors coincides with  the space $\mathfrak{X}(\mathcal{Q})$ of smooth vector fields on 
$\mathcal{Q}$. 

The space $\mathfrak{T}\mathcal{Q}$ admits the structure of a Lie algebra through the Schouten concomitant which is given by 
\begin{equation} \label{SC-def}
\left[ \mathbb{X}^k,\mathbb{Y}^m\right]:=\Big(k\mathbb{X}^{i_{m+1}...i_{m+k-1}\ell}\mathbb{Y}^{i_{1}...i_{m}}_{,\ell}
-m\mathbb{Y}^{i_{k+1}...i_{k+m-1}\ell}  \mathbb{X}^{i_{1}i_{2}...i_{k}}_{,\ell}\Big)\, \partial {q^{i_{1}}}\otimes ...\otimes \partial {q^{i_{k+m-1}}},
\end{equation}
see for instance \cite{KoMiSl93,Ma97,Sc40}. Here we use the abbreviation $A_{,\ell}$ in order to denote the partial derivative of the object $A$ with respect to $q^\ell$. 

It worths to note that the Schouten concomitant is trivial on $\C{F}(\C{Q})$, and coincides with the Jacobi-Lie bracket of vector fields on $\mathfrak{X}(\mathcal{Q})$. Accordingly,
\begin{equation} \label{s}
\G{s}:=\bigoplus _{k=0}^{1}\mathfrak{T} ^k\mathcal{Q}=\C{F}(\C{Q})  \rtimes \mathfrak{X}(\mathcal{Q})
\end{equation}
is a Lie subalgebra on which the Schouten concomitant takes the form
\begin{equation}\label{pa}
\Big[ (\eta,Z ); (\s,Y ) \Big]
=\Big(Z(\s) -Y(\eta), [ Z,Y] \Big),
\end{equation}
for any $\left(\eta,Z \right),\left(\sigma,Y \right)\in \G{s}$. On the other hand, it follows at once from the graded character of the Schouten bracket that the vector space complement 
\begin{equation}\label{n}
\G{n}:=\prod _{k\geq 2} \mathfrak{T} ^k\mathcal{Q}
\end{equation}
of $\G{s} \subseteq \G{T}\C{Q}$ a also Lie subalgebra. 

As a result, Proposition \ref{universal-prop} yields the double cross sum realization of the Lie algebra $\G{T}\C{Q}$ of symmetric contravariant formal tensor fields. 
 
\begin{proposition} \label{mpdTQ}
The pair $(\G{s},\G{n})$ of Lie subalgebras of ${\mathfrak{T}\mathcal{Q}}$ is a matched pair of Lie algebras, and 
\begin{equation}\label{GTQ-matched-pair}
{\mathfrak{T}\mathcal{Q}} =\mathfrak{s}\bowtie \mathfrak{n}. 
\end{equation}
Furthermore, given any
\[
(\sigma,Y)\in \G{s}, \qquad \mathbf{X} := \sum_{k\geq 2}\mathbb{X}^k \in \G{n},
\]
the mutual actions are given by
\begin{align} 
 &\rt:\G{n}\otimes \G{s}\to \G{s}, \qquad \mathbf{X}\rt(\sigma,Y)=(0,[\mathbb{X}^{2},\sigma]),\label{actions-I} \\
 & \lt: \G{n}\otimes \G{s}\to \G{n}, \qquad  \mathbf{X}\lt(\sigma,Y)=\sum_{k\geq 2} ([\mathbb{X}^{k+1},\sigma]-\C{L}_Y \mathbb{X}^{k}).\label{actions-II}
\end{align}
\end{proposition}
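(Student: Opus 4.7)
The strategy is to invoke the universal property, Proposition \ref{universal-prop}. Once we observe that $\mathfrak{TQ} = \mathfrak{s} \oplus \mathfrak{n}$ as vector spaces, and that both $\mathfrak{s}$ and $\mathfrak{n}$ are Lie subalgebras of $\mathfrak{TQ}$, the matched pair structure and the isomorphism $\mathfrak{TQ} \cong \mathfrak{s} \bowtie \mathfrak{n}$ come for free, and the mutual actions are read off from the defining relation \eqref{mab-defn}. The bulk of the work will then be the explicit computation of $[\mathbf{X},(\sigma,Y)]$ in $\mathfrak{TQ}$ via the Schouten concomitant, and its projection onto $\mathfrak{s}$ and $\mathfrak{n}$.

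First I would record the two structural facts. The direct sum decomposition $\mathfrak{TQ} = \mathfrak{s} \oplus \mathfrak{n}$ is immediate from \eqref{s} and \eqref{n}, since $\mathfrak{s}$ collects the components of degree $0$ and $1$, while $\mathfrak{n}$ collects those of degree $\geq 2$. That $\mathfrak{s}$ is a Lie subalgebra is already displayed in \eqref{pa}; that $\mathfrak{n}$ is a Lie subalgebra follows from inspection of \eqref{SC-def}: the bracket $[\mathbb{X}^k,\mathbb{Y}^m]$ has degree $k+m-1 \geq 3$ whenever $k,m \geq 2$, hence it lies in $\mathfrak{n}$. This is the place where one invokes the graded character of the Schouten concomitant, and it is the only structural check needed for Proposition \ref{universal-prop} to apply.

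Next I would compute the mutual actions. By \eqref{mab-defn}, for $\mathbf{X} = \sum_{k \geq 2}\mathbb{X}^k \in \mathfrak{n}$ and $(\sigma,Y) \in \mathfrak{s}$, I need the $\mathfrak{s}$- and $\mathfrak{n}$-components of
\[
[\mathbf{X},(\sigma,Y)] = \sum_{k \geq 2}\big([\mathbb{X}^k,\sigma] + [\mathbb{X}^k,Y]\big).
\]
Applying the Schouten formula \eqref{SC-def} with $m=0$, the term $[\mathbb{X}^k,\sigma]$ has degree $k-1$, while with $m=1$ the term $[\mathbb{X}^k,Y]$ has degree $k$; a routine check of \eqref{SC-def} at $m=1$ yields $[\mathbb{X}^k,Y] = -\mathcal{L}_Y \mathbb{X}^k$ after using the symmetry of the indices. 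The $\mathfrak{s}$-part of the sum therefore reduces to the single contribution of degree $1$, namely $[\mathbb{X}^2,\sigma]$ (there is no degree-$0$ contribution, since $k-1 \geq 1$ for $k \geq 2$), giving \eqref{actions-I}. Regrouping the remaining terms by degree $k \geq 2$, the contribution of degree $k$ receives $[\mathbb{X}^{k+1},\sigma]$ from the first sum and $-\mathcal{L}_Y\mathbb{X}^k$ from the second, producing \eqref{actions-II}.

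I do not expect any real obstacle: the only subtlety is the index bookkeeping in the identification $[\mathbb{X}^k,Y] = -\mathcal{L}_Y\mathbb{X}^k$, and the correct re-indexing $k \mapsto k+1$ when collecting the degree-$k$ components coming from $[\mathbb{X}^{k+1},\sigma]$. The compatibility conditions \eqref{compcon-mpl} need not be verified directly, since they are automatic consequences of the universal property applied to the ambient Lie algebra $\mathfrak{TQ}$.
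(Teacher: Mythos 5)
Your proposal is correct and follows essentially the same route as the paper's own proof: both invoke Proposition \ref{universal-prop} after observing that $\G{s}$ and $\G{n}$ are Lie subalgebras with ${\mathfrak{T}\mathcal{Q}}=\G{s}\oplus\G{n}$ as vector spaces, and then read off the mutual actions by sorting the degree-$(k-1)$ terms $[\mathbb{X}^k,\sigma]$ and the degree-$k$ terms $[\mathbb{X}^k,Y]=-\C{L}_Y\mathbb{X}^k$ into their $\G{s}$- and $\G{n}$-components. The index bookkeeping, including the re-indexing $k\mapsto k+1$ in \eqref{actions-II}, matches the paper's computation.
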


\begin{proof}
In view of \eqref{mab-defn}, we compute the mutual actions via  
\begin{equation} \label{mp-decomp-TQ}
  [\mathbf{X} ,(\sigma,Y)] = \mathbf{X} \rt(\sigma,Y) + \mathbf{X}  \lt(\sigma,Y).
\end{equation}
Accordingly, for any $k\geq 2$, we compute
\begin{align*}
    [\mathbb{X}^{k},(\sigma,Y)]& = [\mathbb{X}^{k},\sigma] + [\mathbb{X}^{k},Y]  = 
    [\mathbb{X}^{k},\sigma] - \C{L}_Y \mathbb{X}^{k},
\end{align*}
where $ \C{L}_Y \mathbb{X}^\textbf{k}$ stands for the Lie derivative of $\mathbb{X}^{k} \in \G{n}$ in the direction of $Y\in \G{X}(\C{Q})$, whereas 
\begin{equation} \label{X-sigma}
[\mathbb{X}^{k},\sigma]=k\mathbb{X}^{i_1\ldots i_{k-1}\ell}\sigma_{,\ell}\partial{q^{i_1}}\odots \partial{q^{i_{k-1}}}.
\end{equation}
As a result of \eqref{mp-decomp-TQ} now, we obtain
\begin{equation}\label{n-on-s}
    \mathbb{X}^{k}\rt(\sigma,Y)= \begin{cases}
                                \displaystyle [\mathbb{X}^2,\sigma], & \mbox{if } k=2, \\
                                0, & \mbox{if } k \geq 3,
                              \end{cases}
\end{equation}
and
\begin{equation}\label{s-on-n}
   \mathbb{X}^{k}\lt(\sigma,Y)= \begin{cases}
                              \displaystyle   -\C{L}_Y \mathbb{X}^{2} , & \mbox{if } k=2, \\[.3cm]
                   \displaystyle             [\mathbb{X}^{k},\sigma] -\C{L}_Y \mathbb{X}^{k}, & \mbox{if } k \geq 3.
                              \end{cases}
\end{equation}
The claim thus follows from the fact that $[\mathbb{X}^k,\sigma] \in \G{T}^{k-1}\C{Q}$, while $\C{L}_Y \mathbb{X}^k \in \G{T}^k\C{Q}$.
\end{proof}

\subsection{Symmetric covariant tensor fields}\label{subsect-symm-cov-tensors}~

Along the way towards the Lie-Poisson dynamics of the kinetic moments of Vlasov plasma, we shall now recall the space $\G{T}^\ast\C{Q}$ of symmetric covariant tensor fields. To be more precise, we shall consider the space 
\[
\G{T}^\ast\C{Q} := \bigoplus_{k\geq 0} \G{T}^\ast_k\C{Q},
\]
where $\G{T}^\ast_k\C{Q}$ being the space of symmetric covariant tensors of order $k$, \cite{GiHoTr08}. An element of $\G{T}^\ast_k\C{Q}$, then, may be denoted by
\[
\B{A}_{i_1\ldots i_m}( q) dq^{i_1}\odots dq^{i_m},
\]
where the coefficients are symmetric on the indices. As such, an element of  $\G{T}^\ast\C{Q}$ is a finite sum
\[
\mathbb{A}=\sum_{k\geq 0}\mathbb{A}_{i_1\ldots i_k}( q) dq^{i_1}\odots dq^{i_k},
\]
and a non-degenerate pairing of $\G{T}\C{Q}$ and $\G{T}^\ast\C{Q}$ may be given by
\begin{equation}\label{pairing}
\left\langle \mathbb{A},\mathbb{X}\right\rangle :=  \sum_{k\geq 0}\left\langle \mathbb{A}_k,\mathbb{X}^k\right\rangle =  \sum_{k\geq 0}\,\int_{\mathcal{Q}}\, \mathbb{X}^k\lrcorner \mathbb{A}_k\, d_nq=  \sum_{k\geq 0}\,\int_{\mathcal{Q}} \,\mathbb{A}_{i_1 \ldots i_k}(q) \mathbb{X}^{i_1\ldots i_k}(q)\, d_nq,
\end{equation}
where $\lrcorner$ denotes the tensor contraction, \cite{CherChenLam-book}. 

Accordingly, the dual spaces of the Lie subalgebras $\G{s}\subseteq \G{T}\C{Q}$ of \eqref{s}, and $\G{n}\subseteq\G{T}\C{Q}$ of \eqref{n} may be given by
\[
\G{s}^{\ast} : = \G{T}^\ast_0\C{Q} \oplus \G{T}^\ast_1\C{Q}, \qquad \G{n}^{\ast} : = \bigoplus_{k\geq 2}\G{T}^\ast_k\C{Q} \label{s-n-duals}
\]
which yields the decomposition 
\begin{equation}\label{GTstarQ-matched-pair}
\G{T}^{\ast }\C{Q} = \G{s}^{\ast} \oplus \G{n}^{\ast}
\end{equation}
given by
\begin{equation} \label{decomp-covariant-dual}
\G{T}^{\ast }\C{Q}\ni\sum_{k\geq 0}\mathbb{A}_{i_1\ldots i_k}( q) dq^{i_1}\odots dq^{i_k} = \Big(\B{A}_0(q)+\B{A}_idq^i\Big)+ \sum_{k\geq 2}\mathbb{A}_{i_1\ldots i_k}( q) dq^{i_1}\odots dq^{i_k} \in \G{s}^\ast  \oplus  \G{n}^\ast.
\end{equation}

We shall need the following contraction formulas in the sequel. Given
\[
\B{A}_m=\B{A}_{i_1i_{2}\ldots i_m}(q) dq^{i_1}\odots dq^{i_m} \in \G{T}^\ast_m\C{Q}, \qquad \B{X}^k = \B{X}^{i_{1}i_{2}...i_{k}}(q) \p{q^{i_1}}\odots \p{q^{i_k}} \in \G{T}^k\C{Q},
\]
we we first note that
\begin{equation}\label{llcorner}
\B{X}^k\lrcorner\B{A}_m := \begin{cases}
\displaystyle \B{X}^{i_{1}i_{2}...i_{k}}(q)\B{A}_{i_1i_{2}\ldots i_m}(q) dq^{i_{k+1}}\odots dq^{i_m} \in \G{T}^\ast_{m-k}\C{Q} & \text{if  } m> k, \\[.3cm]
\displaystyle \B{X}^{i_{1}i_{2}...i_{k}}(q)\B{A}_{i_1i_{2}\ldots i_m}(q)\p{q^{i_{m+1}}}\odots \p{q^{i_k}}  \in \G{T}^{k-m}\C{Q} & \text{if  } k>m. 
\end{cases} 
\end{equation}
We shall also make use of the abbreviation 
\begin{equation} \label{diver}
{\rm div} :\mathfrak{T}^k\mathcal{Q} \longrightarrow \mathfrak{T}^{k-1}\mathcal{Q}, \qquad \mathbb{X}^{i_{1}i_{2}\dots i_{k}}(q)
\partial {q^{i_{1}}}\otimes \dots \otimes \partial {q^{i_{k}}} \mapsto k\mathbb{X}^{\ell i_{2}\dots i_{k}}_{,\ell}(q)
\partial {q^{i_{2}}}\otimes \dots \otimes \partial {q^{i_{k}}},
\end{equation}
for $k>0$, and ${\rm div} \mathbb{X}^{0} := 0$. It worths to note that, \eqref{diver} coincides with the classical divergence of a vector field for $k=1$. 

Let us further introduce the following abbreviations. For $k\geq 0$ and $m+k-1\geq 0$,
\begin{equation}\label{abbri}
\begin{split}
& \B{A}_{m+k-1}\star \B{X}^k := m \B{A}_{i_1\ldots i_{m-1} i_{m+1}\dots i_{m+k}}(q) \B{X}^{i_{m+1} \dots i_{m+k}}_{,i_m}(q) \,dq^{i_1} \odots dq^{i_m} \in \G{T}^\ast_m\C{Q}
\\
& \B{X}^k \ast \B{A}_{m+k-1} := k\B{X}^{i_{m+1} \dots i_{m+k-1} \ell} (q)\B{A}_{i_1 \dots   i_{m+k-1}, \ell}(q)\, dq^{i_1} \odots dq^{i_m}  \in \G{T}^\ast_m\C{Q}.
\end{split}
\end{equation}
Adding up, we introduce 
\begin{equation}\label{Lie-gen}
{\rm L}_{\B{X}^k} \B{A}_{m+k-1} := \B{A}_{m+k-1}\star \B{X}^k+ \B{X}^k\ast \B{A}_{m+k-1} \in \G{T}^\ast_m\C{Q},
\end{equation}
which, for $k=1$, reduces to the Lie derivative of the tensor field $\mathbb{A}_m\in \G{T}^\ast_m\C{Q}$ in the direction of $\mathbb{X}^{1} \in \G{X}(\C{Q})$. With all these at hand, we can now express the coadjoint action of $\G{T}\C{Q}$ on $\G{T}^\ast\C{Q}$.

\begin{proposition}\label{prop-coad}
The coadjoint action of $\B{X}\in\mathfrak{T}\mathcal{Q}$ on $\B{A}\in\mathfrak{T}^{\ast }\mathcal{Q}$ is given by
\begin{equation}\label{Coad-VF-}
\ad^\ast_{\B{X}}\B{A} = \sum_{m\geq 0} \widetilde{\mathbb{A}}_m,
\end{equation}
where
\begin{align}  
&\widetilde{\mathbb{A}}_0 = \sum_{k\geq  1}\, 
  \Big(\mathbb{X}^k \ast \mathbb{A}_{k-1}
+ {\rm div} \mathbb{X}^k  \lrcorner  \mathbb{A}_{k-1}\Big)
, \label{Coad-VF-eq}\\
& \widetilde{\mathbb{A}}_m=\sum_{k\geq 0}\, \Big( {\rm L}_{\mathbb{X}^k} \mathbb{A}_{m+k-1} + {\rm div} \mathbb{X}^k  \lrcorner  \mathbb{A}_{k+m-1}\Big), \qquad m\geq 1. \label{Coad-VF-eq-II}
\end{align}
\end{proposition}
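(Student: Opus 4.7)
The plan is to compute the coadjoint action from its definition,
\[
\langle \ad^\ast_{\mathbb{X}}\mathbb{A},\mathbb{Y}\rangle = -\langle \mathbb{A},[\mathbb{X},\mathbb{Y}]\rangle,
\]
by decomposing both $\mathbb{X}=\sum_{k\geq 0}\mathbb{X}^k$ and a test element $\mathbb{Y}=\sum_{m\geq 0}\mathbb{Y}^m$ into homogeneous components, and then reading off $\widetilde{\mathbb{A}}_m$ from the coefficient of $\mathbb{Y}^m$. Since $[\mathbb{X}^k,\mathbb{Y}^m]$ is of order $k+m-1$, the pairing \eqref{pairing} isolates exactly $\mathbb{A}_{k+m-1}$ for each pair $(k,m)$, so the problem reduces to analyzing the individual integrals
\[
\langle \mathbb{A}_{k+m-1},[\mathbb{X}^k,\mathbb{Y}^m]\rangle = \int_{\mathcal{Q}} \bigl(k\mathbb{X}^{i_{m+1}\ldots i_{m+k-1}\ell}\mathbb{Y}^{i_1\ldots i_m}_{,\ell} - m\mathbb{Y}^{i_{k+1}\ldots i_{k+m-1}\ell}\mathbb{X}^{i_1\ldots i_k}_{,\ell}\bigr)\mathbb{A}_{i_1\ldots i_{k+m-1}}\, d_nq
\]
coming from \eqref{SC-def}.

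The next step is to transfer every derivative that currently sits on $\mathbb{Y}^m$ onto $\mathbb{X}^k$ and $\mathbb{A}_{k+m-1}$ by integration by parts (legitimate because $\mathcal{Q}$ has no boundary). The first summand is the one that requires such an integration; splitting the resulting derivative by the Leibniz rule produces two contributions, one in which the derivative lands on $\mathbb{A}_{k+m-1}$ and one in which it becomes a divergence of $\mathbb{X}^k$. The second summand already carries the derivative on $\mathbb{X}^k$, and only needs an index relabelling (together with the full symmetry of $\mathbb{A}_{k+m-1}$) so that the free indices become the indices of $\mathbb{Y}^m$.

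I then match the three resulting expressions against the abbreviations in \eqref{abbri}--\eqref{Lie-gen} and the contraction convention \eqref{llcorner}. Concretely, the negative of the integration-by-parts terms yields, on the one hand, $k\mathbb{X}^{i_{m+1}\ldots i_{m+k-1}\ell}\mathbb{A}_{i_1\ldots i_{m+k-1},\ell}\,dq^{i_1}\odots dq^{i_m} = \mathbb{X}^k\ast\mathbb{A}_{m+k-1}$ and, on the other hand, $k\mathbb{X}^{i_{m+1}\ldots i_{m+k-1}\ell}_{,\ell}\mathbb{A}_{i_1\ldots i_{m+k-1}}\,dq^{i_1}\odots dq^{i_m}$; after rewriting $\mathbb{X}^{i_{m+1}\ldots i_{m+k-1}\ell}=\mathbb{X}^{\ell i_{m+1}\ldots i_{m+k-1}}$ using the symmetry of $\mathbb{X}^k$, this second piece is $\mathrm{div}\,\mathbb{X}^k\lrcorner\mathbb{A}_{m+k-1}$. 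Finally, negating the second summand of the Schouten bracket and relabelling gives $m\mathbb{A}_{i_1\ldots i_{m-1}a_1\ldots a_k}\mathbb{X}^{a_1\ldots a_k}_{,i_m}\,dq^{i_1}\odots dq^{i_m} = \mathbb{A}_{m+k-1}\star\mathbb{X}^k$. Summing over $k\geq 0$ and using \eqref{Lie-gen} produces \eqref{Coad-VF-eq-II}; for $m=0$ the $\star$-term vanishes by the coefficient $m$, the $k=0$ term drops because $[\mathbb{X}^0,\mathbb{Y}^0]=0$, and what remains is \eqref{Coad-VF-eq}.

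The main obstacle is bookkeeping: the Schouten formula \eqref{SC-def} distinguishes the "contracted" index from the free ones in a way that does not automatically match the index convention of the contraction \eqref{llcorner} or of the operations $\star,\ast$. The crucial step is, at each stage, to exploit the total symmetry of both $\mathbb{X}^k$ and $\mathbb{A}_{k+m-1}$ in their indices to bring the contracted index into the position required by the definitions in \eqref{abbri} and \eqref{diver}. Once that identification is made consistently for the three kinds of terms (derivative on $\mathbb{A}$, divergence of $\mathbb{X}$, derivative on $\mathbb{X}$), the formulas \eqref{Coad-VF-eq} and \eqref{Coad-VF-eq-II} follow.
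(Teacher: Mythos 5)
Your proposal is correct and follows essentially the same route as the paper's proof: pair $\ad^\ast_{\mathbb{X}^k}\mathbb{A}_{m+k-1}$ against a test tensor $\mathbb{Y}^m$, expand via the Schouten formula \eqref{SC-def}, integrate by parts on the summand carrying $\mathbb{Y}^m_{,\ell}$, and identify the three resulting terms with $\mathbb{X}^k\ast\mathbb{A}_{m+k-1}$, ${\rm div}\,\mathbb{X}^k\lrcorner\mathbb{A}_{m+k-1}$, and $\mathbb{A}_{m+k-1}\star\mathbb{X}^k$ (the paper merely writes the pairing as $\langle\mathbb{A},[\mathbb{Y}^m,\mathbb{X}^k]\rangle$ instead of $-\langle\mathbb{A},[\mathbb{X}^k,\mathbb{Y}^m]\rangle$). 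Your handling of the degenerate cases $m=0$ and $k=0$ is consistent with the conventions ${\rm div}\,\mathbb{X}^0=0$ and the vanishing coefficients in \eqref{abbri}.
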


\begin{proof}
For any $\B{X}^k, \B{Y}^m \in \G{T}\C{Q}$, with $m+k\geq 1$, and for any $ \B{A}_{m+k-1} \in \G{T}^\ast\C{Q}$, we see at once that
\begin{align*}
&\Big\langle  \ad^\ast_{\B{X}^k}\B{A}_{m+k-1}, \B{Y}^m \Big\rangle =  \Big\langle \B{A}_{m+k-1}, [\B{Y}^m, \B{X}^k] \Big\rangle = \\
& \int_Q \,\B{A}_{i_1\ldots i_{k+m-1}}(q)\Big(m\B{Y}^{i_{k+1}\ldots i_{k+m-1}\ell}(q) \B{X}^{i_1\ldots i_k}_{,\ell}(q) - k\B{X}^{i_{m+1}\ldots i_{k+m-1}\ell}(q) \B{Y}^{i_1\ldots i_m}_{,\ell}(q) \Big)\,d_nq =
 \\
 &m\int_Q \mathbb{A}_{i_1\ldots i_{k+m-1}}(q)\mathbb{Y}^{i_{k+1}\ldots i_{k+m-1}\ell} (q)\mathbb{X}^{i_1\ldots i_k}_{,\ell}(q)\,dq +
 \\
& k \int_Q \,\Big(\mathbb{A}_{i_1\ldots i_{k+m-1},\ell}(q)\mathbb{X}^{i_{m+1}\ldots i_{k+m-1}\ell}(q) + \mathbb{A}_{i_1\ldots i_{k+m-1}}(q) \mathbb{X}^{i_{m+1}\ldots i_{k+m-1}\ell}_{,\ell}(q)\Big)\mathbb{Y}^{i_1\ldots i_m}(q)\, d_nq =
\\
 &
\Big\langle \mathbb{A}_{m+k-1}\star \mathbb{X}^k,\mathbb{Y}^m \Big\rangle 
+
 \Big\langle \mathbb{X}^k \ast \mathbb{A}_{m+k-1} ,\mathbb{Y}^m \Big\rangle
+
 \Big\langle {\rm div} \mathbb{X}^k  \lrcorner  \mathbb{A}_{k+m-1},\mathbb{Y}^m \Big\rangle,
\end{align*}
from which both \eqref{Coad-VF-eq} and \eqref{Coad-VF-eq-II} follow. Let us note that we used integration by parts on the second equality. 
\end{proof}

\subsection{Lie-Poisson dynamics}\label{subsect-kinetic-moments-dynamics}~

As is well-known, the space $\G{T}^\ast\C{Q}$ of symmetric covariant tensor fields may be considered as the kinetic moments of the plasma density function, \cite{gibbons1981collisionless,
GiHoTr08,gibbons2008vlasov,Tr08}. Moreover, the dynamics of the kinetic moments is a coadjoint flow. Accordingly, the Lie-Poisson equation on $\G{T}^\ast\C{Q}$ is given by
\[
\dot{\mathbb{A}}=-\ad^*_{\frac{\p \C{H}}{\p\B{A}}} \B{A},
\]
where $\C{H}:\G{T}^\ast\C{Q} \to \B{R}$, $\C{H} = \C{H}(\B{A})$, is the Hamiltonian functional generating the motion. More precisely, considering $\partial \C{H} / \p \B{A}_m\in\G{T}^m\C{Q}$, it follows readily from \eqref{Coad-VF-eq} and \eqref{Coad-VF-eq-II} that the Lie-Poisson dynamics on the individual moments may be given by 
\begin{equation}  \label{Coad-VF-eq-}
\begin{split}
&\frac {d \mathbb{A}_0}{dt} = - \sum_{k\geq 1}\, 
 \frac{\d  \C{H}}{\d \B{A}_k}\ast \B{A}_{k-1}
- {\rm div} \frac{\d  \C{H}}{\d  \B{A}_k} \lrcorner  \B{A}_{k-1}
\\
&\frac {d \B{A}_m}{dt}=-\sum_{k\geq 0}\,  {\rm L}_{\frac{\d  \C{H}}{\d  \B{A}_k}} \B{A}_{m+k-1}- {\rm div} \frac{\d  \C{H}}{\d  \B{A}_k}  \lrcorner  \B{A}_{k+m-1}, \qquad m\geq 1.
\end{split}
\end{equation}
These equations may be decomposed further via \eqref{GTQ-matched-pair}, \eqref{GTstarQ-matched-pair}, and \eqref{coad}. Accordingly, we now record the dual actions.

\begin{proposition} \label{prop-dualaction}
The left action \eqref{actions-I} gives rise to the right action
\begin{equation} \label{dualaction-I}
\overset{\ast }{\lt}:\G{s}^*\otimes \G{n}\to\G{s}^*, \qquad (\rho,M)\overset{\ast }{\lt}\mathbf{X}:=(-\mathbb{X}^{2}\ast M - {\rm div}\mathbb{X}^2 M,0)
\end{equation}  
for any ${\rm \bf X}=\sum_{k\geq2}\B{X}^k\in \G{n}$, and any $(\rho,M) \in \G{s}^\ast$. Similarly, given any $(\s,Y)\in \G{s}$, and ${\rm \bf A}=\sum_{k\geq2}\B{A}_k \in \G{n}^\ast$, 
the right action \eqref{actions-II} yields the left action 
\begin{equation} \label{dualaction-II}
\overset{\ast }{\rt}:\G{s}\otimes \G{n}^*\to\G{n}^*, \qquad (\sigma,Y)\overset{\ast }{\rt} {\rm \bf A} = \Big ( \C{L}_Y \mathbb{A}_2+ {\rm div}Y  \mathbb{A}_2 ,  
\sum_{m\geq3} \left( \C{L}_Y \mathbb{A}_m+ {\rm div}Y  \mathbb{A}_m + \mathbb{A}_{m-1}\star \sigma \right )\Big ).
\end{equation}
\end{proposition}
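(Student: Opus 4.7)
The plan is to apply the definitions \eqref{eta-star} and \eqref{xi-star} of the dual actions to the explicit expressions \eqref{actions-I} and \eqref{actions-II}, and then to reorganise the resulting pairings by integration by parts on $\C{Q}$ (which has no boundary). Throughout, the computation reduces to two ingredients: expanding $[\mathbb{X}^k,\sigma]$ through \eqref{X-sigma}, and moving $\C{L}_Y$ from $\mathbb{X}^k$ onto $\mathbb{A}_k$ through the Leibniz rule.

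For $\overset{\ast }{\lt}$, the definition \eqref{eta-star} together with \eqref{actions-I} gives
\[
\langle (\rho,M)\overset{\ast }{\lt}\mathbf{X},(\sigma,Y)\rangle=\langle (\rho,M),(0,[\mathbb{X}^2,\sigma])\rangle=\langle M,[\mathbb{X}^2,\sigma]\rangle.
\]
Expanding $[\mathbb{X}^2,\sigma]=2\mathbb{X}^{i\ell}\sigma_{,\ell}\,\partial{q^i}$ via \eqref{X-sigma} and integrating by parts turns the right-hand side into $-\int_{\C{Q}}\bigl(2\mathbb{X}^{i\ell}M_{i,\ell}+2\mathbb{X}^{i\ell}_{,\ell}M_i\bigr)\,\sigma\,dq$; I recognise the two summands as $\mathbb{X}^2\ast M$ and ${\rm div}\,\mathbb{X}^2\lrcorner M$ through \eqref{abbri} and \eqref{diver}. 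Since no $Y$-term can appear on the right-hand side, the $\G{T}^\ast_1\C{Q}$-component of $(\rho,M)\overset{\ast }{\lt}\mathbf{X}$ is forced to vanish, and \eqref{dualaction-I} drops out.

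For $\overset{\ast }{\rt}$, the definition \eqref{xi-star} together with \eqref{actions-II} yields
\[
\langle (\sigma,Y)\overset{\ast }{\rt}\mathbf{A},\mathbf{X}\rangle=\sum_{k\geq 2}\Bigl(\langle \mathbb{A}_k,[\mathbb{X}^{k+1},\sigma]\rangle-\langle \mathbb{A}_k,\C{L}_Y\mathbb{X}^k\rangle\Bigr).
\]
For the first summand I would expand $[\mathbb{X}^{k+1},\sigma]$ via \eqref{X-sigma} and match the resulting indices against the definition of $\star$ in \eqref{abbri}, which produces $\langle \mathbb{A}_k,[\mathbb{X}^{k+1},\sigma]\rangle=\langle \mathbb{A}_k\star\sigma,\mathbb{X}^{k+1}\rangle$. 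For the second, the Leibniz rule $\C{L}_Y(\mathbb{A}_k\lrcorner\mathbb{X}^k)=(\C{L}_Y\mathbb{A}_k)\lrcorner\mathbb{X}^k+\mathbb{A}_k\lrcorner(\C{L}_Y\mathbb{X}^k)$, combined with Stokes' identity $\int_{\C{Q}}\C{L}_Y f\,dq=-\int_{\C{Q}}f\,{\rm div}\,Y\,dq$, gives the transpose rule
\[
\langle \mathbb{A}_k,\C{L}_Y\mathbb{X}^k\rangle=-\langle \C{L}_Y\mathbb{A}_k+{\rm div}\,Y\,\mathbb{A}_k,\mathbb{X}^k\rangle.
\]

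Finally, re-indexing the first sum by $m=k+1\geq 3$, I would collect
\[
\langle (\sigma,Y)\overset{\ast }{\rt}\mathbf{A},\mathbf{X}\rangle=\sum_{m\geq 3}\langle \mathbb{A}_{m-1}\star\sigma,\mathbb{X}^m\rangle+\sum_{m\geq 2}\langle \C{L}_Y\mathbb{A}_m+{\rm div}\,Y\,\mathbb{A}_m,\mathbb{X}^m\rangle,
\]
and read off the $\G{T}^\ast_2\C{Q}$-component as $\C{L}_Y\mathbb{A}_2+{\rm div}\,Y\,\mathbb{A}_2$, and, for $m\geq 3$, the $\G{T}^\ast_m\C{Q}$-component as $\C{L}_Y\mathbb{A}_m+{\rm div}\,Y\,\mathbb{A}_m+\mathbb{A}_{m-1}\star\sigma$, which is \eqref{dualaction-II}. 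The main delicacy is tracking the degree-shift built into the second action in \eqref{actions-II}: the term $[\mathbb{X}^{k+1},\sigma]$ sends the $k$-slot of $\mathbf{A}$ into the $(k+1)$-slot of the dual output, and since $\mathbf{A}$ is supported only in degrees $\geq 2$, this shifted $\star\sigma$-contribution is absent at the lowest output degree $m=2$.
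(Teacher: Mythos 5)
Your proof is correct and follows essentially the same route as the paper: dualize the actions \eqref{actions-I}--\eqref{actions-II} through the pairing \eqref{pairing} and integrate by parts on the boundaryless $\C{Q}$, then read off the graded components. The only cosmetic difference is that the paper routes the computation through the already-established coadjoint formulas of Proposition \ref{prop-coad} together with \eqref{Lie-gen}, whereas you redo the index manipulation and integration by parts inline; the content is identical, and your bookkeeping of the degree shift (which is what removes the $\star\sigma$ term at $m=2$) matches the paper's.
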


\begin{proof}
The first observation follows directly from
\begin{align*}
&\langle (\rho,M)\overset{\ast }{\lt}{\bf X}, (\s,Y) \rangle = \langle (\rho,M), {\bf X} \rt (\s,Y) \rangle  = \langle (\rho,M), \B{X}^2 \rt (\s,Y) \rangle  =\\
&  \langle M,[\mathbb{X}^2,\s] \rangle = \langle M,\ad_{\mathbb{X}^2} \s \rangle = - \langle \ad^*_{\mathbb{X}^2}M, \s \rangle = -\langle {\rm L}_{\mathbb{X}^2}M +{\rm div}\mathbb{X}^2 M, \s \rangle = \\
& -\langle \mathbb{X}^{2}\ast M + {\rm div}\mathbb{X}^2 \lrcorner M, \s\rangle,
\end{align*}
where on the sixth equality we used \eqref{Coad-VF-eq-II}, and on the last equality we used \eqref{Lie-gen}. As for the latter, keeping \eqref{Coad-VF-eq-II} and \eqref{Lie-gen} in mind, we observe for any $m\geq 2$ that
\begin{align*}
&\langle \sigma \overset{\ast }{\rt}\mathbb{A}_m,  \mathbb{X}^{m+1}\rangle = \langle \mathbb{A}_m,  \mathbb{X}^{m+1}\lt \s \rangle = \langle \mathbb{A}_m,  [\mathbb{X}^{m+1},\s] \rangle =
 \\
&\langle \ad^*_\sigma \mathbb{A}_m,  \mathbb{X}^{m+1} 
\rangle = \langle {\rm L}_\sigma  \mathbb{A}_m+ {\rm div} \sigma \mathbb{A}_m,\mathbb{X}^{m+1} \rangle = \langle\mathbb{A}_m\star \sigma, \mathbb{X}^{m+1} 
\rangle,
\end{align*} 
and that 
\begin{align*}
&\langle Y \overset{\ast }{\rt}\mathbb{A}_m,  \mathbb{X}^m\rangle = \langle \mathbb{A}_m,  \mathbb{X}^m\lt Y \rangle = \langle \mathbb{A}_m, [\mathbb{X}^m, Y] \rangle =\\
&  \langle \ad^*_Y \mathbb{A}_m, \mathbb{X}^m
\rangle = \langle  \C{L}_Y \mathbb{A}_m +{\rm div}Y  \mathbb{A}_m, \mathbb{X}^m\rangle,
\end{align*}
from which the result follows.
\end{proof}

We now proceed into the transpositions of the mappings given, for any $(\s,Y)\in \G{s}$ and ${\rm \bf X}=\sum_{k\geq2}\B{X}^k\in \G{n}$, by 
\begin{align} 
\G{b}_{(\s,Y)}& :\G{n} \to \G{s}, \qquad \G{b}_{(\s,Y)}{\mathbf X} := {\mathbf X} \rt (\s,Y), \label{b-ex}
\\
 \G{a}_{\mathbf X}&:\G{s} \to \G{n}, \qquad \G{a}_{\mathbf X}(\s,Y) := {\mathbf X} \lt (\s,Y). \label{a-ex}
\end{align}

\begin{proposition} \label{cross-act}
Given any $(\s,Y)\in \G{s}$ with ${\rm \bf X}=\sum_{k\geq2}\B{X}^k\in \G{n}$, and any $(\rho,M) \in \G{s}^\ast$ with ${\rm \bf A}=\sum_{k\geq2}\B{A}_k \in \G{n}^\ast$, the transposes of the linear operators \eqref{b-ex} and \eqref{a-ex} read 
\begin{align}
 &\G{b}^\ast_{(\s,Y)}:\G{s}^\ast \to \G{n}^\ast, \qquad \G{b}^\ast_{(\s,Y)}(\rho,M) = M\star \sigma,
 \label{b-ex-dual}
\\
&\G{a}^\ast_{\mathbf X} :\G{n}^\ast \to \G{s}^\ast, \qquad \G{a}^\ast_{\mathbf X}{\mathbf A}
= \left (-\sum_{k\geq 2}\big(\mathbb{X}^{k+1}\ast \mathbb{A}_k + {\rm div}\mathbb{X}^{k+1} \lrcorner \mathbb{A}_k\big),  
- \sum_{k\geq 2} \big ( {\rm L}_{\mathbb{X}^k} \mathbb{A}_k + {\rm div} \mathbb{X}^k  \lrcorner  \mathbb{A}_k \big)\right) , \label{a-ex-dual}  
\end{align}
respectively.
\end{proposition}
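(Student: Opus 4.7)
The plan is to unfold each transpose against the explicit mutual actions given in \eqref{actions-I}--\eqref{actions-II}, and then to read off the target components of the image by pairing separately with scalar functions $\sigma\in\C{F}(\C{Q})$ and with vector fields $Y\in\G{X}(\C{Q})$. All the algebraic ingredients needed are then in place to reduce each computation to an application of Proposition \ref{prop-coad}, in the same spirit as the proof of Proposition \ref{prop-dualaction}.

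For $\G{b}^\ast_{(\sigma,Y)}$ I would begin from
\[
\langle \G{b}^\ast_{(\sigma,Y)}(\rho,M),\mathbf{X}\rangle = \langle (\rho,M),\mathbf{X}\rt (\sigma,Y)\rangle.
\]
The formula \eqref{actions-I} immediately collapses the right-hand side to $\langle M,[\B{X}^2,\sigma]\rangle$, so only the grade-$2$ piece of $\mathbf{X}$ contributes. Expanding $[\B{X}^2,\sigma]$ via \eqref{X-sigma} and comparing with the abbreviation \eqref{abbri}, specialized to $\B{X}^k=\sigma$ of order zero and $m=2$, identifies this integral with $\langle M\star\sigma,\B{X}^2\rangle$. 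Since the higher-order components of $\mathbf{X}$ pair with zero, the image sits in $\G{T}^\ast_2\C{Q}\subseteq \G{n}^\ast$ and equals $M\star\sigma$.

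For $\G{a}^\ast_{\mathbf X}$ I would split the target $\G{s}^\ast = \C{F}(\C{Q})\oplus \G{T}^\ast_1\C{Q}$ by pairing with $(\sigma,0)$ and with $(0,Y)$ in turn. In the first case, \eqref{actions-II} reduces to $\mathbf{X}\lt(\sigma,0) = \sum_{k\geq 2}[\B{X}^{k+1},\sigma]$; expanding $[\B{X}^{k+1},\sigma]$ via \eqref{X-sigma} and integrating by parts on the contracted index produces the integrand $-\big(\B{X}^{k+1}\ast\B{A}_k+{\rm div}\B{X}^{k+1}\lrcorner\B{A}_k\big)\sigma$, which after summing over $k\geq 2$ yields the stated scalar component. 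In the second case, $\mathbf{X}\lt(0,Y) = -\sum_{k\geq 2}\C{L}_Y\B{X}^k$; using $\C{L}_Y\B{X}^k = -[\B{X}^k,Y]$ and invoking Proposition \ref{prop-coad} to identify the $\G{T}^\ast_1\C{Q}$-component of $\ad^\ast_{\B{X}^k}\B{A}_k$ with ${\rm L}_{\B{X}^k}\B{A}_k+{\rm div}\B{X}^k\lrcorner \B{A}_k$ gives
\[
-\langle \B{A}_k,\C{L}_Y\B{X}^k\rangle = -\langle {\rm L}_{\B{X}^k}\B{A}_k+{\rm div}\B{X}^k\lrcorner \B{A}_k,Y\rangle.
\]
Summing over $k\geq 2$ then delivers the stated $1$-form component with its minus sign.

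The main obstacle is keeping the index gymnastics straight: the Schouten concomitant shifts orders between $\B{X}^k$, $\B{A}_k$, $\sigma$, and $Y$, so matching each integrand to the abbreviation $\B{A}_{m+k-1}\star\B{X}^k+\B{X}^k\ast\B{A}_{m+k-1}$ of \eqref{abbri} requires a deliberate choice of $m$ and $k$ in each case (namely $k=0,\,m=2$ for $\G{b}^\ast$; $k\to k+1,\,m=0$ for the scalar component of $\G{a}^\ast$; and $k\to k,\,m=1$ for its $1$-form component). Once this bookkeeping is in place, and the sign convention for $\ad^\ast$ used in Proposition \ref{prop-coad} is tracked, both identities reduce to direct substitutions.
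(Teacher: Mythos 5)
Your proposal is correct and follows essentially the same route as the paper: unfold each transpose against the explicit actions \eqref{actions-I}--\eqref{actions-II}, pair separately with $\sigma$ and $Y$, and identify the resulting integrals with the components of $\ad^\ast$ from Proposition \ref{prop-coad} (the paper phrases the $\G{b}^\ast$ step through $\ad^\ast_\sigma M$ and \eqref{Lie-gen} rather than a direct match with \eqref{abbri}, but the computation is identical). Your bookkeeping of the order parameters ($k=0,\,m=2$; $m=0$; $m=1$) is accurate.
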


\begin{proof}
The former follows at once from
\begin{align*}
& \langle\G{b}^\ast_{(\s,Y)}(\rho,M), {\bf X}\rangle = \langle (\rho,M), \G{b}_{(\s,Y)}{\bf X}\rangle = \langle (\rho,M), {\bf X} \rt (\s,Y)\rangle = \langle (\rho,M), \mathbb{X}^{2} \rt (\s,Y)\rangle = \\
& -\langle M, \ad_\s \mathbb{X}^{2} \rangle=\langle \ad_\s ^*M, \mathbb{X}^{2} \rangle = \langle M\star \s, \mathbb{X}^{2} \rangle = \langle M\star \s, {\bf X} \rangle,
\end{align*}
where we used \eqref{Coad-VF-eq-II} and \eqref{Lie-gen} on the sixth equality. As for the latter, we observe for $k\geq 2$ that
\begin{align*}
&\langle \G{a}^\ast_{\mathbb{X}^{k+1}}\mathbb{A}_k, \s \rangle 
= \langle \mathbb{A}_k, \G{a}_{\mathbb{X}^{k+1}}\s \rangle =\langle \mathbb{A}_k,  \mathbb{X}^{k+1} \lt \s \rangle  =  \\
&  \langle \mathbb{A}_k,  [\mathbb{X}^{k+1},\s] \rangle 
=  \langle -\ad^*_{\mathbb{X}^{k+1}}\mathbb{A}_k,\s\rangle = \langle - \mathbb{X}^{k+1}\ast \mathbb{A}_k - {\rm div}\mathbb{X}^{k+1} \lrcorner \mathbb{A}_k, \s \rangle
\end{align*}
where, this time, we used \eqref{Coad-VF-eq} on the fifth equality, and that 
\begin{align*}
 &\langle \G{a}^\ast_{\mathbb{X}^k}\mathbb{A}_k, Y \rangle 
 = \langle \mathbb{A}_k, \G{a}_{\mathbb{X}^k}Y \rangle 
 =  \langle \mathbb{A}_k, \mathbb{X}^k\lt Y \rangle 
 =\langle \mathbb{A}_k, [\mathbb{X}^k,Y] \rangle 
 = \\ 
& \langle -\ad^*_{\mathbb{X}^k} \mathbb{A}_k, Y
 \rangle = \langle - {\rm L}_{\mathbb{X}^k} \mathbb{A}_k - {\rm div} \mathbb{X}^k  \lrcorner  \mathbb{A}_k, Y \rangle
\end{align*}
in view of \eqref{Coad-VF-eq-II}. The claim thus follows. 
\end{proof}

Finally, we are ready for the decomposition of the coadjoint action of $\G{T}\C{Q}$ on $\G{T}^\ast\C{Q}$.

\begin{proposition} \label{coaddecompAexp}
Given any $(\s,Y)\in \G{s}$ with ${\rm \bf X}=\sum_{k\geq2}\B{X}^k\in \G{n}$, and any $(\rho,M) \in \G{s}^\ast$ with ${\rm \bf A}=\sum_{k\geq2}\B{A}_k \in \G{n}^\ast$, the coadjoint action of $\mathfrak{T}\mathcal{Q}$ on $\mathfrak{T}^{\ast }\mathcal{Q}$ may be presented as
\[
\ad^\ast_{\big((\s,Y); {\mathbf X}\big)}\Big((\rho,M); {\mathbf A}\Big) = \Big((\tilde{\rho},\tilde{M}); \tilde{\mathbf A}\Big),
\]
where
\begin{equation} \label{coad-mp}
\begin{split}
& \widetilde{\rho}=\C{L}_Y\rho+\rho {\rm div}Y +
\mathbb{X}^{2}\ast M + {\rm div}\mathbb{X}^2 \lrcorner M 
+
\sum_{k\geq2} ( \mathbb{X}^{k+1}\ast \mathbb{A}_k + {\rm div}\mathbb{X}^{k+1} \lrcorner \mathbb{A}_k),
\\
& \widetilde{M}= \rho d\s+\C{L}_Y M+{\rm div}Y M +
\sum_{k\geq2} ( {\rm L}_{\mathbb{X}^k} \mathbb{A}_k + {\rm div} \mathbb{X}^k \lrcorner  \mathbb{A}_k ),
\\
& \widetilde{\mathbb{A}}_2= \sum_{k\geq 2}(
{\rm L}_{\mathbb{X}^k} \mathbb{A}_{k+1} + {\rm div} \mathbb{X}^k  \lrcorner  \mathbb{A}_{k+1})
+
\C{L}_Y \mathbb{A}_2 + {\rm div}Y  \mathbb{A}_2 + M\star \sigma,
\\
& \widetilde{\mathbb{A}}_m=
\sum_{k\geq2}(
{\rm L}_{\mathbb{X}^k} \mathbb{A}_{m+k-1} + {\rm div} \mathbb{X}^k  \lrcorner  \mathbb{A}_{k+m-1})
+  \C{L}_Y \mathbb{A}_m+ {\rm div}Y  \mathbb{A}_m + \mathbb{A}_{m-1}\star \sigma, \quad m\geq 3.
\end{split}
\end{equation}
\end{proposition}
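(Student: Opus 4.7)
The strategy is to apply the general coadjoint action formula \eqref{coad} for the double cross sum Lie algebra $\G{T}\C{Q}=\G{s}\bowtie\G{n}$, and then populate each of its six terms using the formulas already established in this section. That is, with $(\xi,\eta) = ((\s,Y),\mathbf{X})$ and $(\mu,\nu) = ((\rho,M),\mathbf{A})$, formula \eqref{coad} yields
\begin{equation*}
\ad^\ast_{\bigl((\s,Y);\mathbf{X}\bigr)}\bigl((\rho,M);\mathbf{A}\bigr)
=
\Bigl(\ad^\ast_{(\s,Y)}(\rho,M) - (\rho,M)\overset{\ast}{\lt}\mathbf{X} - \G{a}^\ast_{\mathbf{X}}\mathbf{A}\,,\ \ad^\ast_{\mathbf{X}}\mathbf{A} + (\s,Y)\overset{\ast}{\rt}\mathbf{A} + \G{b}^\ast_{(\s,Y)}(\rho,M)\Bigr).
\end{equation*}
The proposed proof will expand each summand separately and collect terms of the same covariant degree.

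The first component requires three ingredients. The term $\ad^\ast_{(\s,Y)}(\rho,M)$ is the coadjoint action within $\G{s}^\ast$; applying Proposition \ref{prop-coad} with $\mathbf{X}$ restricted to $\B{X}^0=\s$ and $\B{X}^1=Y$, and using that ${\rm div}\,\s=0$, produces the summand $\C{L}_Y\rho+\rho\,{\rm div}\,Y$ in the $\rho$-slot and $\rho\,d\s+\C{L}_Y M+{\rm div}Y\,M$ in the $M$-slot, where the $\rho\,d\s$ comes from the $\star$-operation with $m=1,k=0$. The cross term $-(\rho,M)\overset{\ast}{\lt}\mathbf{X}$ is read directly off Proposition \ref{prop-dualaction}, equation \eqref{dualaction-I}, contributing $\B{X}^2\ast M+{\rm div}\B{X}^2\lrcorner M$ to $\widetilde{\rho}$ and nothing to $\widetilde{M}$. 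The remaining term $-\G{a}^\ast_{\mathbf{X}}\mathbf{A}$ is provided by Proposition \ref{cross-act}, equation \eqref{a-ex-dual}; it feeds $\sum_{k\geq 2}(\B{X}^{k+1}\ast\B{A}_k+{\rm div}\B{X}^{k+1}\lrcorner\B{A}_k)$ into $\widetilde{\rho}$ and $\sum_{k\geq 2}({\rm L}_{\B{X}^k}\B{A}_k+{\rm div}\B{X}^k\lrcorner\B{A}_k)$ into $\widetilde{M}$. Adding these three contributions reproduces the formulas for $\widetilde{\rho}$ and $\widetilde{M}$ claimed in the statement.

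The second component is handled analogously. The piece $\ad^\ast_{\mathbf{X}}\mathbf{A}$ is obtained by restricting \eqref{Coad-VF-eq-II} to $k\geq 2$, yielding the Lie-derivative-and-divergence sums appearing in each $\widetilde{\B{A}}_m$. The action $(\s,Y)\overset{\ast}{\rt}\mathbf{A}$ given in \eqref{dualaction-II} inserts the $\C{L}_Y\B{A}_m+{\rm div}Y\,\B{A}_m$ terms (for all $m\geq 2$) and the $\B{A}_{m-1}\star\s$ terms (for $m\geq 3$). Finally, $\G{b}^\ast_{(\s,Y)}(\rho,M) = M\star\s$ from \eqref{b-ex-dual} lives entirely in $\G{T}^\ast_2\C{Q}$ and hence only modifies $\widetilde{\B{A}}_2$. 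Assembling these pieces degree-by-degree yields the formulas for $\widetilde{\B{A}}_2$ and $\widetilde{\B{A}}_m$ ($m\geq 3$) in the proposition.

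The only real difficulty is bookkeeping: tracking the degree shifts induced by $\star$, $\ast$, and ${\rm div}$, ensuring the summation ranges start at the correct index (in particular $k\geq 2$ in the $\G{n}$-contributions and the separation of $m=2$ from $m\geq 3$ due to the placement of $M\star\s$), and verifying that degenerate cases such as ${\rm div}\,\s=0$ and the scalar contraction ${\rm div}Y\lrcorner\rho={\rm div}Y\cdot\rho$ are handled consistently with the conventions of \eqref{llcorner}, \eqref{diver}, \eqref{abbri}. Once these conventions are aligned, the identity reduces to a straightforward term-by-term matching, with no new calculation beyond what is already recorded in Propositions \ref{prop-coad}, \ref{prop-dualaction} and \ref{cross-act}.
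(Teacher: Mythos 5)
Your proposal is correct and follows exactly the route of the paper's own proof: substitute $\big((\s,Y);\mathbf{X}\big)$ and $\big((\rho,M);\mathbf{A}\big)$ into the general coadjoint formula \eqref{coad} and then read off each of the six summands from Propositions \ref{prop-coad}, \ref{prop-dualaction} and \ref{cross-act}, collecting terms by covariant degree. The sign conventions, the placement of $M\star\s$ in $\widetilde{\B{A}}_2$ only, and the degenerate cases (${\rm div}\,\s=0$, the origin of $\rho\,d\s$ from the $\star$-operation with $m=1$, $k=0$) are all handled as in the paper.
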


\begin{proof}
In view of \eqref{coad} we have
\begin{align*}
& (\tilde{\rho},\tilde{M})=\ad^{\ast}_{(\s,Y)}(\rho, M)-(\rho, M) \overset{\ast }{\lt}{\mathbf X}- \mathfrak{a}_{\mathbf X}^{\ast}{\mathbf A}
\\
&\tilde{\mathbf A} = \sum_{m\geq 2} \tilde{\B{A}}_m  =\ad^{\ast}_{\mathbf X}{\mathbf A}+(\s,Y) \overset{\ast }{\rt}{\mathbf A}+ \mathfrak{b}%
_{(\s,Y)}^{\ast}(\rho,M).
\end{align*}
Accordingly, having Proposition \ref{prop-dualaction} and Proposition \ref{cross-act}, we just need the individual coadjoint representations of $\G{s}$ and $\G{n}$ on their dual spaces. Precisely, it follows from \eqref{Coad-VF-eq} and \eqref{Coad-VF-eq-II} that
\begin{equation}\label{coad-s-on-s*}
\ad^\ast_{(\s,Y)}(\rho,M) = \Big(\C{L}_Y\rho+\rho {\rm div}Y, \rho d\s+\C{L}_Y M+{\rm div}Y M \Big),
\end{equation}
and directly from \eqref{Coad-VF-eq-II} that
\[
 \ad^\ast_{\mathbf X}{\mathbf A} = \sum_{m\geq 2}\,\left(\sum_{k\geq 2}
{\rm L}_{\mathbb{X}^k} \mathbb{A}_{m+k-1} + {\rm div} \mathbb{X}^k  \lrcorner  \mathbb{A}_{k+m-1}\right).
\]
The claim thus follows.
\end{proof}

As a result, we can now conclude the double cross sum decomposition of the Lie-Poisson dynamics of the kinetic moments.

\begin{corollary}
Given a Hamiltonian functional $\C{H}=\C{H}(\rho,M,\textbf{A})$ on $\G{T}^\ast\C{Q}=\G{s}^\ast \oplus \G{n}^\ast$, the Lie-Poisson equations on symmetric covariant tensor fields may be given by
\begin{equation} \label{coad-mp-kinetic-moments}
\begin{split}
\dot{\rho}&=-\C{L}_{\frac{\partial \C{H}}{\partial M}}\rho-\rho {\rm div}\frac{\partial \C{H}}{\partial M}
-
\frac{\partial \C{H}}{\partial \mathbb{A}_{2}}\ast M - {\rm div}\frac{\partial \C{H}}{\partial \mathbb{A}_{2}}\lrcorner  M 
-
\sum_{k\geq 2}\big ( \frac{\partial \C{H}}{\partial \mathbb{A}_{k+1}}\ast \mathbb{A}_k + {\rm div}\frac{\partial \C{H}}{\partial \mathbb{A}_{k+1}} \lrcorner \mathbb{A}_k\big),
\\
\dot{M}&= - \rho d(\frac{\partial \C{H}}{\partial \rho})-\C{L}_{\frac{\partial \C{H}}{\partial M}} M-{\rm div}\frac{\partial \C{H}}{\partial M} M 
-
\sum_{k\geq 2} \big( {\rm L}_{\frac{\partial \C{H}}{\partial \mathbb{A}_k}} \mathbb{A}_k + {\rm div} \frac{\partial \C{H}}{\partial \mathbb{A}_{k+1}} \lrcorner  \mathbb{A}_k \big ),
\\
\dot{\mathbb{A}}_2&=  -
\sum_{k\geq2} \,\big(
{\rm L}_{\frac{\partial \C{H}}{\partial \mathbb{A}_k}} \mathbb{A}_{k+1} + {\rm div} \frac{\partial \C{H}}{\partial \mathbb{A}_{k+1}}  \lrcorner  \mathbb{A}_{k+1}\big)
-
\C{L}_{\frac{\partial \C{H}}{\partial M}}\mathbb{A}_2 - {\rm div}\frac{\partial \C{H}}{\partial M}  \mathbb{A}_2 -M\star \frac{\partial \C{H}}{\partial \rho} ,  
\\
\dot{\mathbb{A}}_m&=-
\sum_{k\geq2} \,\big(
{\rm L}_{\frac{\partial \C{H}}{\partial \mathbb{A}_k}} \mathbb{A}_{m+k-1} + {\rm div} \frac{\partial \C{H}}{\partial \mathbb{A}_k} \lrcorner  \mathbb{A}_{k+m-1}\big)
-
 \C{L}_{\frac{\partial \C{H}}{\partial M}} \mathbb{A}_m - {\rm div}\frac{\partial \C{H}}{\partial M}  \mathbb{A}_m - \mathbb{A}_{m-1}\star \frac{\partial \C{H}}{\partial \rho}, \quad m\geq 3.
\end{split}
\end{equation}
\end{corollary}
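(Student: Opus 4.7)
The plan is to specialize the matched Lie-Poisson equations \eqref{LPEgh} to the double cross sum $\mathfrak{T}\mathcal{Q} = \mathfrak{s}\bowtie \mathfrak{n}$ established in Proposition \ref{mpdTQ}, and then substitute the explicit form of the coadjoint representation computed in Proposition \ref{coaddecompAexp}. No fresh estimate is required; the content of the corollary is essentially unpacking the notation.

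First I would identify the variables. A Hamiltonian functional on $\mathfrak{T}^{\ast }\mathcal{Q} \cong \mathfrak{s}^\ast\oplus \mathfrak{n}^\ast$ depends on $(\rho,M)\in\mathfrak{s}^\ast$ and ${\mathbf A}=\sum_{k\geq 2}\mathbb{A}_k\in\mathfrak{n}^\ast$, so its functional derivatives sit in the Lie algebra side according to the matched pair decomposition \eqref{GTQ-matched-pair}:
\[
\frac{\delta \mathcal{H}}{\delta \mu}= \Big(\frac{\partial \mathcal{H}}{\partial \rho},\frac{\partial \mathcal{H}}{\partial M}\Big)\in\mathfrak{s},\qquad \frac{\delta \mathcal{H}}{\delta \nu}=\sum_{k\geq 2}\frac{\partial \mathcal{H}}{\partial \mathbb{A}_k}\in\mathfrak{n}.
\]
The Lie-Poisson equation on $\mathfrak{T}^{\ast }\mathcal{Q}$, associated to the total Lie bracket via the Schouten concomitant, then reads $\dot z = -\ad^{\ast}_{\delta \mathcal{H}/\delta z}\, z$, which in the matched pair variables becomes the pair of equations \eqref{LPEgh} with $\mu=(\rho,M)$ and $\nu={\mathbf A}$.

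Next I would plug Proposition \ref{coaddecompAexp} into the matched Lie-Poisson system \eqref{LPEgh} with the substitutions
\[
(\sigma,Y) \leftrightarrow \Big(\frac{\partial \mathcal{H}}{\partial \rho},\frac{\partial \mathcal{H}}{\partial M}\Big),\qquad {\mathbf X} \leftrightarrow \sum_{k\geq 2}\frac{\partial \mathcal{H}}{\partial \mathbb{A}_k}.
\]
Reading off $(\widetilde{\rho},\widetilde{M})$ from the first two lines of \eqref{coad-mp} and negating yields the stated equations for $\dot\rho$ and $\dot M$; here the $\ad^{\ast}_{(\sigma,Y)}(\rho,M)$ portion provided in \eqref{coad-s-on-s*} recovers the standard compressible fluid motion $\C{L}_Y\rho + \rho\,{\rm div}\,Y$ and $\rho\, d\sigma+ \C{L}_Y M + {\rm div}\,Y\, M$, while the remaining summands come from the cross action $\mathfrak{a}^{\ast}_{\mathbf X}{\mathbf A}$ of \eqref{a-ex-dual} together with the dual right action $(\rho,M)\overset{\ast}{\lt}{\mathbf X}$ of \eqref{dualaction-I}. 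Similarly, the equations $\dot{\mathbb{A}}_2$ and $\dot{\mathbb{A}}_m$ for $m\geq 3$ are read off from the third and fourth lines of \eqref{coad-mp}: the $\ad^{\ast}_{\mathbf X}{\mathbf A}$ block appears as the intrinsic $\mathfrak{n}^\ast$ part, the term $M\star\sigma$ (respectively $\mathbb{A}_{m-1}\star \sigma$) is the contribution of $\mathfrak{b}^{\ast}_{(\sigma,Y)}(\rho,M)$ from \eqref{b-ex-dual}, and the Lie derivative and divergence terms along $Y$ form the dual left action $(\sigma,Y)\overset{\ast}{\rt}{\mathbf A}$ of \eqref{dualaction-II}.

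The only real care needed is bookkeeping: one must keep the distinguished index $m=2$ separated from the generic $m\geq 3$ case, because the cross action \eqref{dualaction-II} and the transpose \eqref{b-ex-dual} both produce terms that land exclusively in the $\mathfrak{T}^{\ast}_2\mathcal{Q}$ slot, reflecting the fact that $\mathfrak{s}$ is generated in orders $0$ and $1$. I expect this index tracking — rather than any conceptual step — to be the principal obstacle; once the dictionary above is consistently applied, the four displayed lines of \eqref{coad-mp-kinetic-moments} follow by inserting a minus sign in front of each component of $\ad^{\ast}_{((\sigma,Y);{\mathbf X})}((\rho,M);{\mathbf A})$ and invoking linearity of the functional derivative on the direct sum $\mathfrak{s}^\ast\oplus \mathfrak{n}^\ast$.
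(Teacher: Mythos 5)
Your proposal is correct and follows the same route as the paper: the paper's proof likewise specializes the matched Lie--Poisson equations \eqref{LPEgh} with $\mu=(\rho,M)$, $\nu={\mathbf A}$, and the functional derivatives placed in $\G{s}$ and $\G{n}$, then invokes Proposition \ref{coaddecompAexp} to unpack the coadjoint action term by term. Your additional remarks on the $m=2$ versus $m\geq 3$ bookkeeping and on which terms come from \eqref{dualaction-I}, \eqref{dualaction-II}, \eqref{a-ex-dual}, and \eqref{b-ex-dual} are accurate and consistent with the signs in \eqref{coad}.
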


\begin{proof}
In view of \eqref{LPEgh} we have
\begin{align*}
&\left(\frac{d \rho}{dt},\frac{d M}{dt}\right)=-\ad^{\ast}_{\left(\frac{\partial \C{H}}{\partial \rho},\frac{\partial \C{H}}{\partial M}\right)}(\rho, M)
+
(\rho, M) \overset{\ast }{\lt}\frac{\partial \C{H}}{\partial \mathbf A}
+ 
\mathfrak{a}_{\frac{\partial \C{H}}{\partial \mathbf A}}^{\ast}{\mathbf A},
\\
&\frac{d \mathbf A}{dt} =-\ad^{\ast}_{\frac{\partial \C{H}}{\partial \mathbf A}}{\mathbf A} -
\left(\frac{\partial \C{H}}{\partial \rho},\frac{\partial \C{H}}{\partial M}\right) \overset{\ast }{\rt}{\mathbf A}
- 
\mathfrak{b}
_{\left(\frac{\partial \C{H}}{\partial \rho},\frac{\partial \C{H}}{\partial M}\right)}^{\ast}(\rho,M).
\end{align*}
The claim then follows from Proposition \ref{coaddecompAexp}.
\end{proof}

\subsubsection*{Lie-Poisson dynamics of the isentropic compressible fluids}

Restricting the Lie-Poisson dynamics of the kinetic moments onto $\G{s}^\ast \subseteq \G{T}^\ast\C{Q}$ we arrive at the dynamics of a compressible isentropic fluid flow. Indeed, setting ${\rm\bf A} = 0$ in \eqref{coad-mp-kinetic-moments} we obtain 
\begin{equation} \label{momEuler}
\begin{split}
\dot{\rho}&=-\C{L}_{\frac{\delta \C{H}}{\delta M}}\rho-\rho {\rm div}(\frac{\delta \C{H}}{\delta M})
\\
\dot{M}&=-\rho d\big(\frac{\delta \C{H}}{\delta \rho}\big)-\C{L}_{\frac{\delta \C{H}}{\delta M}} M-{\rm div}(\frac{\delta \C{H}}{\delta M})M.
\end{split}
\end{equation}
Next, choosing the Hamiltonian functional  as
\[
\C{H}\left( \rho ,M\right) =\frac{1}{2}\int_{\mathcal{Q}}\frac{M^{2}%
}{\rho }d_3q+\int_{\mathcal{Q}}\rho w ( \rho ) d_3q,
\]
which is the total energy of the continuum consisting of a kinetic term and
a potential term with internal energy $w=w\left( \rho \right) $, and setting $\sigma:={\delta \C{H}}/{\delta \rho}$ with $Y:={\delta \C{H}}/{\delta M}$, we compute  the relations 
\begin{equation}
Y^{i}\rho =\delta ^{ij}M_{j}, \qquad \sigma =-\frac{M^{2}}{\rho ^{2}}%
+r\left( \rho \right),   \label{coordtrans}
\end{equation}
where $r\left( \rho \right) =\rho w^{\prime }+w$ is the enthalpy function. Let us note also that the former is the velocity-momentum identification, whereas the second is the Bernoulli's theorem for isentropic fluid flows. So by employing these basics relations into the intermediate system we arrive at the the Euler equations
in standard formulation, that is \eqref{Euler-st}. Finally, the substitution of \eqref{coordtrans} into  \eqref{momEuler} result with the compressible fluid equation
\begin{equation} \label{Euler-st}
\frac{\partial Y}{\partial t}+\left( Y\cdot \nabla \right) Y=\frac{1}{\rho }%
\nabla p, \qquad \dot{\rho}+{\rm div} (\rho Y)=0
\end{equation}
in standard formulation. 

It worths also to note that as a result of the semi-direct sum Lie algebra structure on $\G{s} = \C{F}(\C{Q})\rtimes \G{X}(\C{Q})$, the dynamics (that is, the Lie-Poisson equations \eqref{momEuler}) may be decomposed further within the present double cross sum framework. Indeed, 
\[
\C{F}(\C{Q})\rtimes \G{X}(\C{Q}) \cong \C{F}(\C{Q})\bowtie \G{X}(\C{Q})
\]
so that the right action 
\[
\lt:\G{X}(\C{Q}) \ot \C{F}(\C{Q}) \to \G{X}(\C{Q})
\]
is trivial.

\subsubsection*{Lie-Poisson dynamics of the kinetic moments of order $\geq 2$}

On the other extreme, the restriction of the dynamics of the kinetic moments onto $\G{n}^\ast$ may be thought of the dynamics of the moments of order $\geq 2$. This time, setting $(\rho,M) = (0,0)$ in \eqref{coad-mp-kinetic-moments} we arrive at the Lie-Poisson equations
\begin{equation} \label{subdyn2}
\dot{\mathbb{A}}_m =  -
\sum_{k\geq 2} \,
{\rm L}_{\frac{\partial \C{H}}{\partial \mathbb{A}_k}} \mathbb{A}_{m+k-1} - {\rm div}\frac{\partial \C{H}}{\partial \mathbb{A}_k}  \lrcorner  \mathbb{A}_{k+m-1}, \quad m\geq 2
\end{equation}  
of the kinetic moments of order $\geq 2$, generated by a Hamiltonian $\C{H} = \C{H}({\rm \bf A})$.

Let us note that further decomposition $\G{n} = \G{n}_0 \oplus \G{n}_1$ of $\G{n}$ into two subspaces 
\[
\G{n}_0 := \bigoplus_{k=2}^r\, \G{T}_k\C{Q}, \qquad \G{n}_1 := \bigoplus_{k\geq r+1}\, \G{T}_k\C{Q}
\] 
cannot be studied in the realm of the double cross sum Lie algebras, as $\G{n}_0 \subseteq \G{n}$ is no longer a Lie subalgebra (but a mere subspace), while $\G{n}_1\subseteq \G{n}$ is.

\section{Lie-Poisson dynamics of the Vlasov plasma}\label{sect-mp-Vla}

The present section is about the matched pair analysis of the Lie-Poisson realization of the Vlasov plasma. We shall begin with the Borel's theorem. As a result, we shall be able to pass from the Lie algebra of functions to its graded subalgebra of non-flat functions, which is isomorphic with the space of formal power series in momentum variables. We shall, this way, be able to take the computational advantage of the graded structure. Next, we consider the dual space of non-flat functions, and identify it with the space of functions with finite moments. The matched pair decomposition of the Vlasov plasma dynamics, then, is discussed in Subsection \ref{subsect-mp-Vla}.

\subsection{The Lie algebra of functions}\label{subsect-formal-power-series-Lie-alg}~

We shall now consider the (Poisson, in the terminology of \cite{Fuks-book}) algebra $\C{F}(T^\ast\C{Q})$ of smooth functions on the cotangent bundle $T^\ast\C{Q}$. Adopting a local coordinate system $(q^\ell,p_\ell)$ on $\C{Q}$, the space $\C{F}(T^\ast\C{Q})$ may be endowed with the structure of a Lie algebra via the opposite of the Poisson bracket
\begin{equation}\label{poisson-on-func}
\{h,g\}:= \frac{\p h}{\p q^\ell}\frac {\p g}{\p p_\ell}-\frac{\p g}{\p q^\ell}\frac{\p h}{\p p_\ell}.
\end{equation}
As is remarked in \cite[Sect. 1]{BlAs79}, only the Lie subalgebra $\C{F}_0(T^*\C{Q}):=\C{F}(\C{Q})[p] \subseteq \C{F}(T^\ast\C{Q})$, of functions which are polynomial in momentum variables, contains the quantum mechanical observables. We shall, on the other hand, consider the space $\C{F}_0(T^*\C{Q})$ of functions which are formal power series in momentum variables. It contains $\C{F}_0(T^*\C{Q})$ as a dense subalgebra, and by a slight abuse of notation, we shall henceforth  call it $\C{F}_0(T^*\C{Q})$.

Now, it follows from the Borel's theorem, see for instance \cite[Thm. I.1.3]{MoerdijkReyes-book}, that the Taylor series expansion 
\begin{equation} \label{T0}
T_0:\C{F}(T^\ast\C{Q}) \to \C{F}_0(T^*\C{Q})
\end{equation}
at zero, by partial derivatives with respect to the momentum variables, induces an isomorphism
\[
\C{F}(T^\ast\C{Q}) / m^\infty_{\C{Q}\times \{0\}} \cong \C{F}_0(T^*\C{Q}),
\]
where, $m^\infty_{\C{Q}\times \{0\}}$ is the space of ``flat functions'' consisting of smooth functions whose  partial derivatives of all orders (including the zeroth order) with respect to the momentum variables vanish on $Q\times \{0\}$. As such, we have a short exact sequence
\[
\xymatrix{
0 \ar[r] & m^\infty_{\C{Q}\times \{0\}} \ar[r] & \C{F}(T^\ast\C{Q}) \ar[r]^{T_0} & \C{F}_0(T^*\C{Q}) \ar[r] & 0,
}
\]
where the second map is the inclusion map whereas $T_0$ is the one in \eqref{T0}. Since short exact sequences of projective modules split, and all vector spaces (over fields) are projective, we have 
\begin{equation} \label{Cinf-decomp}
\C{F}(T^\ast\C{Q}) \cong m^\infty_{\C{Q}\times \{0\}} \oplus \C{F}_0(T^*\C{Q}) 
\end{equation}
as vector spaces. Furthermore, $m^\infty_{Q\times \{0\}}$ is a Lie subalgebra of $\C{F}(T^\ast\C{Q})$ via the (Poisson) bracket operation \eqref{poisson-on-func}. More precisely, we have the following.

\begin{proposition} \label{m-ideal}
The space $m^\infty_{Q\times \{0\}} \subseteq \C{F}(T^\ast\C{Q})$ forms an ideal with respect to the Poisson bracket.
\end{proposition}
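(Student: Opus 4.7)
The key observation is that a smooth function $f \in \mathcal{F}(T^*\mathcal{Q})$ lies in $m^\infty_{\mathcal{Q}\times\{0\}}$ if and only if $\partial_p^\alpha f\big|_{p=0} \equiv 0$ on $\mathcal{Q}$ for every multi-index $\alpha$ in the momentum variables (including $\alpha = 0$); equivalently, $T_0(f) = 0$, where $T_0$ is the Taylor-expansion map \eqref{T0}. Since $m^\infty_{\mathcal{Q}\times\{0\}}$ is already known to be a linear subspace, we only need to verify the ideal property: for any $f \in m^\infty_{\mathcal{Q}\times\{0\}}$ and any $g \in \mathcal{F}(T^*\mathcal{Q})$, the bracket $\{f,g\}$ lies in $m^\infty_{\mathcal{Q}\times\{0\}}$. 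In other words, $\partial_p^\alpha \{f,g\}\big|_{p=0} \equiv 0$ for every multi-index $\alpha$.

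The plan is to take the explicit coordinate expression \eqref{poisson-on-func} for the bracket and apply the Leibniz rule to each term. For the first summand we get
\[
\partial_p^\alpha\!\left(\frac{\partial f}{\partial q^\ell}\frac{\partial g}{\partial p_\ell}\right)\bigg|_{p=0}
= \sum_{\beta \leq \alpha}\binom{\alpha}{\beta}\,
\partial_p^\beta \partial_{q^\ell} f\cdot \partial_p^{\alpha-\beta}\partial_{p_\ell} g\,\Big|_{p=0}.
\]
In each summand, the factor involving $f$ is $\partial_{q^\ell}(\partial_p^\beta f)\big|_{p=0}$, and since $\partial_p^\beta f$ vanishes identically on $\mathcal{Q}\times\{0\}$ (so does every $q$-derivative thereof), the whole summand is zero. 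For the second summand in the Poisson bracket, Leibniz again produces terms in which $f$ appears only through $\partial_p^{\gamma + e_\ell} f\big|_{p=0} = 0$, so once more every term vanishes. Hence $\partial_p^\alpha \{f,g\}\big|_{p=0} \equiv 0$ for every $\alpha$, as required.

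A conceptual reformulation worth mentioning: the computation above amounts to checking that $T_0\bigl(\{f,g\}\bigr) = \bigl\{T_0(f),T_0(g)\bigr\}$, since partial differentiation in $p$ commutes with the bracket \eqref{poisson-on-func} via Leibniz. Thus $T_0$ is a Lie algebra homomorphism from $\mathcal{F}(T^*\mathcal{Q})$ to $\mathcal{F}_0(T^*\mathcal{Q})$ (with its formal Poisson bracket), and $m^\infty_{\mathcal{Q}\times\{0\}} = \ker T_0$ is automatically a Lie ideal. This viewpoint also makes transparent the induced Lie algebra structure on $\mathcal{F}_0(T^*\mathcal{Q})$ used in \eqref{Cinf-decomp}.

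There is no serious obstacle: the argument is essentially a bookkeeping exercise with the Leibniz rule, the only subtlety being to track that ``flatness'' is preserved under both $q$-differentiation (trivially, since $\partial_p^\beta f \equiv 0$ on $\mathcal{Q}\times\{0\}$ implies the same for its $q$-derivatives) and under further $p$-differentiation (which only raises the order of the multi-index $\beta$). Once these two stability properties are in hand, every term produced by Leibniz manifestly vanishes on $\mathcal{Q}\times\{0\}$.
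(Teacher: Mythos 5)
Your proof is correct, but it takes a genuinely different route from the paper's. The paper first reduces the claim to a single stability statement -- that $\partial h/\partial q^i \in m^\infty_{\C{Q}\times\{0\}}$ whenever $h \in m^\infty_{\C{Q}\times\{0\}}$ (implicitly also using that flatness is preserved under $p$-differentiation and under multiplication by arbitrary smooth functions) -- and then establishes that statement by contradiction: it assumes some mixed derivative $\partial^{s+1}h/\partial q^i\partial p_{j_1}\cdots\partial p_{j_s}$ is nonzero on $\C{Q}\times\{0\}$ and integrates back in the momentum variables to contradict the flatness of $h$. You instead give a direct, quantifier-complete computation: apply $\partial_p^\alpha$ to the coordinate expression \eqref{poisson-on-func}, expand by the Leibniz rule, and observe that every resulting term carries a factor of the form $\partial_{q^\ell}\bigl(\partial_p^\beta f\bigr)\big|_{p=0}$ or $\partial_p^{\gamma+e_\ell}f\big|_{p=0}$, both of which vanish since $\partial_p^\beta f$ restricts to the zero function on the zero section. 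Your approach is more elementary and avoids the delicate integration step of the paper (which requires care about constants of integration), and your closing observation that the computation exhibits $T_0$ as a Lie algebra homomorphism with $m^\infty_{\C{Q}\times\{0\}} = \ker T_0$ gives the ideal property for free and ties in cleanly with the splitting \eqref{Cinf-decomp} and the semi-direct sum \eqref{FTQ-isom-FQp}; the paper's formulation, by contrast, isolates the one nontrivial stability property and makes it the visible content of the proof. Both arguments are sound; yours is the tidier of the two.
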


\begin{proof}
In view of the Poisson bracket formula \eqref{poisson-on-func}, it suffices to show that $\p h / \p q^i \in m^\infty_{Q\times \{0\}}$, $1 \leq i \leq n$, given $h \in m^\infty_{Q\times \{0\}}$. Accordingly, on the contrary, let
\begin{equation}\label{der-f}
\left.\frac{\p^{s +1} h}{\p q^i\p p_{j_1} \ldots \p p_{j_s}}\right|_{(q,0)} \neq 0,
\end{equation}
for some $s\geq 0$, or more precisely let
\begin{equation}\label{del-f-del-qp}
\frac{\p^{m_1 + \ldots + m_k +1} h}{\p q^i\p p^{m_1}_{j_1} \ldots \p p^{m_s}_{j_s}} = g(q) + r(q,p),
\end{equation}
where $r(q,0) = 0$, $m_1+m_2+...+m_s=s$, and $g(q)\neq 0$. Then, integrating \eqref{del-f-del-qp} with respect to the momentum variables $k$-many times, we arrive at
\[
\frac{\p h}{\p q^i} = \frac{g(q)p_{j_1}\dots p_{j_s}}{m_1! \dots m_s !} + \int\,r(q,p)(dp_{j_1})^{m_1} \dots (dp_{j_s})^{m_s},
\]
and hence
\[
h(q,p) = \frac{p_{j_1}\ldots p_{j_s}}{m_1! \dots m_s !}\,\int\,g(q)dq^i + \int\,r(q,p)(dp_{j_1})^{m_1} \dots (dp_{j_s})^{m_s}dq^i,
\]
which contradicts with $h \in m^\infty_{Q\times \{0\}}$.
\end{proof}

As a result, in view of Proposition \ref{universal-prop}, we have $\C{F}(T^\ast\C{Q}) \cong m^\infty_{\C{Q}\times \{0\}} \bowtie \C{F}_0(T^*\C{Q})$. Moreover, since $m^\infty_{\C{Q}\times \{0\}} \subseteq \C{F}(T^\ast\C{Q})$ is an ideal, the action of $m^\infty_{\C{Q}\times \{0\}}$ on $\C{F}_0(T^*\C{Q})$ is trivial. That is,
\begin{equation}\label{FTQ-isom-FQp}
\C{F}(T^\ast\C{Q}) \cong m^\infty_{\C{Q}\times \{0\}} \rtimes \C{F}_0(T^*\C{Q}).
\end{equation}

We shall conclude the present paragraph with the double cross sum decomposition of the Poisson algebra $\C{F}(T^\ast\C{Q})$ through those of $\C{F}_0(T^*\C{Q})$ and $\G{T}\C{Q}$.

\begin{proposition}\label{prop-SC-Poisson-identify}
The map
\begin{equation}\label{TQ-to-F}
\kappa:{\G{T}\C{Q}}\to \C{F}_0(T^*\C{Q}), \qquad   \mathbb{X}^{i_{1}i_{2}...i_{k}}(q)
\partial {q^{i_{1}}}\otimes ...\otimes \partial {q^{i_{k}}}\mapsto  \B{X}^{i_1\ldots i_k}(q)\,p_{i_1}\ldots p_{i_k}
\end{equation}
is a Lie algebra isomorphism.
\end{proposition}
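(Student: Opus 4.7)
The plan is to verify, in the two standard steps, that $\kappa$ is a linear bijection and that it intertwines the brackets. Both the domain $\G{T}\C{Q}=\prod_{k\geq 0}\G{T}^k\C{Q}$ and the codomain $\C{F}_0(T^*\C{Q})$ are naturally graded, by tensor degree and by momentum degree respectively, and by inspection $\kappa$ sends $\G{T}^k\C{Q}$ into the subspace of $\C{F}_0(T^*\C{Q})$ consisting of degree-$k$ homogeneous polynomials in the momenta with coefficients in $\C{F}(\C{Q})$. Within each graded piece, the symmetry of $\mathbb{X}^{i_1\ldots i_k}$ in its indices matches precisely the commutativity of the product $p_{i_1}\cdots p_{i_k}$, so that the restriction of $\kappa$ is a bijection onto the degree-$k$ homogeneous component; assembling over $k$ then yields the bijection on the level of the whole formal product.

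The substantive content is the bracket relation. Since both the Schouten concomitant \eqref{SC-def} and the Poisson bracket \eqref{poisson-on-func} are bilinear and respect the grading (sending a pair of degrees $(k,m)$ to degree $k+m-1$), it suffices to verify
\[
\kappa([\mathbb{X}^k,\mathbb{Y}^m]) = \{\kappa(\mathbb{X}^k),\kappa(\mathbb{Y}^m)\}
\]
on pure graded elements. Setting $h=\mathbb{X}^{i_1\ldots i_k}(q)\,p_{i_1}\cdots p_{i_k}$ and $g=\mathbb{Y}^{j_1\ldots j_m}(q)\,p_{j_1}\cdots p_{j_m}$, the $q$-derivatives simply differentiate the coefficient functions, while the $p$-derivative of $g$ collapses the monomial through
\[
\frac{\partial g}{\partial p_\ell}=m\,\mathbb{Y}^{\ell j_2\ldots j_m}(q)\,p_{j_2}\cdots p_{j_m},
\]
where the factor $m$ emerges from the sum over which position of $\mathbb{Y}$'s index-block equals $\ell$, absorbed by the full symmetry of $\mathbb{Y}$; the analogous formula holds for $h$ with factor $k$. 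Plugging these four pieces into \eqref{poisson-on-func} and relabelling summation indices, the contracted dummy $\ell$ can be moved into the last slot of the corresponding coefficient using its symmetry, and the two resulting terms align slot-by-slot with the two terms of \eqref{SC-def}.

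The main obstacle is purely the bookkeeping: one has to keep careful track of which combinatorial factor ($k$ or $m$) arises from which momentum differentiation, of how the $m+k-1$ free indices are re-distributed between $\mathbb{X}$ and $\mathbb{Y}$ in each term, and of the sign convention encoded in taking the \emph{opposite} of the canonical Poisson bracket in \eqref{poisson-on-func}. No analytic subtlety arises, since the grading cleanly reduces the verification, at each bidegree, to an identity between finite polynomial expressions, and the formal-series structure on $\C{F}_0(T^*\C{Q})$ is matched term-by-term with the direct-product structure on $\G{T}\C{Q}$.
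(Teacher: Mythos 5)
Your argument is correct and follows essentially the same route as the paper: reduce by the grading to homogeneous elements, verify the bracket identity by the direct polynomial computation with the combinatorial factors $k$ and $m$ coming from the momentum differentiations, and note that bijectivity is immediate from the degree-by-degree correspondence between symmetric tensors and momentum monomials. One caveat: with $\{\cdot,\cdot\}$ as literally defined in \eqref{poisson-on-func}, the identity to verify is $\kappa([\B{X}^k,\B{Y}^m])=-\{\kappa(\B{X}^k),\kappa(\B{Y}^m)\}$, since the Lie bracket on $\C{F}_0(T^*\C{Q})$ is the \emph{opposite} of \eqref{poisson-on-func} --- this is exactly the sign convention you flag at the end, and the paper's computation carries the minus sign explicitly throughout.
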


\begin{proof}
It suffices to see that
\begin{align*}
 & -\{\kappa(\B{X}^k),\, \kappa(\B{Y}^m) \} 
 = -\left\{\B{X}^{i_1\ldots i_k}(q)\,p_{i_1}\ldots p_{i_k},\, \B{Y}^{j_1\ldots j_m}(q)\,p_{j_1}\ldots p_{j_m}\right\} = \\
& -\frac{\p \B{X}^{i_1\ldots i_k}(q)\,p_{i_1}\ldots p_{i_k}}{\p q^\ell} \frac{\p\B{Y}^{j_1\ldots j_m}(q)\,p_{j_1}\ldots p_{j_m}}{\p p_\ell} + \frac{\p\B{Y}^{j_1\ldots j_m}(q)\,p_{j_1}\ldots p_{j_m}}{\p q^\ell}\frac{\p\B{X}^{i_1\ldots i_k}(q)\,p_{i_1}\ldots p_{i_k}}{\p p_\ell} =  \\
& -m {\B{X}^{i_1\ldots i_k}_{,\ell}}(q)\, p_{i_1}\ldots p_{i_k}\,\B{Y}^{j_1\ldots j_{m-1}\ell}(q)\,p_{j_1}\ldots {\hat{p}_\ell}\ldots p_{j_m}\\ 
& \hspace{4cm} +k{\B{Y}^{j_1\ldots j_m}_{,\ell}}(q)\, p_{j_1}\ldots p_{j_m}\,\B{X}^{i_1\ldots i_{k-1}\ell}(q)\,p_{i_1}\ldots {\hat{p}_\ell}\ldots p_{i_k} =\\
& \Big(-m {\B{X}^{i_1\ldots i_k}_{,\ell}}(q)\B{Y}^{i_{k+1}\ldots i_{k+m-1}\ell}(q) + k{\B{Y}^{i_1\ldots i_m}_{,\ell}}(q)\B{X}^{i_{m+1}\ldots i_{m+k-1}\ell}(q)\Big) \,p_{i_1}\ldots  p_{i_{m+k-1}} = \\
& \kappa([\mathbb{X}^k,\B{Y}^m]),
\end{align*}
from which the claim follows.
\end{proof}

Accordingly, we have 
\begin{equation} \label{C-decomp-1-}
\C{F}_0(T^*\C{Q}) \cong \hat{\G{s}} \bowtie \hat{\G{n}},
\end{equation} 
where
\begin{align}
&\hat{\G{s}}=\left\{\hat{\s}=\s+Y\mid\hat{\s}(q,p):=\s(q)+Y^\ell(q)p_\ell\quad\s,Y^\ell \in \C{F}(\C{Q})\right\},\label{hams}
\\ 
&\hat{\G{n}} =\Big\{ \hat{\B{X}}=\sum_{k\geq 2}\hat{\B{X}}^k\mid \hat{\B{X}}(q,p):=\sum_{k\geq2} \B{X}^{i_1\dots i_k}(q)\,p_{i_1}\dots p_{i_k} \quad \B{X}^{i_1\dots i_k} \in \C{F}(\C{Q})  \Big\}. \label{mams}
\end{align} 
More precisely, $\C{F}_0(T^*\C{Q})\ni h \mapsto h^{\hat{\G{s}}} + h^{\hat{\G{n}}} \in \hat{\G{s}}\oplus \hat{\G{n}}$, where 
\[
h^{\hat{\G{s}}}(q,p) + h^{\hat{\G{n}}}(q,p) := \Big(h(q,p) + \frac{\p h(q,0)}{\p p_\ell}p_\ell\Big) + \sum_{k\geq 2}\frac{1}{k!}\frac{\p^k h(q,0)}{\p p_{i_1}\ldots \p p_{i_k}} p_{i_1}\ldots p_{i_k}.
\]
Furthermore,
\begin{equation}\label{power-series-in-p}
\C{F}(T^\ast\C{Q}) \cong m^\infty_{\C{Q}\times \{0\}} \rtimes (\hat{\G{s}} \bowtie \hat{\G{n}}).
\end{equation}
Let us finally record the mutual actions of the matched pair $(\hat{\G{s}} , \hat{\G{n}})$. In view of
\[
[\B{X}^{i_1\dots i_k}(q)\,p_{i_1}\dots p_{i_k} , \s(q)] = k \s_{,\ell}(q)\B{X}^{i_1\dots i_{k-1}\ell}(q)\,p_{i_1}\dots p_{i_{k-1}}
\]
and
\[
[\B{X}^{i_1\dots i_k}(q)\,p_{i_1}\dots p_{i_k} , Y^j(q)p_j] = \Big(-\B{X}^{i_1\dots i_k}_{,\ell}(q)Y^\ell(q) + k\B{X}^{i_1\dots i_{k-1}\ell}(q)Y^{i_k}_{,\ell}(q)\Big)\,p_{i_1}\dots p_{i_k},
\]
for any $\hat{\s}\in \hat{\G{s}}$, and any $\hat{\B{X}}\in \hat{\G{n}}$, we have
\begin{equation}\label{left-action-FQpp}
\B{X} \rt \hat{\s}  = 
2\s_{,\ell}(q)\B{X}^{i\ell}(q)\,p_{i} 
\end{equation}
and
\begin{equation}\label{right-action-FQpp}
\B{X} \lt \hat{\s}  = \sum_{k\geq 2}\Big(-\B{X}^{i_1\dots i_k}_{,\ell}(q)Y^\ell(q) + k\B{X}^{i_1\dots i_{k-1}\ell}(q)Y^{i_k}_{,\ell}(q) + (k+1) \s_{,\ell}(q)\B{X}^{i_1\dots i_k\ell}(q)\Big)\,p_{i_1}\dots p_{i_k}.
\end{equation}

\subsection{The space of functions with finite moments}\label{subsect-finite-moments}~

Identifying the dual space $\C{F}^\ast(T^\ast\C{Q}) = Den(T^\ast\C{Q})$ with $\C{F}(T^\ast\C{Q})$ itself, in accordance with \eqref{pairing} we may identify $\G{T}^\ast\C{Q}$ by 
\[
\C{F}^\ast_0(T^*\C{Q}) :=\{f\in \C{F}(T^\ast\C{Q})\mid \int_{T^\ast\C{Q}}\,p_{i_1}\ldots p_{i_k}f(q,p)dp = 0, \quad \forall\,k\geq n = n(f)\}.
\]
More precisely, \eqref{TQ-to-F} transposes into
\[
\kappa^\ast:\C{F}^\ast_0(T^*\C{Q}) \to \G{T}^\ast\C{Q}, \qquad f\mapsto \sum_{k\geq 0}\left(\int_{T^\ast\C{Q}}\,p_{i_1}\ldots p_{i_k}f(q,p)dp\right)dq^{i_1}\odots dq^{i_k}.
\]
As a result, the decomposition \eqref{decomp-covariant-dual} can readily be put onto $\C{F}^\ast_0(T^*\C{Q})$ as
\begin{equation}\label{F_0-dual-decomp}
\C{F}^\ast_0(T^*\C{Q}) \cong \hat{\G{s}}^\ast\oplus \hat{\G{n}}^\ast,
\end{equation}
where 
\begin{align*}
& \hat{\G{s}}^\ast := \{h\in \C{F}_c(T^\ast\C{Q})\mid \int_{T^\ast\C{Q}}\,h(q,p)p^k\,dp = 0, \quad \forall\,k\geq 2\}, \\ 
&\hat{\G{n}}^\ast := \{h\in \C{F}_c(T^\ast\C{Q})\mid \int_{T^\ast\C{Q}}\,h(q,p)dp = \int_{T^\ast\C{Q}}\, h(q,p)p\,dp =0\}.
\end{align*}

More precisely, given any $f\in \C{F}^\ast_0(T^*\C{Q})$, with say 
\[
\int_{T^\ast\C{Q}}\,f(q,p)p^k\,dp = 0, \qquad k\geq m+1,
\]
setting
\[
\widetilde{f}:= -\frac{\p(p_\ell f)}{\p p_\ell},
\]
$f_0:=f$, and for $k\geq 0$, $f_{k+1}:= \widetilde{f_k} - kf_k$, we have
\[
f_{(m)} := \frac{1}{m!}f_m, \qquad \int_{T^\ast\C{Q}}\,f_{(m)}(q,p)p^k\,dp = \begin{cases}
\displaystyle \int_{T^\ast\C{Q}}\,f(q,p)p^m\,dp & \text{if  } k = m, \\
0 & \text{if  } k\neq m.
\end{cases}
\]
As such, $f - f_{(m)} \in \C{F}^\ast_0(T^*\C{Q}) $ has at most $(m-1)$th moment\footnote[1]{To be more precise, \[
\int_{T^\ast\C{Q}}\,(f(q,p)-f_{(m)}(q,p))p^{m-1}\,dp = (m-1)!\,\int_{T^\ast\C{Q}}\,f(q,p)p^{m-1}\,dp. 
\]}, and we may similarly construct $f_{(m-1)} \in \C{F}^\ast_0(T^*\C{Q})$ so that
\[
\int_{T^\ast\C{Q}}\,f_{(m-1)}(q,p)p^k\,dp = \begin{cases}
\displaystyle\int_{T^\ast\C{Q}}\,f(q,p)p^{m-1}\,dp & \text{if  } k = m-1, \\
0 & \text{if  } k\neq m-1.
\end{cases}
\]
Accordingly, we have the moment decomposition
\begin{equation}\label{h-decomp}
\C{F}^\ast_0(T^*\C{Q})\ni f \mapsto f_\G{s} + f_\G{n}:= (f_{(0)} + f_{(1)}) + \sum_{k=2}^m\, f_{(k)} \in \hat{\G{s}}^\ast \oplus \hat{\G{n}}^\ast.
\end{equation}

We shall conclude the present paragraph with the coadjoint action of $\C{F}_0(T^*\C{Q})$ on $\C{F}^\ast_0(T^*\C{Q})$.

\begin{proposition}\label{prop-coad-II}
Given any $h\in\C{F}_0(T^*\C{Q})$ and $f\in \C{F}^\ast_0(T^*\C{Q})$, the coadjoint action of $\C{F}_0(T^*\C{Q})$ on $\C{F}^\ast_0(T^*\C{Q})$ is given by
\begin{equation}\label{Coad-VF-}
\ad^\ast_hf = \{f,h\}.
\end{equation}
\end{proposition}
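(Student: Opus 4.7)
The strategy is to unwind the defining duality $\langle \ad^\ast_h f, g\rangle = -\langle f, [h,g]\rangle$ of the coadjoint action (following the sign convention of \eqref{coad}), and then rewrite the resulting integral expression via integration by parts so that $h$ moves out of the bracket and ends up ``acting'' on $f$.

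Concretely, I would first recall that the Lie bracket on $\C{F}_0(T^\ast\C{Q})\subseteq \C{F}(T^\ast\C{Q})$ is the \emph{opposite} of the canonical Poisson bracket \eqref{poisson-on-func}, so that $\ad_h g = -\{h,g\}$. Combined with the $L_2$-pairing used throughout the paper to identify $\C{F}^\ast_0(T^\ast\C{Q})$ with a subspace of $\C{F}(T^\ast\C{Q})$, one immediately gets
\[
\langle \ad^\ast_h f, g\rangle \;=\; -\langle f, [h,g]\rangle \;=\; \int_{T^\ast\C{Q}} f\,\{h,g\}\, dq\,dp.
\]

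The second step is the key algebraic identity: for smooth functions, the canonical Poisson bracket is ``cyclic'' under integration, i.e.\ $\int f\{h,g\}\,dq\,dp = \int \{f,h\}\,g\,dq\,dp$. I would verify this by splitting $\{h,g\} = \partial_{q^\ell}h\,\partial_{p_\ell}g - \partial_{p_\ell}h\,\partial_{q^\ell}g$, integrating the first term by parts in $p_\ell$ and the second in $q^\ell$; the two ``second derivative'' pieces $f\,\partial_{p_\ell}\partial_{q^\ell}h$ cancel by equality of mixed partials, and what survives is exactly $-(\partial_{p_\ell}f\,\partial_{q^\ell}h - \partial_{q^\ell}f\,\partial_{p_\ell}h)\,g = \{f,h\}\,g$. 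Combining this with the previous step gives $\langle \ad^\ast_h f, g\rangle = \langle \{f,h\}, g\rangle$ for every $g\in \C{F}_0(T^\ast\C{Q})$, and non-degeneracy of the pairing yields the claim.

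The main obstacle is not algebraic but analytic: both integrations by parts produce boundary contributions that must vanish. In the $q$-direction this is immediate because $\C{Q}$ is closed (no boundary). In the $p$-direction one needs $f\,\partial_{q^\ell}h\,g$ and $f\,\partial_{p_\ell}h\,g$ to fall off fast enough at infinity, even though $h,g$ are only formal power series in $p$. This is exactly what the definition of $\C{F}^\ast_0(T^\ast\C{Q})$ via finite moments is designed to guarantee: each $f\in \C{F}^\ast_0(T^\ast\C{Q})$ pairs with only finitely many powers of $p$, so only finitely many terms of $h$ and $g$ contribute and the polynomial-times-$f$ boundary terms can be arranged to vanish. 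I would therefore state the computation at the level of the truncations of $h$ and $g$ (through a chosen order compatible with the moments of $f$), carry out the integration by parts there, and then pass the resulting identity back to $\C{F}_0(T^\ast\C{Q})$, noting that the Poisson bracket preserves polynomial order appropriately so the truncation is compatible with $\{f,h\}$.
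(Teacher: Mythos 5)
Your proposal is correct and follows essentially the same route as the paper's own proof: unwind the duality $\langle \ad^\ast_h f, g\rangle = -\langle f,[h,g]\rangle$ with the opposite-Poisson-bracket convention, integrate by parts in $p_\ell$ and $q^\ell$ respectively, and observe that the mixed second-derivative terms cancel, leaving $\langle \{f,h\},g\rangle$. Your additional attention to the vanishing of the boundary terms (via the closedness of $\C{Q}$ and the finite-moment condition defining $\C{F}^\ast_0(T^*\C{Q})$) is a point the paper leaves implicit, but it does not change the argument.
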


\begin{proof}
For any $g \in \C{F}_0(T^*\C{Q})$, we see at once that
\begin{align}\label{coad-F-on-f}
\begin{split}
&\langle \ad^*_h f, g\rangle = \langle f, -\{g,h\}\rangle = \int_{T^\ast\C{Q}}\, f\left(-\frac{\p g}{\p q}\frac{\p h}{\p p} + \frac{\p g}{\p p}\frac{\p h}{\p q}\right)\,dqdp = \\
&   \int_{T^\ast\C{Q}}\, \left(\frac{\p f}{\p q}\frac{\p h}{\p p} + f\frac{\p^2 h}{\p q\p p} \right)g\,dqdp - \int_{T^\ast\C{Q}}\, \left(\frac{\p f}{\p p}\frac{\p h}{\p q} + f\frac{\p^2 h}{\p p\p q}\right)g\,dqdp = \\
& \int_{T^\ast\C{Q}}\, \left(-\frac{\p f}{\p p}\frac{\p h}{\p q} +\frac{\p f}{\p q}\frac{\p h}{\p p}\right)g\,dqdp = \langle \{f,h\},g \rangle.
\end{split}
\end{align}
\end{proof}

In particular, for  
\begin{equation}\label{total-energy}
h(q,p)=\frac{p^2}{2m}+e\phi(q)
\end{equation}
the Lie-Poisson equation 
\[
\frac{d f}{dt}=-\ad^*_{h} f = \{h,f\}
\]
turns out to be the Vlasov equation
\begin{equation}\label{Vlasov-pre}
\frac{d f}{dt}= e \frac{\partial \phi}{\partial q^\ell}   \frac{\partial f}{\partial p_\ell} -\frac{1}{m} \d^{ij}p_i \frac{\partial f}{\partial q^j}.
\end{equation}

\subsection{Vlasov plasma dynamics}\label{subsect-mp-Vla}~

Having equipped with the anti-isomorphism \eqref{TQ-to-F}, we may now pull the Lie-Poisson dynamics of the kinetics moments on the space $\C{F}^\ast_0(T^*\C{Q})$ of functions with finite moments. We shall, however, derive the analogues of Proposition \ref{prop-dualaction}, Proposition \ref{cross-act}, and Proposition \ref{coaddecompAexp}. 

\begin{proposition} \label{prop-dualaction-FQpp}
The left action \eqref{left-action-FQpp} gives rise to the right action
\begin{equation} \label{dualaction-I-FQpp}
\overset{\ast }{\lt}:\hat{\G{s}}^*\otimes \hat{\G{n}}\to \hat{\G{s}}^*, \qquad f_\G{s}\overset{\ast }{\lt}\hat{\B{X}}:=\{\hat{\B{X}},f_\G{s}\}
\end{equation}  
for any $\hat{\B{X}}= \sum_{k\geq2} \hat{\B{X}}^k:=\sum_{k\geq2} \B{X}^{i_1\ldots i_k}(q)p_{i_1}\ldots p_{i_k}\in \hat{\G{n}}$, and any $f_\G{s} \in \hat{\G{s}}^\ast$. Similarly, for any $\hat{\s} \in \hat{\G{s}}$ given by $\hat{\s}(q,p) :=\s(q) + Y^\ell(q)p_\ell$, and any $f_\G{n} \in \hat{\G{n}}^\ast$, 
the right action \eqref{right-action-FQpp} yields the left action 
\begin{equation} \label{dualaction-II-FQpp}
\overset{\ast }{\rt}:\hat{\G{s}}\otimes \hat{\G{n}}^*\to\hat{\G{n}}^*, \qquad \hat{\s}\overset{\ast }{\rt} f_\G{n} = \{f_\G{n},\hat{\s}\}.
\end{equation}
\end{proposition}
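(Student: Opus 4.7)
The plan is to unfold the definitions of the induced dual actions \eqref{eta-star} and \eqref{xi-star}, exploit that $f_\G{s}\in\hat{\G{s}}^\ast$ annihilates $\hat{\G{n}}$ while $f_\G{n}\in\hat{\G{n}}^\ast$ annihilates $\hat{\G{s}}$ under the $L^2$-pairing, and then reduce the remaining expression to a Poisson bracket via an integration-by-parts identity.

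For the first identity, I would start from the definition of the induced right action, namely $\langle f_\G{s}\overset{\ast}{\lt}\hat{\B{X}}, \hat{\s}\rangle = \langle f_\G{s}, \hat{\B{X}}\rt\hat{\s}\rangle$ for every $\hat{\s}\in\hat{\G{s}}$. The decomposition \eqref{F_0-dual-decomp} is tailored so that $f_\G{s}$ kills every polynomial in $p$ of order $\geq 2$ (because $\int f_\G{s}\,p^k\,dp=0$ for $k\geq 2$). Since $\hat{\B{X}}\rt\hat{\s}\in\hat{\G{s}}$ while $\hat{\B{X}}\lt\hat{\s}\in\hat{\G{n}}$, the $\hat{\G{n}}$-component contributes nothing to the pairing, and I would write
\[
\langle f_\G{s}, \hat{\B{X}}\rt\hat{\s}\rangle = \langle f_\G{s}, [\hat{\B{X}}, \hat{\s}]\rangle = -\langle f_\G{s}, \{\hat{\B{X}}, \hat{\s}\}\rangle,
\]
where the last equality invokes Proposition \ref{prop-SC-Poisson-identify}: the Lie bracket on $\C{F}_0(T^*\C{Q})$ is the \emph{opposite} of the canonical Poisson bracket.

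The next ingredient is the cyclic identity $\int f\{g,h\}\,dqdp = \int h\{f,g\}\,dqdp$, a one-line consequence of two successive integrations by parts on the boundaryless $T^*\C{Q}$, and essentially the computation that underlies the proof of Proposition \ref{prop-coad-II}. Applied with $f=f_\G{s}$, $g=\hat{\B{X}}$, $h=\hat{\s}$, it produces $-\langle f_\G{s}, \{\hat{\B{X}}, \hat{\s}\}\rangle = \langle \{\hat{\B{X}}, f_\G{s}\}, \hat{\s}\rangle$, which is the desired formula for $f_\G{s}\overset{\ast}{\lt}\hat{\B{X}}$. The second identity is verified verbatim with the roles swapped: $f_\G{n}$ annihilates $\hat{\G{s}}$ by \eqref{F_0-dual-decomp}, so $\langle f_\G{n}, \hat{\B{X}}\lt\hat{\s}\rangle = \langle f_\G{n}, [\hat{\B{X}}, \hat{\s}]\rangle$, and the same cyclic identity yields $\langle \{f_\G{n}, \hat{\s}\}, \hat{\B{X}}\rangle$.

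The main obstacle is interpretational rather than computational: the functions $\{\hat{\B{X}}, f_\G{s}\}$ and $\{f_\G{n}, \hat{\s}\}$, taken as raw elements of $\C{F}_0^\ast(T^*\C{Q})$, need not literally lie in $\hat{\G{s}}^\ast$ and $\hat{\G{n}}^\ast$ respectively; the stated equalities must be read as identities in the corresponding summand of the decomposition \eqref{F_0-dual-decomp}, i.e.\ after projection along the pairing with $\hat{\G{s}}$, respectively $\hat{\G{n}}$. Once this convention is fixed, the argument reduces to careful sign-tracking for the opposite Poisson bracket together with the standard cyclic property of the $L^2$-pairing.
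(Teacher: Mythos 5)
Your proof is correct and follows essentially the same route as the paper's: unfold the induced dual pairings, use the vanishing of the order $\geq 2$ moments of $f_\G{s}$ (resp.\ the order $0,1$ moments of $f_\G{n}$) to replace the projected action by the full Lie bracket $-\{\cdot,\cdot\}$, and then integrate by parts to move the bracket onto the density. The only divergence is your closing caveat about reading the formulas "after projection": the paper instead proves, in the remark immediately following the proposition, that $\{\hat{\B{X}},f_\G{s}\}$ and $\{f_\G{n},\hat{\s}\}$ already lie in $\hat{\G{s}}^\ast$ and $\hat{\G{n}}^\ast$ respectively, so no projection is needed and the stated equalities hold on the nose.
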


\begin{proof}
The first claim \eqref{dualaction-I-FQpp} follows directly from 
\[
\langle f_\G{s}\overset{\ast }{\lt}\hat{\B{X}}, \hat{\s} \rangle = \left\langle f_\G{s}, \hat{\B{X}} \rt \hat{\s} \right\rangle  =  \langle f_\G{s}, -\{\hat{\B{X}},\hat{\s}\} \rangle = \langle -\{f_\G{s},\hat{\B{X}}\}, \hat{\s} \rangle,
\] 
where, on the second equality we used the vanishing of the moments of $f_\G{s} \in \hat{\G{s}}^\ast$ of order $\geq 2$, and on the third equality we used the integration by parts; see for instance \eqref{coad-F-on-f}. 

As for \eqref{dualaction-II-FQpp}, we observe that
\[
\langle \hat{\s}\overset{\ast }{\rt}f_\G{n}, \hat{\B{X}} \rangle = \left\langle f_\G{n}, \hat{\B{X}} \lt \hat{\s} \right\rangle  =  \langle f_\G{n},-\{\hat{\B{X}},\hat{\s}\} \rangle = \langle -\{\hat{\s},f_\G{n}\}, \hat{\B{X}} \rangle,
\] 
from which the result follows.
\end{proof}

\begin{remark}
Let us note that $\{\hat{\B{X}}, f_\G{s}\} \in \hat{\G{s}}^\ast$, for any $\hat{\B{X}}\in \hat{\G{n}}$ and any $f_\G{s}\in\hat{\G{s}}^\ast$. Indeed, for any $k\geq 2$
\[
\int_{T^\ast\C{Q}}\,\{f_\G{s},\hat{\B{X}}\}p_{i_1}\ldots p_{i_k}\,dp =\int_{T^\ast\C{Q}}\,\left(\frac{\p f_\G{s}}{\p q^\ell}\frac{\p \hat{\B{X}}}{\p p_\ell} - \frac{\p \hat{\B{X}}}{\p q^\ell}\frac{\p f_\G{s}}{\p p_\ell}\right)p_{i_1}\ldots p_{i_k}\,dp = 0
\]
as
\begin{align*}
&\int_{T^\ast\C{Q}}\,\frac{\p f_\G{s}}{\p q^\ell}\frac{\p \hat{\B{X}}}{\p p_\ell} p_{i_1}\ldots p_{i_k}\,dp = \sum_{m\geq 2} \,\int_{T^\ast\C{Q}}\,m\frac{\p f_\G{s}}{\p q^\ell}\B{X}^{j_1\ldots j_{m-1}\ell}(q) p_{j_1}\ldots p_{j_{m-1}}p_{i_1}\ldots p_{i_k}\,dp = \\
& \sum_{m\geq 2} \,\frac{\p }{\p q^\ell}\,\int_{T^\ast\C{Q}}\,m f_\G{s} \B{X}^{j_1\ldots j_{m-1}\ell}(q) p_{j_1}\ldots p_{j_{m-1}}p_{i_1}\ldots p_{i_k}\,dp - 
\\
& \hspace{4cm}\sum_{m\geq 2} \,\int_{T^\ast\C{Q}}\,m f_\G{s} \frac{\p \B{X}^{j_1\ldots j_{m-1}\ell}(q)}{\p q^\ell} p_{j_1}\ldots p_{j_{m-1}}p_{i_1}\ldots p_{i_k}\,dp = 0,
\end{align*}
and
\[
\int_{T^\ast\C{Q}}\,\frac{\p \hat{\B{X}}}{\p q^\ell}\frac{\p f_\G{s}}{\p p_\ell}p_{i_1}\ldots p_{i_k}\,dp = - \sum_{m\geq 2} \,\int_{T^\ast\C{Q}}\,mf_\G{s}  \B{X}^{j_1\ldots j_{m-1}\ell}_{,\ell}(q) p_{j_1}\ldots p_{j_{m-1}}p_{i_1}\ldots p_{i_{k-1}}\,dp = 0. 
\]
Similarly, $\{f_\G{n},\hat{\s}\} \in \hat{\G{n}}^\ast$, for any $\hat{s}\in \hat{\G{s}}$, and any $f_\G{n}\in \hat{\G{n}}^\ast$.
\end{remark}

In view of Proposition \ref{prop-coad-II} and \eqref{coad}, the transposes of the mappings 
\begin{align} 
&\G{b}_{\hat{\s}} :\hat{\G{n}} \to \hat{\G{s}}, \quad \G{b}_{\hat{\s}}\hat{\B{X}} = \hat{\B{X}} \rt {\hat{\s}}, \qquad\G{a}_{\hat{\B{X}}}:\hat{\G{s}} \to \hat{\G{n}}, \quad \G{a}_{\hat{\B{X}}}{\hat{\s}} = \hat{\B{X}} \lt {\hat{\s}}, \label{a-b-ex-FQpp} \\
& \ad_{\hat{\s}}:\hat{\G{s}}\to \hat{\G{s}}, \quad \ad_{\hat{\s}}\hat{\lambda} = -\{{\hat{\s}},\hat{\lambda}\}, \qquad \ad_{\hat{\B{X}}}:\hat{\G{n}}\to \hat{\G{n}}, \quad \ad_{\hat{\B{X}}}\hat{\B{Y}} = -\{\hat{\B{X}},\hat{\B{Y}}\} \label{ad-FQpp}
\end{align}
may be obtained from
\[
-\{{\hat{\s}},f_{\G{s}}\}=\ad^\ast_{\hat{\s}}f_{\G{s}} + \G{b}^\ast_{\hat{\s}}f_{\G{s}} \in \hat{\G{s}}^\ast \oplus \hat{\G{n}}^\ast, \qquad -\{\hat{\B{X}},f_{\G{n}}\} = -\G{a}^\ast_{\hat{\B{X}}}f_{\G{n}} + \ad^\ast_{\hat{\B{X}}}f_{\G{n}} \in \hat{\G{s}}^\ast \oplus \hat{\G{n}}^\ast
\]
for any ${\hat{\s}}\in \hat{\G{s}}$, any $\hat{\B{X}}\in \hat{\G{n}}$, any $f_{\G{s}}\in \hat{\G{s}}^\ast$, and any $f_{\G{n}}\in \hat{\G{n}}^\ast$.

\begin{proposition} \label{cross-act-FQpp}
Let $(\hat{\s},\hat{\B{X}})\in\hat{\G{s}}\bowtie \hat{\G{n}}$, and let also $(f_\G{s},f_\G{n})\in \hat{\G{s}}^\ast\oplus \hat{\G{n}}^\ast$. Then, the transposes of the linear operators in \eqref{a-b-ex-FQpp} and \eqref{ad-FQpp} may be given by
\begin{align}
 & \ad^\ast_{\hat{\s}} : \hat{\G{s}}^\ast \to \hat{\G{s}}^\ast,\quad \ad^\ast_{\hat{\s}}f_\G{s} = \{f_{(0)},{\hat{\s}}\} + \{ f_{(1)},Y\}, \qquad\G{b}^\ast_{\hat{\s}}:\hat{\G{s}}^\ast \to \hat{\G{n}}^\ast, \quad \G{b}^\ast_{\hat{\s}}f_\G{s} = \{f_{(1)},\s\},
 \label{ad-b-ex-dual-FQpp}
\\
&\ad^\ast_{\hat{\B{X}}} : \hat{\G{n}}^\ast \to \hat{\G{n}}^\ast,\quad \ad^\ast_{\hat{\B{X}}}f_\G{n} = \sum_{k\geq 2}\{f_{(\ell)},\hat{\B{X}}^{\ell+k}\}, \qquad\G{a}^\ast_{\hat{\B{X}}} :\hat{\G{n}}^\ast \to \hat{\G{s}}^\ast, \quad \G{a}^\ast_{\hat{\B{X}}}f_\G{n}= \{\hat{\B{X}}^k,f_{(k)}\} +\{\hat{\B{X}}^{k+1},f_{(k)}\} , \label{ad-a-ex-dual-FQpp}  
\end{align}
respectively.
\end{proposition}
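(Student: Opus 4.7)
The plan is to reduce everything to Proposition \ref{prop-coad-II} together with a single moment--degree tracking observation, and then read off the four claimed formulas by sorting terms by their moment order. First I specialize the universal double cross sum coadjoint formula \eqref{coad}: taking $(\xi,\eta)=(\hat{\s},0)$ with $(\mu,\nu)=(f_\G{s},0)$ gives $\ad^\ast_{(\hat{\s},0)}(f_\G{s},0)=(\ad^\ast_{\hat{\s}}f_\G{s},\G{b}^\ast_{\hat{\s}}f_\G{s})$, and taking $(\xi,\eta)=(0,\hat{\B{X}})$ with $(\mu,\nu)=(0,f_\G{n})$ gives $\ad^\ast_{(0,\hat{\B{X}})}(0,f_\G{n})=(-\G{a}^\ast_{\hat{\B{X}}}f_\G{n},\ad^\ast_{\hat{\B{X}}}f_\G{n})$. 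Since Proposition \ref{prop-coad-II} applied inside the ambient Lie algebra $\C{F}_0(T^\ast\C{Q})$ identifies these left-hand sides with $\{f_\G{s},\hat{\s}\}$ and $\{f_\G{n},\hat{\B{X}}\}$ respectively, the task reduces to splitting each single Poisson bracket along the decomposition $\C{F}^\ast_0(T^\ast\C{Q})=\hat{\G{s}}^\ast\oplus\hat{\G{n}}^\ast$ of \eqref{F_0-dual-decomp}.

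The technical lemma I would establish first is a moment--degree tracking statement: if $g\in\C{F}_0(T^\ast\C{Q})$ is homogeneous of degree $m$ in the momentum variables and $f_{(k)}$ denotes the $k$-th moment component from \eqref{h-decomp}, then $\{f_{(k)},g\}$ has only its $(k-m+1)$-th moment non-zero, and vanishes identically if $k-m+1<0$. This is a direct duality computation: pairing against a homogeneous test function $h^r$ gives $\langle\{f_{(k)},g\},h^r\rangle=-\langle f_{(k)},\{h^r,g\}\rangle$, and Proposition \ref{prop-SC-Poisson-identify} guarantees that $\{h^r,g\}$ is itself homogeneous of degree $r+m-1$, so the pairing with $f_{(k)}$ survives only when $r+m-1=k$. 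In other words, the lemma is the grading-compatibility of the Poisson bracket transported through the graded duality between $\C{F}_0(T^\ast\C{Q})$ and $\C{F}^\ast_0(T^\ast\C{Q})$.

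Granting the lemma, the two computations become bookkeeping. For $\hat{\s}$, I write $\hat{\s}=\s+Y$ with $\deg\s=0$, $\deg Y=1$, and $f_\G{s}=f_{(0)}+f_{(1)}$, and expand $\{f_\G{s},\hat{\s}\}$ into four pieces whose only non-zero moments sit at degrees $1,0,2,1$. The three summands of moment $\leq 1$ reassemble as $\{f_{(0)},\hat{\s}\}+\{f_{(1)},Y\}$ and produce $\ad^\ast_{\hat{\s}}f_\G{s}$, while the single degree-$2$ summand $\{f_{(1)},\s\}$ is the $\hat{\G{n}}^\ast$ component and produces $\G{b}^\ast_{\hat{\s}}f_\G{s}$. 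For $\hat{\B{X}}$, I expand $\{f_\G{n},\hat{\B{X}}\}=\sum_{k\geq 2,\,\ell\geq 2}\{f_{(\ell)},\hat{\B{X}}^k\}$ and sort by the value of $\ell-k+1$: the summands with $\ell\in\{k-1,k\}$ (where $\ell\geq 2$) have moment $0$ or $1$ and, up to sign, reassemble as $-\G{a}^\ast_{\hat{\B{X}}}f_\G{n}$, whereas the summands with $\ell\geq k+1$ have moment $\geq 2$ and build $\ad^\ast_{\hat{\B{X}}}f_\G{n}$.

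The main obstacle is the careful verification of the moment--degree lemma --- essentially a systematic version of the integration-by-parts argument displayed in the Remark following Proposition \ref{prop-dualaction-FQpp} --- together with reconciling the exact index ranges in the final summation formulas with the moment decomposition conventions of \eqref{h-decomp}.
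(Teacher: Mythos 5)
Your proposal is correct and follows essentially the same route as the paper: the paper likewise reduces the four operators to the two Poisson brackets $\{f_\G{s},\hat{\s}\}$ and $\{f_\G{n},\hat{\B{X}}\}$ via \eqref{coad} and Proposition \ref{prop-coad-II}, and then simply asserts the splittings \eqref{F-f} and \eqref{G-g} along $\hat{\G{s}}^\ast\oplus\hat{\G{n}}^\ast$. The only difference is that you spell out the moment--degree tracking lemma (the $(k-m+1)$ count) that justifies those splittings, which the paper leaves implicit, and your explicit bookkeeping is consistent with the tensor-field analogue in Proposition \ref{cross-act}.
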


\begin{proof}
The first claim follows, along the lines of \eqref{h-decomp}, from
\begin{equation}\label{F-f}
\{\hat{\s},f_\G{s}\} = \Big(\{\hat{\s}, f_{(0)}\} + \{Y, f_{(1)}\}\Big) + \{\s, f_{(1)}\} \in \hat{\G{s}}^\ast \oplus \hat{\G{n}}^\ast.
\end{equation}
The latter, on the other hand, follows from
\begin{equation}\label{G-g}
\{\hat{\B{X}},f_\G{n}\} = \Big(\{\hat{\B{X}}^k, f_{(k)}\} +\{\hat{\B{X}}^{k+1}, f_{(k)}\}\Big)+ \sum_{k\geq 2}\{\hat{\B{X}}^{\ell+k}, f_{(\ell)}\} \in \hat{\G{s}}^\ast \oplus \hat{\G{n}}^\ast.
\end{equation}
\end{proof}

The coadjoint action of $\C{F}_0(T^*\C{Q})$ on $\C{F}^\ast_0(T^*\C{Q})$ now decomposes at once.

\begin{corollary} \label{coaddecompAexp-FQpp}
Given any $(\hat{\s},\hat{\B{X}})\in \hat{\G{s}}\bowtie \hat{\G{n}}$, and any $(f_\G{s},f_\G{n})\in \hat{\G{s}}^\ast\oplus \hat{\G{n}}^\ast$, the coadjoint action of $\C{F}_0(T^*\C{Q})$ on $\C{F}^\ast_0(T^*\C{Q})$ is given by
\begin{align}\label{Coad-VF-II}
\begin{split}
& \ad^\ast_{(\hat{\s}+ \hat{\B{X}})}(f_\G{s}+f_\G{n}) = \Big(\{ f_{(0)},\hat{\s}\} + \{ f_{(1)},Y\} + \{f_\G{s},\hat{\B{X}}\} + \{f_{(k)},\hat{\B{X}}^k\} +\{f_{(k)},\hat{\B{X}}^{k+1}\}\Big) + \\
& \hspace{6cm} \Big(\sum_{k\geq 2}\{ f_{(\ell)},\hat{\B{X}}^{\ell+k}\}+ \{f_\G{n},\hat{\s}\}+\{ f_{(1)},\s\}\Big) \in \hat{\G{s}}^\ast\oplus \hat{\G{n}}^\ast.
\end{split}
\end{align}
\end{corollary}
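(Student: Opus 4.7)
The corollary is essentially an assembly of results already established, so my plan is to apply the general matched-pair coadjoint formula \eqref{coad} to the specific double cross sum $\C{F}_0(T^*\C{Q}) \cong \hat{\G{s}} \bowtie \hat{\G{n}}$, and then substitute in the explicit expressions collected in Proposition \ref{prop-dualaction-FQpp} and Proposition \ref{cross-act-FQpp}.

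Concretely, I would proceed as follows. Taking $(\hat{\s},\hat{\B{X}}) \in \hat{\G{s}}\bowtie \hat{\G{n}}$ and $(f_\G{s},f_\G{n}) \in \hat{\G{s}}^\ast\oplus \hat{\G{n}}^\ast$, the general formula \eqref{coad} reads
\[
\ad^\ast_{(\hat{\s},\hat{\B{X}})}(f_\G{s},f_\G{n}) = \Big(\ad^\ast_{\hat{\s}}f_\G{s} - f_\G{s}\overset{\ast}{\lt}\hat{\B{X}} - \G{a}^\ast_{\hat{\B{X}}}f_\G{n},\; \ad^\ast_{\hat{\B{X}}}f_\G{n} + \hat{\s}\overset{\ast}{\rt}f_\G{n} + \G{b}^\ast_{\hat{\s}}f_\G{s}\Big).
\]
The first component lives in $\hat{\G{s}}^\ast$ and the second in $\hat{\G{n}}^\ast$. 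I would then substitute each of the six terms using the previously established formulas: $\ad^\ast_{\hat{\s}}f_\G{s}=\{f_{(0)},\hat{\s}\}+\{f_{(1)},Y\}$ and $\G{b}^\ast_{\hat{\s}}f_\G{s}=\{f_{(1)},\s\}$ from \eqref{ad-b-ex-dual-FQpp}, then $\ad^\ast_{\hat{\B{X}}}f_\G{n}=\sum_{k\geq 2}\{f_{(\ell)},\hat{\B{X}}^{\ell+k}\}$ and $\G{a}^\ast_{\hat{\B{X}}}f_\G{n}=\{\hat{\B{X}}^k,f_{(k)}\}+\{\hat{\B{X}}^{k+1},f_{(k)}\}$ from \eqref{ad-a-ex-dual-FQpp}, and finally $f_\G{s}\overset{\ast}{\lt}\hat{\B{X}}=\{\hat{\B{X}},f_\G{s}\}$ and $\hat{\s}\overset{\ast}{\rt}f_\G{n}=\{f_\G{n},\hat{\s}\}$ from \eqref{dualaction-I-FQpp} and \eqref{dualaction-II-FQpp}. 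Using the antisymmetry of the Poisson bracket to turn $-f_\G{s}\overset{\ast}{\lt}\hat{\B{X}} = -\{\hat{\B{X}},f_\G{s}\}=\{f_\G{s},\hat{\B{X}}\}$ and $-\G{a}^\ast_{\hat{\B{X}}}f_\G{n}=\{f_{(k)},\hat{\B{X}}^k\}+\{f_{(k)},\hat{\B{X}}^{k+1}\}$ then yields exactly the eight terms appearing on the right hand side of \eqref{Coad-VF-II}.

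The only real step of substance, beyond bookkeeping, is verifying that the decomposition \eqref{F_0-dual-decomp} of $\C{F}^\ast_0(T^*\C{Q})$ into $\hat{\G{s}}^\ast \oplus \hat{\G{n}}^\ast$ is respected, i.e.\ that each term lands in the summand predicted by \eqref{coad}. For the first component this means checking that $\{f_{(0)},\hat{\s}\} + \{f_{(1)},Y\}$, $\{f_\G{s},\hat{\B{X}}\}$, and $\{f_{(k)},\hat{\B{X}}^k\}+\{f_{(k)},\hat{\B{X}}^{k+1}\}$ all have vanishing moments of order $\geq 2$; for the second that the remaining three terms have vanishing zeroth and first moments. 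This was already verified for $\{f_\G{s},\hat{\B{X}}\}\in\hat{\G{s}}^\ast$ in the remark following Proposition \ref{prop-dualaction-FQpp}, and the other checks are analogous: for each summand one expands $\{\,\cdot\,,\,\cdot\,\}$ via \eqref{poisson-on-func}, applies integration by parts as in the proof of Proposition \ref{prop-coad-II}, and counts the total degree in $p$ of the polynomial factor against the vanishing moments of the density factor.

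The only place that requires a little care is the identification of the cross terms $\G{a}^\ast_{\hat{\B{X}}}f_\G{n}$ and $\G{b}^\ast_{\hat{\s}}f_\G{s}$, since they are defined abstractly via transposition of the mixed maps $\G{a}$ and $\G{b}$ rather than directly through the Poisson bracket; here one uses \eqref{F-f} and \eqref{G-g}, which isolate precisely the cross components of the total bracket. I expect this rewriting — ensuring consistent signs and the correct grading of each summand across $\hat{\G{s}}^\ast$ and $\hat{\G{n}}^\ast$ — to be the main bookkeeping obstacle, but it involves no new idea beyond the previously proven propositions.
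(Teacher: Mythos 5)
Your proposal is correct and follows exactly the route the paper intends: the paper offers no written proof for this corollary, treating it as an immediate assembly of the general coadjoint formula \eqref{coad} with the explicit expressions from Proposition \ref{prop-dualaction-FQpp} and Proposition \ref{cross-act-FQpp}, which is precisely your substitution (with the correct sign handling via antisymmetry of the Poisson bracket). Your additional care about verifying that each term lands in the correct summand of $\hat{\G{s}}^\ast\oplus\hat{\G{n}}^\ast$ is already covered by the remark following Proposition \ref{prop-dualaction-FQpp} and by the decompositions \eqref{F-f} and \eqref{G-g}, so nothing is missing.
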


The double cross sum decomposition of the Lie-Poisson dynamics follows as well.

\begin{corollary}\label{coroll-matched-Vlasov}
The Lie-Poisson equations on $\C{F}^\ast_0(T^*\C{Q})=\hat{\G{s}}^\ast \oplus \hat{\G{n}}^\ast$, generated by a Hamiltonian functional $\C{H}=\C{H}(f_\G{s},f_\G{n})$, may be given by
\begin{equation} \label{coad-mp-TQpp}
\begin{split}
& \frac{d f_\G{s}}{dt}=\{\frac{\partial \C{H}}{\partial f_\G{s}}, f_{(0)}\} + \{\frac{\partial \C{H}}{\partial f_{(1)}}, f_{(1)}\} + \{ \frac{\partial \C{H}}{\partial f_\G{n}},f_\G{s}\} + \{\frac{\partial \C{H}}{\partial f_{(k)}},f_{(k)}\} +\{\frac{\partial \C{H}}{\partial f_{(k+1)}},f_{(k)}\} ,
\\
& \frac{d f_\G{n}}{dt}= \sum_{k\geq 2}\{\frac{\partial \C{H}}{\partial f_{(\ell+k)}},f_{(\ell)}\} + \{ \frac{\partial \C{H}}{\partial f_\G{s}},f_\G{n}\} + \{ \frac{\partial \C{H}}{\partial f_{(0)}}, f_{(1)}\}.
\end{split}
\end{equation}
\end{corollary}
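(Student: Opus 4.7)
The plan is to directly specialize the general matched Lie-Poisson system \eqref{LPEgh} to the double cross sum decomposition $\C{F}_0(T^*\C{Q}) \cong \hat{\G{s}} \bowtie \hat{\G{n}}$ of \eqref{C-decomp-1-}. Since Proposition \ref{prop-dualaction-FQpp} and Proposition \ref{cross-act-FQpp} already supply the dual and cross actions, and Corollary \ref{coaddecompAexp-FQpp} already assembles the full coadjoint action of $\C{F}_0(T^*\C{Q})$ on $\C{F}_0^\ast(T^*\C{Q})$, the corollary amounts to a bookkeeping substitution with no new calculation required.

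Concretely, I would first match the abstract data $(\mu,\nu,\G{g},\G{h})$ of \eqref{LPEgh} with the concrete $(f_\G{s},f_\G{n},\hat{\G{s}},\hat{\G{n}})$, so that $\delta\C{H}/\delta\mu$ and $\delta\C{H}/\delta\nu$ become $\partial\C{H}/\partial f_\G{s}\in\hat{\G{s}}$ and $\partial\C{H}/\partial f_\G{n}\in\hat{\G{n}}$. In accordance with \eqref{hams} I would split $\partial\C{H}/\partial f_\G{s} = \partial\C{H}/\partial f_{(0)} + \partial\C{H}/\partial f_{(1)}$ into its constant-in-$p$ and linear-in-$p$ parts (the latter playing the role of the vector field $Y$), and following \eqref{mams} expand $\partial\C{H}/\partial f_\G{n} = \sum_{k\geq 2}\partial\C{H}/\partial f_{(k)}$ by homogeneity in $p$. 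Inserting these expansions into \eqref{LPEgh} and feeding in the formulas of Proposition \ref{cross-act-FQpp}, the individual coadjoint term $-\ad^\ast_{\partial\C{H}/\partial f_\G{s}}f_\G{s}$ produces $\{\partial\C{H}/\partial f_\G{s},f_{(0)}\} + \{\partial\C{H}/\partial f_{(1)},f_{(1)}\}$ (after using bracket antisymmetry to absorb the minus sign), the dual action $f_\G{s}\overset{\ast}{\lt}(\partial\C{H}/\partial f_\G{n})$ of \eqref{dualaction-I-FQpp} produces $\{\partial\C{H}/\partial f_\G{n},f_\G{s}\}$, and the cross action $\G{a}^\ast$ of \eqref{ad-a-ex-dual-FQpp} produces the coupling pair $\{\partial\C{H}/\partial f_{(k)},f_{(k)}\} + \{\partial\C{H}/\partial f_{(k+1)},f_{(k)}\}$. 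The $\dot{f}_\G{n}$ equation is obtained analogously, using $-\ad^\ast_{\partial\C{H}/\partial f_\G{n}}f_\G{n}$, the dual action \eqref{dualaction-II-FQpp}, and the cross action $\G{b}^\ast_{\partial\C{H}/\partial f_\G{s}}f_\G{s} = \{f_{(1)},\partial\C{H}/\partial f_{(0)}\}$.

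The one non-mechanical point is sign-tracking: the Lie bracket on $\C{F}_0(T^*\C{Q})$ is the \emph{opposite} Poisson bracket \eqref{poisson-on-func} (reflecting the right particle-relabelling symmetry), so the minus signs in \eqref{LPEgh} combine with the antisymmetry of the Poisson bracket, as well as with the signs already absorbed into \eqref{ad-b-ex-dual-FQpp}--\eqref{ad-a-ex-dual-FQpp} through the identities \eqref{F-f}--\eqref{G-g}, to produce exactly the signs displayed in \eqref{coad-mp-TQpp}. Once this sign convention is consistently applied, the corollary follows as an immediate transcription of Corollary \ref{coaddecompAexp-FQpp}.
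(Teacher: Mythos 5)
Your proposal is correct and follows essentially the same route as the paper: both specialize the general matched Lie--Poisson system \eqref{LPEgh} to the pair $(\hat{\G{s}},\hat{\G{n}})$ and then transcribe the terms via Corollary \ref{coaddecompAexp-FQpp} (equivalently, Propositions \ref{prop-dualaction-FQpp} and \ref{cross-act-FQpp}) under the substitutions $\hat{\s}\to\partial\C{H}/\partial f_{\G{s}}$, $\s\to\partial\C{H}/\partial f_{(0)}$, $Y\to\partial\C{H}/\partial f_{(1)}$, $\hat{\B{X}}^k\to\partial\C{H}/\partial f_{(k)}$. Your term-by-term sign bookkeeping (absorbing the minus signs of \eqref{LPEgh} into the antisymmetry of the bracket) matches what the paper leaves implicit.
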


\begin{proof}
Once again, along the lines of \eqref{LPEgh} we have
\begin{align*}
&\frac{d f_\G{s}}{dt}=-\ad^{\ast}_{\frac{\partial \C{H}}{\partial f_\G{s}}} f_\G{s} + f_\G{s} \overset{\ast }{\lt}\frac{\partial \C{H}}{\partial f_\G{n}} + 
\mathfrak{a}_{\frac{\partial \C{H}}{\p f_\G{n}}}^{\ast}{ f_\G{n}},
\\
&\frac{d  f_\G{n}}{dt} =-\ad^{\ast}_{\frac{\partial \C{H}}{\p f_\G{n}}} f_\G{n} - \frac{\partial \C{H}}{\partial f_\G{s}} \overset{\ast }{\rt}{f_\G{n}} - 
\mathfrak{b}_{\frac{\partial \C{H}}{\partial f_\G{s}}}^{\ast}f_\G{s}.
\end{align*}
The claim then follows from Corollary \ref{coaddecompAexp-FQpp} substituting
\begin{align*}
&\hat{\s} \to \frac{\p \C{H}}{\p f_\G{s}}, \qquad \s \to \frac{\p \C{H}}{\p f_{(0)}}, \qquad Y \to \frac{\p \C{H}}{\p f_{(1)}},\\ 
&\hat{\B{X}} \to \frac{\p \C{H}}{\p f_\G{n}}, \qquad \hat{\B{X}}^k \to \frac{\p \C{H}}{\p f_{(k)}}.
\end{align*}
\end{proof}

\subsubsection*{Euler's fuid as a sub-dynamics of the Vlasov plasma} 

Giving up the $\hat{\G{n}}^\ast$-dependence of the Hamiltonian, we arrive at the Euler's fluid motion in \ref{momEuler} as a subdynamics on $\C{F}^\ast_0(T^*\C{Q})$ as \begin{equation} \label{coad-result-V-0}
\frac{d f_\G{s}}{dt}=\{  \frac{\partial \C{H}}{\partial f_\G{s}},f_{(0)}\} + \{ \frac{\partial \C{H}}{\partial f_{(1)}},f_{(1)}\}.
\end{equation}

\subsubsection*{Kinetic moments of order $\geq 2$ as sub-dynamics of the Vlasov plasma} 

This time assuming the Hamiltonian to be depended only on $\hat{\G{n}}^\ast$, we arrive at 
\begin{equation} \label{coad-result-V-1}
\frac{d f_\G{n}}{dt}= \sum_{k\geq 2}\{\frac{\partial \C{H}}{\partial f_{(\ell+k)}},f_{(\ell)}\}.
\end{equation}

\subsubsection*{Decomposition of the Vlasov equation} 

In particular, for the Hamiltonian functional given by
\[
\C{H}(f):=\int_{T^\ast\C{Q}}\,f(q,p)h(q,p)\,dqdp
\]
where $h\in \C{F}_0(T^\ast\C{Q})$ is the one in \eqref{total-energy}, we have
\[
\frac{\p \C{H}}{\p f} = h.
\] 
As such, the double cross sum decomposition of the Vlasov equation \eqref{Vlasov-pre} may be obtained by 
\[
\frac{\p \C{H}}{\p f_\G{s}} = \frac{\p \C{H}}{\p f_{(0)}} \to e\phi, \qquad \frac{\p \C{H}}{\p f_\G{n}} = \frac{\p \C{H}}{\p f_{(2)}} \to \frac{1}{2m}p^2, \qquad \frac{\p \C{H}}{\p f_{(1)}}\to 0, \qquad \frac{\p \C{H}}{\p f_{(k)}} \to 0, \quad k\geq 3,
\]
in \eqref{coad-mp-TQpp}.

\section{Momentum-Vlasov dynamics}\label{mp-mVla-sec}

In this section  we study the Lie-Poisson theory through the Lie algebra of Hamiltonian vector fields. To this end, in accordance with the previous section, we introduce the graded subalgebra of the non-flat Hamiltonian vector fields. Then, also parallel to the previous sections, we consider the dual space of the Hamiltonian vector fields, and present the coadjoint action. We conclude the section with the presentation of the matched Lie-Poisson equations.

\subsection{The Lie algebra of Hamiltonian vector fields}\label{subsect-Ham-Lie-alg}~

The cotangent bundle $T^\ast\C{Q}$ is an exact symplectic manifold by admitting the canonical (Liouville) 1-from $\theta_{\C{Q}}:=p_\ell dq^\ell$, and the symplectic 2-form  $\om_{\C{Q}}=-d\theta_{\C{Q}} = dq^\ell \wedge dp_\ell$. 
A vector field $X\in \G{X}(T^*\C{Q})$ is called ``symplectic'' if its interior product $\iota_X\omega_{\C{Q}} $ with the symplectic 2-form is closed, and $X\in \G{X}(T^*\C{Q})$ is called a (globally) ``Hamiltonian'' if its interior product with the symplectic 2-form is exact; more precisely, 
\begin{equation} \label{HamEq}
\iota_{X} \omega_{\C{Q}}=dh
\end{equation}
for some $h\in \C{F}(T^*\C{Q})$, \cite{JansVism16,MarsdenRatiu-book}. 
In the latter case $X\in \G{X}(T^*\C{Q})$ is said to be the Hamiltonian vector field of $h\in \C{F}(T^*\C{Q})$, and is denoted by $X_h$. The space $\G{X}_{\rm Ham}(T^\ast\C{Q})$ of Hamiltonian vector fields is a Lie algebra through the opposite Jacobi-Lie bracket of vector fields. Moreover,
\begin{equation}\label{epi-onto-Ham}
\vp:\C{F}(T^\ast\C{Q}) \to \G{X}_{\rm Ham}(T^\ast\C{Q}), \qquad h \mapsto -X_h:=-\frac{\p h}{\p p_\ell}\frac{\p }{\p q^\ell} + \frac{\p h}{\p q^\ell}\frac {\p }{\p p_\ell}
\end{equation}
is a Lie algebra epimorphism, whose kernel consists of the constant functions. 

Our choice of the opposite Jacobi-Lie bracket for the Hamiltonian vector fields, and the opposite Poisson bracket for the smooth functions, is motivated by the symmetry of the plasma theory; namely, the particle relabelling symmetry being given by a right action, \cite{MaWe81}.

It worths to note that the composition of the maps \eqref{TQ-to-F} and \eqref{epi-onto-Ham} generalizes (the negative of) the ``complete cotangent lift'' of vector fields, see for instance \cite{LeRo,MarsdenRatiu-book,YaPa67}, and will be referred here as the ``generalized complete cotangent lift'' (GCCL in short) of tensor fields. More precisely,
\begin{align}\label{GCCL}
\begin{split}
& \text{{\small GCCL}}: \G{T}\C{Q}\to \G{X}_{\rm Ham,0}(T^*\C{Q}), \\
&\text{{\small GCCL}} ( \mathbb{X}^k )
:= -k\mathbb{X}^{i_{1}\ldots i_{k-1}\ell}(q)p_{i_{1}}p_{i_{2}}\ldots p_{i_{k-1}}\frac{\p }{\p q^{\ell}} +\mathbb{X}^{i_{1}i_{2}\ldots i_{k}}_{,\ell}(q) p_{i_{1}}p_{i_{2}}\ldots p_{i_{k}}\frac{\p}{\p p_\ell} 
\end{split}
\end{align}
see also \cite{EsGrGuPa19,No93}. Being a composition of two Lie algebra homomorphisms, \eqref{GCCL} is clearly a Lie algebra homomorphism.

It now readily follows from \eqref{GTQ-matched-pair}, \eqref{FTQ-isom-FQp}, and \eqref{TQ-to-F} that
\begin{equation}\label{Ham-vf}
\G{X}_{\rm Ham}(T^\ast\C{Q}) \cong \C{F}(T^\ast\C{Q}) / \B{R}\cong m^\infty_{\C{Q}\times \{0\}}\rtimes \Big((\hat{\G{s}} / \B{R} )\bowtie \hat{\G{n}}\Big).
\end{equation}
We shal then call the Hamiltonian vector fields
\begin{equation}\label{formal-Ham-vf}
\G{X}_{\rm Ham,0}(T^*\C{Q}):=(\hat{\G{s}} / \B{R} )\bowtie \hat{\G{n}}=: {\G{s}^c} \bowtie {\G{n}^c} 
\end{equation}
corresponding to the non-flat functions to be the ``non-flat Hamiltonian vector fields''. 

Let us conclude the present subsection with the explicit expressions of the mutual actions of ${\G{s}^c}$ and ${\G{n}^c}$. Given any $\hat{\s} \in \hat{\G{s}}$, and any $\hat{\B{X}} \in \hat{\G{n}}$, we have
\[
X_{\hat{\B{X}}}\rt X_{\hat{\s}} + X_{\hat{\B{X}}}\lt X_{\hat{\s}} = -[X_{\hat{\B{X}}},X_{\hat{\s}}] = X_{\{\hat{\B{X}},\hat{\s}\}} = X_{\{\hat{\B{X}}^2,\s\}} + \Big(X_{\{\hat{\B{X}},Y\}} + \sum_{k\geq 3} X_{\{\hat{\B{X}}^k,\s\}}\Big) \in {\G{s}^c}\bowtie {\G{n}^c}.
\]
Accordingly,
\begin{align}
& X_{\hat{\B{X}}}\rt X_{\hat{\s}} = X_{\{\hat{\B{X}}^2,\s\}} =  -2\s_{,\ell}(q)\B{X}^{j\ell}(q)\frac{\p }{\p q^j} + 2\Big(\s_{,\ell j}(q)\B{X}^{i\ell}(q) + \s_{,\ell}(q)\B{X}^{i\ell}_{,j}(q)\Big)p_i\frac{\p}{\p p_j},  \label{XG-rt-XF-Ham}\\
& X_{\hat{\B{X}}}\lt X_{\hat{\s}} = X_{\{\hat{\B{X}},Y\}} + \sum_{k\geq 3} X_{\{\hat{\B{X}}^k,\s\}} = \label{XG-lt-XF-Ham}\\
& \sum_{k\geq 2} \Big\{k\Big(-kY^j_{,\ell}(q)\B{X}^{i_1\ldots i_{k-1}\ell}(q) +Y^\ell(q) \B{X}^{i_1\ldots i_{k-1}j}_{,\ell}\Big)p_{i_1}\ldots p_{i_{k-1}}\frac{\p}{\p q^j} + \notag\\
& \Big(-k \B{X}^{i_1\ldots i_{k-1}\ell}(q)Y^{i_k}_{,\ell j} - k\B{X}^{i_1\ldots i_{k-1}\ell}_{,j}(q)Y^{i_k}_{,\ell }(q) +Y^\ell_{,j}(q)\B{X}^{i_1\ldots i_k}_{,\ell}(q) + Y^\ell(q)\B{X}^{i_1\ldots i_k}_{,\ell j}(q)\Big) p_{i_1}\ldots p_{i_k}\frac{\p}{\p p_j}\Big\} +\notag\\
& \sum_{k\geq 3} \Big\{-k(k-1)\s_{,\ell}(q)\B{X}^{i_1\ldots i_{k-2}j\ell}(q)p_{i_1}\ldots p_{i_{k-2}}\frac{\p}{\p q^j}  + \notag\\
& \hspace{4cm}\Big(k\s_{,\ell j}(q)\B{X}^{i_1\ldots i_{k-1}\ell}(q) + k\s_{,\ell}(q)\B{X}^{i_1\ldots i_{k-1}\ell}_{,j}(q)\Big)p_{i_1}\ldots p_{i_{k-1}}\frac{\p}{\p p_j} \Big\}.   \notag
\end{align}

\subsection{The space of 1-forms with non-trivial divergences}\label{subsect-1-form-div}~

Along the lines of the previous paragraphs, we begin with the following characterization of the dual space of the Lie algebra of Hamiltonian vector fields, see also \cite{Gu10}.

\begin{proposition} \label{dualHam}
The dual space $\G{X}_{{\rm Ham}}^{\ast }( T^{\ast }\mathcal{Q})$ of the Lie algebra $\G{X}_{{\rm Ham}}( T^{\ast }\mathcal{Q}) $ of Hamiltonian vector fields may be given by 
\begin{equation} \label{momdef}
\G{X}_{{\rm Ham}}^{\ast }( T^{\ast }\mathcal{Q}) =\{\Pi \in\Lambda^{1}(T^{\ast }\mathcal{Q})\mid {\rm div}\Pi^{\sharp}\neq  0 \text{ if } \Pi \neq 0\},
\end{equation}
and,  
\[
\omega _{\mathcal{Q}}^{\sharp
}:\Lambda^1(T^*\C{Q}) \to \mathfrak{X}(T^*\C{Q}), \qquad \Pi_\ell dq^\ell + \Pi^\ell dp_\ell=:\Pi \mapsto \Pi^\sharp :=\Pi^\ell \frac{\p}{\p q^\ell} - \Pi_\ell \frac{\p}{\p p_\ell}
\]
is the musical isomorphism induced from the symplectic 2-form $\omega _{\mathcal{Q}}:=dq^\ell \wedge dp_\ell \in \Lambda^2(T^\ast\C{Q})$.
\end{proposition}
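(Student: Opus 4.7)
The plan is to pair 1-forms on $T^*\C{Q}$ against vector fields using the symplectic volume form $dqdp$, and then to rewrite the resulting pairing with Hamiltonian vector fields as an $L^2$-pairing of scalar functions. Explicitly, I would start from
\[
\langle \Pi, X\rangle := \int_{T^*\C{Q}} \Pi(X)\,dqdp,
\]
defined on $\Lambda^1(T^*\C{Q}) \ot \G{X}(T^*\C{Q})$, and evaluate it when $X=X_h$ is a Hamiltonian vector field.

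In the Darboux coordinates $(q^\ell,p_\ell)$, writing $\Pi = \Pi_\ell dq^\ell + \Pi^\ell dp_\ell$ and using the coordinate form of $X_h$ recorded in \eqref{epi-onto-Ham}, a direct coordinate computation yields the pointwise identity $\Pi(X_h) = -\Pi^\sharp(h)$, which is just the antisymmetry of the symplectic pairing between the 1-forms $dh$ and $\Pi$. Integrating by parts with respect to the symplectic volume, and using the closedness of $\C{Q}$ together with suitable decay (or compact support) of $h$ in the momentum variables to discard the boundary contributions, I obtain the key identity
\[
\langle \Pi, X_h\rangle \;=\; -\int_{T^*\C{Q}}\Pi^\sharp(h)\,dqdp \;=\; \int_{T^*\C{Q}} h\cdot {\rm div}(\Pi^\sharp)\,dqdp.
\]

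From here the statement follows at once: $\Pi$ annihilates every Hamiltonian vector field if and only if the right-hand side vanishes for every $h\in \C{F}(T^*\C{Q})$, which by the non-degeneracy of the $L^2$-pairing on smooth functions forces ${\rm div}(\Pi^\sharp)=0$. Consequently, the kernel of the pairing on the 1-form side is exactly the set of $\Pi$ for which $\Pi^\sharp$ is divergence-free, and the dual $\G{X}_{\rm Ham}^\ast(T^*\C{Q})$ is identified with the corresponding quotient of $\Lambda^1(T^*\C{Q})$, whose nonzero classes may be represented by 1-forms with ${\rm div}(\Pi^\sharp)\neq 0$.

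The main subtlety I foresee is the interpretation of the displayed set in \eqref{momdef}: it is not a vector subspace of $\Lambda^1(T^*\C{Q})$, but rather a system of representatives modulo divergence-free $\Pi^\sharp$, and the equality in the proposition is intended in this sense. A related consistency check, which I would briefly record, is that the constants $\B{R}\subseteq \C{F}(T^*\C{Q})$ must pair trivially with every $\Pi$, as required by the fact that $\ker \vp = \B{R}$; this is automatic since $\int_{T^*\C{Q}} {\rm div}(\Pi^\sharp)\,dqdp = 0$ by the divergence theorem, in agreement with the isomorphism $\G{X}_{\rm Ham}(T^*\C{Q}) \cong \C{F}(T^*\C{Q})/\B{R}$.
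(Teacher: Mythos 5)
Your proof is correct and follows essentially the same route as the paper: both reduce the pairing $\langle \Pi, X_h\rangle$ via $\Pi(X_h)=-\Pi^\sharp(h)$ and an integration by parts to $\int_{T^\ast\C{Q}} h\,{\rm div}(\Pi^\sharp)\,dqdp$, then identify the dual by non-degeneracy of the resulting $L^2$-pairing, with the vanishing on constants ensuring compatibility with $\ker\vp=\B{R}$. Your added remark that \eqref{momdef} should be read as representatives modulo divergence-free $\Pi^\sharp$ is a fair clarification of what the paper leaves implicit, but it does not change the argument.
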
 

\begin{proof}
We have
\begin{align} \label{dual-calc}
\begin{split}
&\langle X_h,\Pi \rangle = \int_{T^{\ast }\C{Q}}\,\langle X_{h} ,\Pi\rangle \,dqdp
=-\int_{T^{\ast }\C{Q}}\,\langle dh,\Pi^{\sharp
}\rangle\,dqdp =  -\int_{T^{\ast }\mathcal{Q}} \, \Pi ^{\sharp
}(h) \, dqdp = \\
&  -\int_{T^{\ast }\mathcal{Q}} \, \left(\Pi^\ell\frac{\p h}{\p q^\ell} - \Pi_\ell\frac{\p h}{\p p_\ell}\right) \, dqdp = \int_{T^{\ast }\mathcal{Q}} \, \left(\frac{\p \Pi^\ell}{\p q^\ell} -\frac{\p \Pi_\ell}{\p p_\ell}\right) h\, dqdp  =  \int_{T^{\ast }\mathcal{Q}} \, {\rm div}(\Pi^\sharp) h\,dqdp,
\end{split}
\end{align}
where we use the integration by parts on the fifth equality. On the other hand, \eqref{epi-onto-Ham} having the kernel consisting of the constant functions, \eqref{dual-calc} is well-defined as it vanishes whenever $h \in \C{F}(T^\ast\C{Q})$ is a constant function. Finally, the non-degeneracy of \eqref{dual-calc} follows from the definition; namely, the divergence being non-zero for non-zero 1-form densities.
\end{proof}

Accordingly, \eqref{epi-onto-Ham} transposes into the injection
\[
\G{X}_{{\rm Ham}}^{\ast }( T^{\ast }\mathcal{Q}) \to \C{F}^\ast(T^\ast\C{Q}) \cong \C{F}(T^\ast\C{Q}), \qquad \Pi\mapsto {\rm div}(\Pi^\sharp).
\]
In particular,  
\begin{equation}\label{Ham-to-FTQ}
\G{X}^\ast_{\rm Ham,0}(T^*\C{Q}) \to \C{F}^\ast_0(T^*\C{Q}), \qquad \Pi\mapsto {\rm div}(\Pi^\sharp).
\end{equation}
yields the decompostion 
\begin{equation}\label{X_Ham-dual-decomp}
\G{X}^\ast_{\rm Ham,0}(T^*\C{Q}) \cong (\G{s}^c)^\ast \bowtie (\G{n}^c)^\ast,
\end{equation}
where
\begin{align*}
& (\G{s}^c)^\ast := \{\Pi\in \Lambda^1(T^\ast\C{Q})\mid \int_{T^\ast\C{Q}}\,p^k {\rm div}(\Pi^\sharp)\,dp = 0, \quad \forall\,k\geq 2\}, \\ 
&(\G{n}^c)^\ast := \{\Pi\in \Lambda^1(T^\ast\C{Q})\mid \int_{T^\ast\C{Q}}\,{\rm div}(\Pi^\sharp)\,\,dp = \int_{T^\ast\C{Q}}\,p {\rm div}(\Pi^\sharp)\,\,dp =0\}.
\end{align*}
More precisely, parallel to \eqref{h-decomp}, we have the moment decomposition
\[
\G{X}_{\rm Ham,0}(T^*\C{Q})\ni \Pi \mapsto \Pi^\G{s} + \Pi^\G{n} := \Big(\Pi^{(0)} + \Pi^{(1)}\Big) + \sum_{k\geq 2}  \Pi^{(k)} \in (\G{s}^c)^\ast\oplus (\G{n}^c)^\ast,
\]
through the injection \eqref{Ham-to-FTQ}.

Just like the previous paragraphs, once again we conclude with the computation of the coadjoint action. Along the lines of of \cite{Gu10}, see also \cite{esen2012geometry}, the coadjoint action may also be represented in terms of Hamiltonian operators.

\begin{proposition}\label{prop-coad-III}
The coadjoint action of $X_h\in \G{X}_{\rm Ham}(T^\ast\C{Q})$ on $\Pi=\Pi_\ell dq^\ell + \Pi^\ell dp_\ell\in \G{X}^\ast_{\rm Ham}(T^\ast\C{Q})$ is given by
\begin{equation}\label{Coad-VF-Ham}
\ad^\ast_{X_h}\Pi = J_{LP}(\Pi)(X_h),
\end{equation}
where $\om_{\mathcal{Q}}:=dq^\ell \wedge dp_\ell \in \Lambda^2(T^\ast\C{Q})$ is the symplectic 2-form, and
\[
J_{LP}(\Pi) := -\left(\begin{array}{ll} 
\Pi_i\frac{\p}{\p q^j} + \frac{\p \Pi_j}{\p q^i} + \Pi_j\frac{\p}{\p q^i} & \Pi^i\frac{\p}{\p q^j} + \frac{\p \Pi_j}{\p p_i} + \Pi_j\frac{\p}{\p p_i} \\
\Pi_i\frac{\p}{\p p_j} + \frac{\p \Pi^j}{\p q^i} + \Pi^j\frac{\p}{\p q^i} & \Pi^i\frac{\p}{\p p_j} + \frac{\p \Pi^j}{\p p_i} + \Pi^j\frac{\p}{\p p_i}
\end{array}\right).
\]
\end{proposition}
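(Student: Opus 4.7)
My strategy is to verify the identity by testing both sides against an arbitrary Hamiltonian vector field $X_g$ and carrying out a direct computation in the Darboux coordinates $(q^\ell, p_\ell)$ of $T^\ast\C{Q}$. In view of the convention, adopted in \eqref{epi-onto-Ham}, that the Lie bracket on $\G{X}_{\rm Ham}(T^\ast\C{Q})$ is the opposite of the Jacobi--Lie bracket, together with the sign in the definition \eqref{coad} of the coadjoint action, one has
\[
\langle \ad^\ast_{X_h}\Pi,\,X_g\rangle \;=\; \langle \Pi,\,[X_h, X_g]\rangle,
\]
where the bracket on the right is the standard Jacobi--Lie bracket of vector fields on $T^\ast\C{Q}$. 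By Proposition \ref{dualHam}, the right hand side is the integral over $T^\ast\C{Q}$, against the Liouville volume $dqdp$, of the pointwise pairing $\Pi_\ell\,[X_h,X_g]^\ell + \Pi^\ell\,[X_h,X_g]_\ell$.

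The first step is to substitute $X_h = \tfrac{\p h}{\p p_\ell}\p{q^\ell} - \tfrac{\p h}{\p q^\ell}\p{p_\ell}$ and the analogous expansion for $X_g$, and to expand $[X_h, X_g]$ componentwise; this produces an integrand which is linear in the \emph{second} partial derivatives of $g$. The second step is to integrate by parts in $(q,p)$ so as to transfer exactly one derivative off each of these second-derivative factors, onto the remaining factors $\Pi_\ell$, $\Pi^\ell$ or $h$. Since $\C{Q}$ is closed, and the standing decay hypothesis in the momentum direction on $\G{X}^\ast_{\rm Ham}(T^\ast\C{Q})$ applies, all boundary contributions vanish.

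The third step is to recognise the surviving first derivatives of $g$ as the components of the test vector field via $X_g^j = \p g/\p p_j$ and $(X_g)_j = -\p g/\p q^j$, and to collect the coefficients in front of each. What then emerges is precisely
\[
\int_{T^\ast\C{Q}}\Bigl[(J_{LP}(\Pi)X_h)_j\,X_g^j + (J_{LP}(\Pi)X_h)^j\,(X_g)_j\Bigr]\,dqdp \;=\; \langle J_{LP}(\Pi)(X_h),\,X_g\rangle,
\]
with the four blocks of the $2n\times 2n$ matrix-valued differential operator $J_{LP}(\Pi)$ organised into three structural pieces per entry: a \emph{Lie-derivative piece} (a coefficient of $\Pi$ multiplying a derivative acting on the components of $X_h$), a \emph{dual piece} (a derivative of $\Pi$ multiplying a component of $X_h$), and a \emph{transpose piece} (a $\Pi$-coefficient multiplying a derivative acting in the transposed slot). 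Since $X_g$ was arbitrary and the pairing of Proposition \ref{dualHam} is non-degenerate, \eqref{Coad-VF-Ham} follows.

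The main obstacle is not conceptual but bookkeeping: there are on the order of a dozen scalar terms surviving after the integration by parts, and they must be grouped so that the block/matrix structure of $J_{LP}(\Pi)$ becomes visible. A useful sanity check is that Hamiltonian vector fields are divergence-free in Darboux coordinates, so the extra ``${\rm div}\,X$'' contribution that appeared in the coadjoint action on $\G{T}^\ast\C{Q}$ in Proposition \ref{prop-coad} automatically drops out here, leaving only the pure Lie-derivative content encoded by $J_{LP}(\Pi)$.
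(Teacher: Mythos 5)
Your proposal is correct and follows essentially the same route as the paper: both arguments pair $\ad^\ast_{X_h}\Pi$ against an arbitrary test field $X_g$, compute in Darboux coordinates, integrate by parts to shed the second derivatives of $g$ (using non-degeneracy of the pairing from Proposition \ref{dualHam} to conclude), and read off the operator $J_{LP}(\Pi)$ from the surviving coefficients of $\p g/\p q^\ell$ and $\p g/\p p_\ell$. The only cosmetic difference is that the paper first rewrites the pairing as $\int_{T^\ast\C{Q}} {\rm div}(\Pi^\sharp)\,\{g,h\}\,dqdp$ and manipulates the scalar Poisson bracket, whereas you expand the Jacobi--Lie bracket of the vector fields directly; the bookkeeping is equivalent, and your observation that ${\rm div}\,X_h=0$ kills the trace terms is exactly why no divergence contribution appears in $J_{LP}(\Pi)$.
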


\begin{proof}
Given any $X_g \in \G{X}_{\rm Ham}(T^\ast\C{Q})$, we see at once that
\begin{align*}
& \langle \ad^\ast_{X_h}\Pi, X_g\rangle = \langle \Pi, -[X_g,X_h]\rangle = \langle \Pi, X_{\{g,h\}}\rangle = \\
& \int_{T^\ast\C{Q}}\,{\rm div}(\Pi^\sharp)\{g,h\}\,dqdp =  \int_{T^\ast\C{Q}}\,-\{{\rm div}(\Pi^\sharp),h\}g\,dqdp = \\ 
& \int_{T^\ast\C{Q}}\,\left[-\frac{\p }{\p q^s}\left(\frac{\p \Pi^\ell}{\p q^\ell}-\frac{\p \Pi_\ell}{\p p_\ell}\right)\frac{\p h}{\p p_s} + \frac{\p h}{\p q^s}\frac{\p }{\p p_s}\left(\frac{\p \Pi^\ell}{\p q^\ell}-\frac{\p \Pi_\ell}{\p p_\ell}\right)\right]g\,dqdp = \\
& \int_{T^\ast\C{Q}}\,\left[-\frac{\p }{\p q^\ell}\left(\frac{\p \Pi^\ell}{\p q^s}\frac{\p h}{\p p_s} - \frac{\p \Pi^\ell}{\p p_s}\frac{\p h}{\p q^s}\right) + \frac{\p }{\p p_\ell}\left(\frac{\p \Pi_\ell}{\p q^s}\frac{\p h}{\p p_s} - \frac{\p \Pi_\ell}{\p p_s}\frac{\p h}{\p q^s}\right)\right]g\,dqdp  +\\
&  \int_{T^\ast\C{Q}}\, \left[\frac{\p \Pi^\ell}{\p q^s}\frac{\p^2 h}{\p q^\ell \p p_s} - \frac{\p \Pi^\ell}{\p p_s}\frac{\p^2 h}{\p q^\ell \p q^s} - \frac{\p \Pi_\ell}{\p q^s}\frac{\p^2 h}{\p p_\ell \p p_s} + \frac{\p \Pi_\ell}{\p p_s}\frac{\p^2 h}{\p p_\ell \p q^s}\right] g\,dqdp =\\
& \int_{T^\ast\C{Q}}\,-\left[\left(\frac{\p \Pi_\ell}{\p q^s}\frac{\p h}{\p p_s} - \frac{\p \Pi_\ell}{\p p_s}\frac{\p h}{\p q^s}\right)\frac{\p g}{\p p_\ell}-\left(\frac{\p \Pi^\ell}{\p q^s}\frac{\p h}{\p p_s} - \frac{\p \Pi^\ell}{\p p_s}\frac{\p h}{\p q^s}\right)\frac{\p g}{\p q^\ell} \right]\,dqdp + \\
& \hspace{2cm} -\int_{T^\ast\C{Q}}\, \left(\left[\Pi_\ell\frac{\p^2 h}{\p p_\ell \p q^s} - \Pi^\ell\frac{\p^2 h}{\p q^\ell \p q^s} \right]\frac{\p g}{\p p_s} -  \left[\Pi_\ell\frac{\p^2 h}{\p p_\ell \p p_s} - \Pi^\ell \frac{\p^2 h}{\p q^\ell \p p_s}\right] \frac{\p g}{\p q^s}\right)\,dqdp.
\end{align*}
The claim, then, follows.
\end{proof}

In particular, for \eqref{total-energy}, the Lie-Poisson equation 
\[
\frac{d \Pi}{dt}=-\ad^*_{-X_h} \Pi
\]
yields the momentum-Vlasov equations
\begin{align}\label{momvla}
\begin{split}
&\frac{d\Pi _{i}}{dt} =-{X_{h}}(\Pi _{i}) +e\frac{\partial ^{2}\phi }{\partial q^{i}\partial q^{j}}\Pi ^{j}  \\
&\frac{d\Pi ^{i}}{dt} =-{X_{h}}(\Pi ^{i})-\frac{1}{m}\delta ^{ij}\Pi _{j},
\end{split}
\end{align} 
see also \cite{esen2012geometry,Gu10}.

\subsection{Momentum-Vlasov dynamics}\label{Sec-m-Vlasov}~

\begin{proposition} \label{prop-dualaction-Ham}
The left action \eqref{XG-rt-XF-Ham} gives rise to the right action
\begin{equation} \label{dualaction-I-FQpp}
\overset{\ast }{\lt}:{(\G{s}^c)}^*\otimes {\G{n}^c}\to {(\G{s}^c)}^*, \qquad \Pi^\G{s}\overset{\ast }{\lt}X_{\hat{\B{X}}}:= -J_{LP}(\Pi^\G{s})(X_{\hat{\B{X}}})
\end{equation}  
for any $\hat{\B{X}}\in \hat{\G{n}}$, and any $\Pi^\G{s} \in (\G{s}^c)^\ast$. Similarly, given any $\hat{\s}\in \hat{\G{s}}$, and $\Pi^\G{n} \in (\G{n}^c)^\ast$, 
the right action \eqref{XG-lt-XF-Ham} yields the left action 
\begin{equation} \label{dualaction-II-FQpp}
\overset{\ast }{\rt}:{\G{s}^c}\otimes (\G{n}^c)^\ast\to{(\G{n}^c)}^*, \qquad X_{\hat{\s}}\overset{\ast }{\rt} \Pi^\G{n} = J_{LP}(\Pi^\G{n})(X_{\hat{\s}}).
\end{equation}
\end{proposition}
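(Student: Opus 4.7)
The plan is to combine the defining relations \eqref{eta-star} and \eqref{xi-star} of the induced dual actions with the explicit coadjoint formula of Proposition \ref{prop-coad-III}, exactly as in the proof of Proposition \ref{prop-dualaction-FQpp}. The only additional ingredient is the orthogonality built into the decomposition \eqref{X_Ham-dual-decomp}: elements of $(\G{s}^c)^\ast$ annihilate $\G{n}^c$, and elements of $(\G{n}^c)^\ast$ annihilate $\G{s}^c$, under the pairing of Proposition \ref{dualHam}.

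For the first claim I would first unfold \eqref{eta-star} to write
\[
\langle \Pi^\G{s} \overset{\ast }{\lt} X_{\hat{\B{X}}},\, X_{\hat{\s}}\rangle = \langle \Pi^\G{s},\, X_{\hat{\B{X}}} \rt X_{\hat{\s}}\rangle,
\]
then invoke the matched pair decomposition $[X_{\hat{\B{X}}}, X_{\hat{\s}}]_{\rm Ham} = (X_{\hat{\B{X}}} \rt X_{\hat{\s}}) + (X_{\hat{\B{X}}} \lt X_{\hat{\s}})$ provided by Proposition \ref{universal-prop}, together with the vanishing $\langle \Pi^\G{s},\, X_{\hat{\B{X}}} \lt X_{\hat{\s}}\rangle = 0$, to replace the right hand side by $\langle \Pi^\G{s},\, [X_{\hat{\B{X}}}, X_{\hat{\s}}]_{\rm Ham}\rangle$. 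By \eqref{coad} this equals $-\langle \ad^\ast_{X_{\hat{\B{X}}}} \Pi^\G{s},\, X_{\hat{\s}}\rangle$, and Proposition \ref{prop-coad-III} converts it to $-\langle J_{LP}(\Pi^\G{s})(X_{\hat{\B{X}}}),\, X_{\hat{\s}}\rangle$, which is the claim. The second assertion is entirely symmetric: starting from $\langle X_{\hat{\s}} \overset{\ast }{\rt} \Pi^\G{n},\, X_{\hat{\B{X}}}\rangle = \langle \Pi^\G{n},\, X_{\hat{\B{X}}} \lt X_{\hat{\s}}\rangle$ via \eqref{xi-star}, the dual vanishing $\langle \Pi^\G{n},\, X_{\hat{\B{X}}} \rt X_{\hat{\s}}\rangle = 0$ lets me replace the right side by $\langle \Pi^\G{n},\, [X_{\hat{\B{X}}}, X_{\hat{\s}}]_{\rm Ham}\rangle$; anti-symmetrising and invoking \eqref{coad} together with Proposition \ref{prop-coad-III} collapses this to $\langle J_{LP}(\Pi^\G{n})(X_{\hat{\s}}),\, X_{\hat{\B{X}}}\rangle$, as required.

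The only step demanding genuine verification is the pair of orthogonality statements. To show that $\Pi^\G{s}$ annihilates every $X_g$ with $g \in \hat{\G{n}}$, I would use the explicit pairing formula from the proof of Proposition \ref{dualHam}, namely $\langle X_g,\, \Pi^\G{s}\rangle = \int_{T^\ast \C{Q}}\,{\rm div}\big((\Pi^\G{s})^\sharp\big)\,g\,dqdp$, combined with the moment characterisation of $(\G{s}^c)^\ast$: writing $g = \B{X}^{i_1\ldots i_k}(q)p_{i_1}\cdots p_{i_k}$ for $k\geq 2$ and integrating first in the momentum variables, the integral vanishes by definition of $(\G{s}^c)^\ast$. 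The symmetric statement for $(\G{n}^c)^\ast$ follows by the same moment-vanishing argument. This is the main obstacle, but it is essentially inherited from the analogous orthogonality used implicitly in Proposition \ref{prop-dualaction-FQpp}.
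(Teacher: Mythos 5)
Your proposal is correct and follows essentially the same route as the paper's own proof: unfold the defining relations of the dual actions, use the decomposition $X_{\{\hat{\B{X}},\hat{\s}\}} = X_{\hat{\B{X}}}\rt X_{\hat{\s}} + X_{\hat{\B{X}}}\lt X_{\hat{\s}}$ together with the orthogonality of $(\G{s}^c)^\ast$ against $\G{n}^c$ (and dually), and then apply Proposition \ref{prop-coad-III}. The only difference is that you spell out the moment-vanishing argument for the orthogonality, which the paper leaves implicit in the single equality $\langle \Pi^\G{s}, X_{\hat{\B{X}}}\rt X_{\hat{\s}}\rangle = \langle \Pi^\G{s}, X_{\{\hat{\B{X}},\hat{\s}\}}\rangle$.
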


\begin{proof}
Along the lines of Proposition \ref{prop-coad-III}, we see at once that
\[
\langle\Pi^\G{s}\overset{\ast }{\lt}X_{\hat{\B{X}}}, X_{\hat{\s}}\rangle = \langle \Pi^\G{s}, X_{\hat{\B{X}}}\rt X_{\hat{\s}}\rangle = \langle \Pi^\G{s}, X_{\{\hat{\B{X}},\hat{\s}\}}\rangle = \langle -J_{LP}(\Pi^\G{s})(X_{\hat{\B{X}}}),X_{\hat{\s}}\rangle,
\]
and that
\[
\langle X_{\hat{\s}}\overset{\ast }{\rt}\Pi^\G{n}, X_{\hat{\B{X}}}\rangle = \langle \Pi^\G{n}, X_{\hat{\B{X}}}\lt X_{\hat{\s}}\rangle = \langle \Pi^\G{n}, X_{\{\hat{\B{X}},\hat{\s}\}}\rangle = \langle J_{LP}(\Pi^\G{n})(X_{\hat{\s}}),X_{\hat{\B{X}}}\rangle.
\]
Both claims thus follow.
\end{proof}

We shall now investigate the transposes of the mappings 
\begin{align} 
&\G{b}_{X_{\hat{\s}}} :{\G{n}^c} \to {\G{s}^c}, \quad \G{b}_{X_{\hat{\s}}}X_{\hat{\B{X}}} = X_{{\hat{\B{X}}} }\rt X_{{\hat{\s}}}, \qquad\G{a}_{X_{\hat{\B{X}}}}:{\G{s}^c} \to {\G{n}^c}, \quad \G{a}_{X_{\hat{\B{X}}}}X_{\hat{\s}} = X_{{\hat{\B{X}}} }\lt X_{ {\hat{\s}}}, \label{a-b-ex-Ham} \\
& \ad_{\hat{\s}}:{\G{s}^c}\to {\G{s}^c}, \quad \ad_{X_{\hat{\s}}}X_{\hat{\lambda}} = X_{\{{\hat{\s}},{\hat{\lambda}}\}}, \qquad \ad_{\hat{\B{X}}}:{\G{n}^c}\to {\G{n}^c}, \quad \ad_{X_{\hat{\B{X}}}}X_{\hat{\B{Y}}}= X_{\{{\hat{\B{X}}},{\hat{\B{Y}}}\}},\label{ad-Ham}
\end{align}
where ${\hat{\s}},{\hat{\lambda}}\in \hat{\G{s}}$, and ${\hat{\B{X}}},{\hat{\B{Y}}}\in \hat{\G{n}}$, keeping this time Proposition \ref{prop-coad-III}, in addition to \eqref{coad}, in mind. Just as above, we shall proceed along
\[
J_{LP}(\Pi^\G{s})(X_{\hat{\s}}) = \ad^\ast_{X_{\hat{\s}}}\Pi^\G{s} + \G{b}^\ast_{X_{\hat{\s}}}\Pi^\G{s} \in (\G{s}^c)^\ast \oplus (\G{n}^c)^\ast, \qquad J_{LP}(\Pi^\G{n})(X_{\hat{\B{X}}}) = -\G{a}^\ast_{X_{\hat{\B{X}}}}\Pi^\G{n} + \ad^\ast_{X_{\hat{\B{X}}}}\Pi^\G{n} \in (\G{s}^c)^\ast \oplus (\G{n}^c)^\ast
\]
for any ${\hat{\s}}\in \hat{\G{s}}$, any ${\hat{\B{X}}}\in \hat{\G{n}}$, any $\Pi^\G{s}\in (\G{s}^c)^\ast$, and any $\Pi^\G{n}\in (\G{n}^c)^\ast$.

\begin{proposition} \label{cross-act-FQpp}
Given $\hat{\s}\in \hat{\G{s}}$ with $\hat{\B{X}} \in \hat{\G{n}}$, and $\Pi^\G{s}\in (\G{s}^c)^\ast$ with $\Pi^\G{n}\in (\G{n}^c)^\ast$, the transposes of the linear operators in \eqref{a-b-ex-Ham} and \eqref{ad-Ham} may be given by
\begin{align}
 & \ad^\ast_{X_{\hat{\s}}} : (\G{s}^c)^\ast \to (\G{s}^c)^\ast,\quad \ad^\ast_{X_{\hat{\s}}}\Pi^\G{s} = J_{LP}(\Pi^{(0)})(X_{\hat{\s}}) + J_{LP}(\Pi^{(1)})(X_{Y}), \label{ad-ex-dual-Ham}\\
 &\G{b}^\ast_{X_{\hat{\s}}}:(\G{s}^c)^\ast \to (\G{n}^c)^\ast, \quad \G{b}^\ast_{X_{\hat{\s}}}\Pi^\G{s} =  J_{LP}(\Pi^{(1)})(X_{\s}),
 \label{b-ex-dual-Ham} \\
&\ad^\ast_{X_{\hat{\B{X}}}} : (\G{n}^c)^\ast \to (\G{n}^c)^\ast,\quad \ad^\ast_{X_{\hat{\B{X}}}}\Pi^\G{n} = \sum_{k\geq 2} J_{LP}(\Pi^{(\ell)})(X_{{\hat{\B{X}}}^{\ell+k}}) , \label{ad-ex-dual-Ham}\\
&\G{a}^\ast_{X_{\hat{\B{X}}}} :(\G{n}^c)^\ast \to (\G{s}^c)^\ast, \quad \G{a}^\ast_{X_{\hat{\B{X}}}}\Pi^\G{n}= - J_{LP}(\Pi^{(k)})(X_{{\hat{\B{X}}}^k}) -J_{LP}(\Pi^{(k)})(X_{{\hat{\B{X}}}^{k+1}}). \label{a-ex-dual-Ham}  
\end{align}
\end{proposition}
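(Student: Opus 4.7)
The plan is to mirror the argument of Proposition \ref{cross-act-FQpp} for functions, transferring the computations from the Poisson algebra $\C{F}_0(T^*\C{Q})$ to the Hamiltonian vector fields via the Lie algebra epimorphism $\vp$ of \eqref{epi-onto-Ham} and the dual embedding $\Pi\mapsto{\rm div}(\Pi^\sharp)$ from \eqref{Ham-to-FTQ}. Under this dictionary, the coadjoint action $\ad^\ast_h f = \{f,h\}$ on functions corresponds, via Proposition \ref{prop-coad-III}, to $\ad^\ast_{X_h}\Pi = J_{LP}(\Pi)(X_h)$ on 1-forms, so the four claimed identities are the Hamiltonian avatars of \eqref{ad-b-ex-dual-FQpp}--\eqref{ad-a-ex-dual-FQpp}.

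For each of the four identities I would carry out the same template. Starting from the defining pairing, for instance
\[
\big\langle \ad^\ast_{X_{\hat{\s}}}\Pi^{\G{s}}, X_{\hat{\lambda}}\big\rangle = \big\langle \Pi^{\G{s}}, \ad_{X_{\hat{\s}}}X_{\hat{\lambda}}\big\rangle = \big\langle \Pi^{\G{s}}, X_{\{\hat{\s},\hat{\lambda}\}}\big\rangle,
\]
I would unfold the cross- or adjoint-action on the right using \eqref{a-b-ex-Ham} and \eqref{ad-Ham} together with the explicit formulas \eqref{XG-rt-XF-Ham}--\eqref{XG-lt-XF-Ham}. Inserting the moment decomposition $\Pi^{\G{s}}=\Pi^{(0)}+\Pi^{(1)}$ (respectively $\Pi^{\G{n}}=\sum_{k\geq 2}\Pi^{(k)}$), the pairing splits by bilinearity of $J_{LP}$ in its first argument. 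Then the moment-annihilation properties listed after \eqref{X_Ham-dual-decomp}, combined with the Poisson-bracket moment decompositions \eqref{F-f}--\eqref{G-g}, force the surviving summands to be exactly those appearing in the claimed formulas; finally, Proposition \ref{prop-coad-III} repackages each surviving term as a $J_{LP}$-action on the appropriate $X_{\hat{\s}}$, $X_Y$, $X_\s$, $X_{\hat{\B{X}}^k}$, or $X_{\hat{\B{X}}^{k+1}}$.

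The main obstacle I anticipate is the moment bookkeeping on both sides of the epimorphism $\vp$. One must verify that $\Pi\mapsto{\rm div}(\Pi^\sharp)$ sends $\Pi^{(k)}$ to an element $f_{(k)}\in\C{F}_0^\ast(T^\ast\C{Q})$ with exactly the $k$-th moment non-vanishing, so that the Borel-type decomposition \eqref{h-decomp} is compatible with the splitting of $\G{X}^\ast_{{\rm Ham},0}(T^\ast\C{Q})$ in \eqref{X_Ham-dual-decomp}, and then track which single polynomial degree in $p$ of the brackets $\{\hat{\B{X}}^k,\s\}$, $\{\hat{\B{X}}^k,Y\}$, or $\{\hat{\B{X}}^k,\hat{\B{Y}}^m\}$ is picked out by a given $\Pi^{(\ell)}$. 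Beyond this, the sign conventions of the opposite Poisson bracket \eqref{poisson-on-func} and of Proposition \ref{prop-coad-III} demand attention, but the remainder is a routine linear extraction of coefficients analogous to Proposition \ref{cross-act-FQpp}.
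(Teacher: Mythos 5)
Your proposal is correct and follows essentially the same route as the paper: the paper's proof also transfers everything to the function level via $\Pi\mapsto{\rm div}(\Pi^\sharp)$, applies the Poisson-bracket moment decompositions \eqref{F-f} and \eqref{G-g} to split $\{{\rm div}(\Pi^\sharp),\hat{\s}\}$ and $\{{\rm div}(\Pi^\sharp),\hat{\B{X}}\}$ into their $\hat{\G{s}}^\ast\oplus\hat{\G{n}}^\ast$ components, and then repackages each surviving term as a $J_{LP}$-action via Proposition \ref{prop-coad-III}. The compatibility of the moment bookkeeping under the divergence map, which you flag as the main obstacle, is exactly the point the paper relies on (implicitly, through \eqref{Ham-to-FTQ} and \eqref{X_Ham-dual-decomp}) rather than re-deriving.
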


\begin{proof}
Along the lines of Proposition \ref{prop-coad-III} we have
\[
\langle J_{LP}(\Pi^\G{s})(X_{\hat{\s}}), X_h\rangle = \int_{T^\ast\C{Q}}\,-\{{\rm div}({\Pi^\G{s}}^\sharp),\hat{\s}\}h \,dqdp
\]
for any $h\in \C{F}_0(T^*\C{Q})$. On the other hand, following \eqref{F-f} we have
\[
\{{\rm div}({\Pi^\G{s}}^\sharp),\hat{\s}\} = \Big(\{{\rm div}({\Pi^\G{s}}^\sharp)_{(0)},\hat{\s}\} + \{{\rm div}({\Pi^\G{s}}^\sharp)_{(1)}, Y\}\Big) + \{{\rm div}({\Pi^\G{s}}^\sharp)_{(1)},\s\} \in \hat{\G{s}}^\ast \oplus \hat{\G{n}}^\ast.
\]
Accordingly,
\[
J_{LP}({\Pi^\G{s}})(X_{\hat{\s}}) =  \Big(J_{LP}(\Pi^{(0)})(X_{\hat{\s}}) + J_{LP}(\Pi^{(1)})(X_{Y}) \Big) + J_{LP}(\Pi^{(1)})(X_{\s})  \in (\G{s}^c)^\ast \oplus (\G{n}^c)^\ast.
\]
Similarly, it follows from \eqref{G-g} that
\[
 J_{LP}({\Pi^\G{n}})(X_{\hat{\B{X}}})= \Big( J_{LP}(\Pi^{(k)})(X_{{\hat{\B{X}}}^k}) +J_{LP}(\Pi^{(k)})(X_{{\hat{\B{X}}}^{k+1}})\Big)+ \sum_{k\geq 2} J_{LP}(\Pi^{(\ell)})(X_{{\hat{\B{X}}}^{\ell+k}}) \in (\G{s}^c)^\ast \oplus (\G{n}^c)^\ast.
\]
\end{proof}

The double cross sum decomposition of the coadjoint action of $\G{X}_{\rm Ham,0}(T^*\C{Q})$ on $\G{X}^\ast_{\rm Ham,0}(T^*\C{Q})$ is now a direct consequence.

\begin{corollary} \label{coaddecompAexp-Ham}
Given any $(X_{\hat{\s}},X_{\hat{\B{X}}})\in {\G{s}^c}\bowtie {\G{n}^c}$ where $\hat{\s}=\s+Y$ and $\hat{\B{X}}=\sum_{k\geq2}\hat{\B{X}}^k$, and any $(\Pi^\G{s},\Pi^\G{n})\in (\G{s}^c)^\ast\oplus (\G{n}^c)^\ast$, the coadjoint action of $\G{X}_{\rm Ham,0}(T^*\C{Q})$ on $\G{X}^\ast_{\rm Ham,0}(T^*\C{Q})$ may be given by
\begin{align}\label{coad-Ham-decomposed}
\begin{split}
& \ad^\ast_{(X_{\hat{\s}}+ X_{\hat{\B{X}}})}(\Pi^\G{s} +\Pi^\G{n}) = \\
& \Big(J_{LP}(\Pi^{(0)})(X_{\hat{\s}}) + J_{LP}(\Pi^{(1)})(X_{Y}) - J_{LP}(\Pi^\G{s})(X_{\hat{\B{X}}}) +  J_{LP}(\Pi^{(k)})(X_{{\hat{\B{X}}}^k}) +J_{LP}(\Pi^{(k)})(X_{{\hat{\B{X}}}^{k+1}})\Big) + \\
&\hspace{3cm} \Big(\sum_{k\geq 2} J_{LP}(\Pi^{(\ell)})(X_{{\hat{\B{X}}}^{\ell+k}}) - J_{LP}(\Pi^\G{n})(X_{\hat{\s}})+J_{LP}(\Pi^{(1)})(X_{\s})\Big) \in \hat{\G{s}}^\ast\oplus \hat{\G{n}}^\ast.
\end{split}
\end{align}
\end{corollary}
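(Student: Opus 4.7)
The plan is to assemble the claim by invoking the general coadjoint formula \eqref{coad} for the double cross sum $\G{g}\bowtie \G{h}$ with $\G{g} = \G{s}^c$ and $\G{h} = \G{n}^c$, and then substituting the explicit expressions derived in Proposition \ref{prop-dualaction-Ham} and Proposition \ref{cross-act-FQpp}. Concretely, \eqref{coad} produces
\[
\ad^\ast_{(X_{\hat{\s}}, X_{\hat{\B{X}}})}(\Pi^\G{s},\Pi^\G{n}) = \Big(\ad^{\ast}_{X_{\hat{\s}}} \Pi^\G{s} -\Pi^\G{s} \overset{\ast }{\lt}X_{\hat{\B{X}}} - \G{a}_{X_{\hat{\B{X}}}}^{\ast}\Pi^\G{n},\;\ad^{\ast}_{X_{\hat{\B{X}}}} \Pi^\G{n} +X_{\hat{\s}} \overset{\ast }{\rt}\Pi^\G{n}+ \G{b}_{X_{\hat{\s}}}^{\ast}\Pi^\G{s}\Big),
\]
so the work reduces to identifying each of these six ingredients with the corresponding $J_{LP}$-expression.

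The next step is to substitute term by term. From Proposition \ref{prop-dualaction-Ham} we read off the cross actions $\Pi^\G{s}\overset{\ast }{\lt}X_{\hat{\B{X}}} = -J_{LP}(\Pi^\G{s})(X_{\hat{\B{X}}})$ and $X_{\hat{\s}}\overset{\ast }{\rt}\Pi^\G{n} = J_{LP}(\Pi^\G{n})(X_{\hat{\s}})$. From Proposition \ref{cross-act-FQpp} we obtain the two individual coadjoint actions $\ad^\ast_{X_{\hat{\s}}} \Pi^\G{s}$ and $\ad^\ast_{X_{\hat{\B{X}}}}\Pi^\G{n}$, as well as the transposes $\G{b}^\ast_{X_{\hat{\s}}}\Pi^\G{s}$ and $\G{a}^\ast_{X_{\hat{\B{X}}}}\Pi^\G{n}$, each expressed through the moment components $\Pi^{(k)}$ of $\Pi^\G{s}$ and $\Pi^\G{n}$ and the graded components $\hat{\s} = \s + Y$ and $\hat{\B{X}} = \sum_{k\geq 2}\hat{\B{X}}^k$. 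Substituting these expressions and collecting the pieces in $(\G{s}^c)^\ast$ and in $(\G{n}^c)^\ast$ respectively gives precisely the decomposition \eqref{coad-Ham-decomposed}.

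The main obstacle, if any, is purely bookkeeping of signs, because several sign conventions compose: the opposite Jacobi-Lie bracket chosen on $\G{X}_{\rm Ham}(T^\ast\C{Q})$, the sign $h\mapsto -X_h$ in the epimorphism \eqref{epi-onto-Ham}, the overall minus in the definition of $J_{LP}$ from Proposition \ref{prop-coad-III}, and the signs already prescribed by \eqref{coad}. Once one verifies that the signs appearing in Proposition \ref{prop-dualaction-Ham} and Proposition \ref{cross-act-FQpp} compose consistently with those in \eqref{coad}, no further analytic content is required. In particular, this corollary contains no new computation beyond the propositions on which it rests; it is a formal assembly, exactly parallel to the passage from Proposition \ref{coaddecompAexp} to its kinetic-moment version, or from Proposition \ref{cross-act-FQpp} to Corollary \ref{coaddecompAexp-FQpp}.
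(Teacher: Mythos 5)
Your proposal coincides with the paper's own treatment: the corollary is stated there without a separate proof, as a direct consequence of the dual actions in Proposition \ref{prop-dualaction-Ham} and the transposed operators $\ad^\ast_{X_{\hat{\s}}}$, $\ad^\ast_{X_{\hat{\B{X}}}}$, $\G{b}^\ast_{X_{\hat{\s}}}$, $\G{a}^\ast_{X_{\hat{\B{X}}}}$, assembled through the general coadjoint formula \eqref{coad} exactly as you describe. One remark on the sign bookkeeping you rightly single out as the only delicate point: carrying out the substitution literally gives $-\Pi^\G{s}\overset{\ast}{\lt}X_{\hat{\B{X}}}=+J_{LP}(\Pi^\G{s})(X_{\hat{\B{X}}})$ and $+X_{\hat{\s}}\overset{\ast}{\rt}\Pi^\G{n}=+J_{LP}(\Pi^\G{n})(X_{\hat{\s}})$, i.e.\ plus signs on the two cross terms (as bilinearity of $\ad^\ast_{X_h}\Pi=J_{LP}(\Pi)(X_h)$ also forces), so the minus signs printed on those two terms in \eqref{coad-Ham-decomposed} appear to be typographical in the paper rather than an obstruction to your argument.
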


We thus conclude the momentum-Vlasov equations below.

\begin{corollary}\label{coroll-matched-mom-Vlasov}
The Lie-Poisson equations on $\G{X}^\ast_{\rm Ham,0}(T^*\C{Q})=(\G{s}^c)^\ast \oplus (\G{n}^c)^\ast$, generated by a Hamiltonian functional $\C{H}=\C{H}(\Pi^\G{s},\Pi^\G{n})$, may be given by
\begin{equation} \label{coad-mp-Ham}
\begin{split}
& \frac{d \Pi^\G{s}}{dt}=-J_{LP}(\Pi^{(0)})\left(\frac{\p \C{H}}{\p \Pi^\G{s}}\right) - J_{LP}(\Pi^{(1)})\left(\frac{\p \C{H}}{\p \Pi^{(1)}}\right) +\\
& \hspace{4cm}  J_{LP}(\Pi^\G{s})\left(\frac{\p \C{H}}{\p \Pi^\G{n}}\right) - J_{LP}(\Pi^{(k)})\left(\frac{\p \C{H}}{\p \Pi^{(k)}}\right) - J_{LP}(\Pi^{(k)})\left(\frac{\p \C{H}}{\p \Pi^{k+1}}\right) ,
\\
& \frac{d \Pi^\G{n}}{dt}= \sum_{k\geq 2} J_{LP}(\Pi^{(\ell)})\left(\frac{\p \C{H}}{\p \Pi^{(\ell+k)}}\right) + J_{LP}(\Pi^\G{n})\left(\frac{\p \C{H}}{\p \Pi^\G{s}}\right) -  J_{LP}(\Pi^{(1)})\left(\frac{\p \C{H}}{\p \Pi^{(0)}}\right).
\end{split}
\end{equation}
\end{corollary}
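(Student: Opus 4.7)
The plan is to derive the matched Lie-Poisson equations on $\G{X}^\ast_{\rm Ham,0}(T^*\C{Q})=(\G{s}^c)^\ast\oplus (\G{n}^c)^\ast$ by specializing the general matched Lie-Poisson framework \eqref{LPEgh} to the double cross sum decomposition $\G{X}_{\rm Ham,0}(T^*\C{Q})={\G{s}^c}\bowtie {\G{n}^c}$ obtained in \eqref{formal-Ham-vf}. This mirrors exactly the derivation of Corollary \ref{coroll-matched-Vlasov} from Corollary \ref{coaddecompAexp-FQpp} in the previous section, with the Poisson-bracket building blocks replaced by their Hamiltonian-vector-field counterparts $J_{LP}(\Pi)(\,\cdot\,)$.

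First I would write down the general matched Lie-Poisson equations \eqref{LPEgh} with $(\mu,\nu)$ replaced by $(\Pi^\G{s},\Pi^\G{n})\in (\G{s}^c)^\ast\oplus (\G{n}^c)^\ast$ and with the Hamiltonian $\C{H}=\C{H}(\Pi^\G{s},\Pi^\G{n})$, yielding
\begin{align*}
\frac{d \Pi^\G{s}}{dt} &= -\ad^\ast_{\frac{\p \C{H}}{\p \Pi^\G{s}}}\Pi^\G{s}+\Pi^\G{s}\overset{\ast}{\lt}\frac{\p \C{H}}{\p \Pi^\G{n}}+\G{a}^\ast_{\frac{\p \C{H}}{\p \Pi^\G{n}}}\Pi^\G{n},\\
\frac{d \Pi^\G{n}}{dt} &= -\ad^\ast_{\frac{\p \C{H}}{\p \Pi^\G{n}}}\Pi^\G{n}-\frac{\p \C{H}}{\p \Pi^\G{s}}\overset{\ast}{\rt}\Pi^\G{n}-\G{b}^\ast_{\frac{\p \C{H}}{\p \Pi^\G{s}}}\Pi^\G{s}.
\end{align*}

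Next I would substitute the explicit expressions from Proposition \ref{prop-dualaction-Ham} for the dual actions $\overset{\ast}{\lt}$ and $\overset{\ast}{\rt}$, and from the Hamiltonian-version of Proposition \ref{cross-act-FQpp} for the cross-action transposes $\G{a}^\ast_{X_{\hat{\B{X}}}}$ and $\G{b}^\ast_{X_{\hat{\s}}}$, as well as for the individual coadjoint actions $\ad^\ast_{X_{\hat{\s}}}$ and $\ad^\ast_{X_{\hat{\B{X}}}}$. Equivalently, one may directly invoke Corollary \ref{coaddecompAexp-Ham}, which is precisely the assembly of these pieces into the full coadjoint action of $\G{X}_{\rm Ham,0}(T^*\C{Q})$ on its dual, and read off the components. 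The substitutions required are $X_{\hat{\s}}\to \p\C{H}/\p\Pi^\G{s}$ with its $\s$- and $Y$-parts replaced by $\p\C{H}/\p\Pi^{(0)}$ and $\p\C{H}/\p\Pi^{(1)}$ respectively, and $X_{\hat{\B{X}}}\to \p\C{H}/\p\Pi^\G{n}$ with $X_{\hat{\B{X}}^k}\to \p\C{H}/\p\Pi^{(k)}$.

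Collecting the $(\G{s}^c)^\ast$-valued terms and $(\G{n}^c)^\ast$-valued terms, and tracking the signs coming from the $-\ad^\ast$, the $+\overset{\ast}{\lt}$, $-\overset{\ast}{\rt}$, $+\G{a}^\ast$, $-\G{b}^\ast$ pattern of \eqref{LPEgh}, produces exactly \eqref{coad-mp-Ham}. I do not anticipate any genuine obstacle: the statement is essentially a corollary-level bookkeeping exercise, the only care needed being the correct attribution of signs to the cross and dual terms, and recognizing that each occurrence of a bracket $\{\,\cdot\,,\,\cdot\,\}$ in Corollary \ref{coroll-matched-Vlasov} is replaced here by the corresponding Hamiltonian operator expression $J_{LP}(\Pi^{(\cdot)})(X_{(\cdot)})$ via the identification \eqref{Ham-to-FTQ} and Proposition \ref{prop-coad-III}.
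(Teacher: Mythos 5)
Your proposal is correct and follows essentially the same route as the paper: the authors likewise specialize the general matched Lie--Poisson equations \eqref{LPEgh} to $(\Pi^{\G{s}},\Pi^{\G{n}})$ and then read off the terms from the decomposed coadjoint action of Corollary \ref{coaddecompAexp-Ham} under exactly the substitutions $X_{\hat{\s}}\to \p\C{H}/\p\Pi^{\G{s}}$, $X_{\s}\to\p\C{H}/\p\Pi^{(0)}$, $X_{Y}\to\p\C{H}/\p\Pi^{(1)}$, $X_{\hat{\B{X}}}\to\p\C{H}/\p\Pi^{\G{n}}$, $X_{\hat{\B{X}}^k}\to\p\C{H}/\p\Pi^{(k)}$ that you list. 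No gap.
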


\begin{proof}
It follows from \eqref{LPEgh} that
\begin{align*}
&\frac{d \Pi^\G{s}}{dt}=-\ad^{\ast}_{\frac{\partial \C{H}}{\partial \Pi^\G{s}}} \Pi^\G{s} +
\Pi^\G{s} \overset{\ast }{\lt}\frac{\partial \C{H}}{\partial \Pi^\G{n}} + 
\mathfrak{a}_{\frac{\partial \C{H}}{\p \Pi^\G{n}}}^{\ast}{ \Pi^\G{n}},
\\
&\frac{d  \Pi^\G{n}}{dt} =-\ad^{\ast}_{\frac{\partial \C{H}}{\p \Pi^\G{n}}}\Pi^\G{n} -
\frac{\partial \C{H}}{\partial \Pi^\G{s}} \overset{\ast }{\rt}{\Pi^\G{n}} - 
\mathfrak{b}_{\frac{\partial \C{H}}{\partial \Pi^\G{s}}}^{\ast}\Pi^\G{s}.
\end{align*}
The claim then follows from Corollary \ref{coaddecompAexp-FQpp} substituting
\begin{align*}
& X_{\hat{\s}} \to \frac{\p \C{H}}{\p \Pi^\G{s}}, \qquad X_{\s} \to \frac{\p \C{H}}{\p \Pi^{(0)}}, \qquad X_{Y} \to \frac{\p \C{H}}{\p \Pi^{(1)}}, \\
& X_{\hat{\B{X}}} \to \frac{\p \C{H}}{\p \Pi^\G{n}}, \qquad X_{\hat{\B{X}}^k} \to \frac{\p \C{H}}{\p \Pi^{(k)}}.
\end{align*}
\end{proof}

\subsubsection*{The decomposition of the momentum-Vlasov equations} 

Along the lines of \cite{Gu10}, the Hamiltonian functional
\[
\C{H}(\Pi):= \int_{T^\ast\C{Q}}\,\langle \Pi,-X_h\rangle\,dqdp,
\]
where $-X_h \in \G{X}_{\rm Ham,0}(T^\ast\C{Q})$ being the Hamiltonian vector field corresponding to the function \eqref{total-energy}, satisfies
\[
\frac{\p \C{H}}{\p \Pi} = -X_h.
\] 
Accordingly, the double cross sum decomposition of the momentum-Vlasov equations \eqref{momvla} is achieved by substituting 
\[
\frac{\p \C{H}}{\p \Pi^\G{s}} = \frac{\p \C{H}}{\p \Pi^{(0)}} \to -eX_\phi, \qquad \frac{\p \C{H}}{\p \Pi^\G{n}} = \frac{\p \C{H}}{\p \Pi^{(2)}} \to -\frac{1}{2m}X_{p^2}, \qquad \frac{\p \C{H}}{\p \Pi^{(1)}}\to 0, \qquad \frac{\p \C{H}}{\p \Pi^{(k)}} \to 0, \quad k\geq 3,
\]
in \eqref{coad-mp-Ham}.

\section{The group of canonical diffeomorphisms}\label{sect-can-diff}

In this final section, we shall study the Lie group counterpart of the decompositions we have considered in the previous sections. To this end, we begin with a quick overview of the matched pairs of Lie groups, and their double cross products. Then, we shall present the double cross product group structure of the group of canonical diffeomorphisms of the cotangent bundle, preserving the canonical 1-form, the Lie algebra of which may be identified with the Lie algebra of non-flat Hamiltonian vector fields.

\subsection{Matched pairs of Lie groups}\label{subsect-matched-Lie-gr}~

Parallel to the Lie algebra case, we shall now recall briefly the matched pair theory (and hence the double cross product construction) for Lie groups from \cite{Maji90,Majid-book,LuWein90,Maji90-II,Ta81,Zhan10}. 

Let $(G, H)$ be a pair of Lie groups, with mutual actions
\begin{align} 
& \rt:H\times G\to G,\quad (y,x) \mapsto y\rt x, \label{Lieact-left-gr}\\ 
&\lt:H\times G\to H, \quad (y,x) \mapsto y\lt x \label{Lieact-right-gr}.
\end{align}
The pair $(G, H)$, then, is called a ``matched pair of Lie groups'' if the mutual actions \eqref{Lieact-left-gr} and \eqref{Lieact-right-gr} satisfy
\begin{equation} \label{compcon-mpl-gr}
\begin{split}
y\rt (x_1x_2)=(y\rt x_1)((y\lt x_1)\rt x_2), \\
(y_1y_2)\lt x = (y_1\lt(y_2\rt x))(y_2\lt x),
\end{split}
\end{equation}
for any $x,x_1,x_2\in G$, and any $y,y_1,y_2\in H$. Now, given a matched pair of Lie groups $(G,H)$, the product space $G\bowtie H:= G\times H$ becomes a Lie group through
\begin{equation}\label{mpla-gr}
(x_1,y_1)(x_2,y_2)=\Big(x_1(y_1\rt x_2) , (y_1\lt x_2)y_2\Big)
\end{equation}
for any $(x_1,y_1),(x_2,y_2)\in G\bowtie H$, called the ``double cross product'' of $G$ and $H$.

Just as in the Lie algebra case, the double cross product group $G\bowtie H$ reduces to the (right-handed) semi-direct product group $G\ltimes H$ if the left action \eqref{Lieact-left-gr} is trivial, and to the (left-handed) semi-direct product group $G\rtimes H$ in case the right action \eqref{Lieact-right-gr} is trivial.

Furthermore, the Lie group analogue of \cite[Prop. 8.3.2]{Majid-book} is given in \cite[Prop. 6.2.15]{Majid-book}, which we also record.

\begin{proposition} \label{universal-prop-gr}
Given a Lie group $K$, with two subgroups $G,H \subseteq K$, if $K\cong G\times H$ as manifolds through 
\[
G\times H\to K, \qquad (x,y)\mapsto xy,
\]
then $(G,H)$ is a matched pair of Lie groups, and moreover $K\cong G\bowtie H$ as Lie groups. The mutual actions, then, are given by 
\begin{equation} \label{mab-defn-gr}
yx=(y\rt x)(y\lt x) \in K,
\end{equation}
for any $x\in G$, and any $y\in H$.
\end{proposition}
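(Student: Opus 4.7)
The plan is to exploit the diffeomorphism $\mu: G\times H \to K$, $(x,y)\mapsto xy$, as the source of unique $GH$-factorizations in $K$. For any $y\in H$ and $x\in G$, the product $yx\in K$ lies in the image of $\mu$, so there exist unique elements, which I will call $y\rt x\in G$ and $y\lt x\in H$, satisfying
\[
yx = (y\rt x)(y\lt x).
\]
Smoothness of the two maps $\rt:H\times G\to G$ and $\lt:H\times G\to H$ is immediate from the smoothness of $\mu^{-1}$ composed with multiplication in $K$ and the projections $G\times H\to G$ and $G\times H\to H$. Setting $x=e$ or $y=e$ in the defining identity shows $y\rt e=e$, $y\lt e=y$, $e\rt x=x$, and $e\lt x=e$, which are the identity axioms.

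Next I would verify the action axioms and the compatibility conditions \eqref{compcon-mpl-gr} simultaneously by associativity in $K$. Computing $y(x_1x_2)$ in two different ways,
\[
y(x_1x_2)=\bigl(y\rt(x_1x_2)\bigr)\bigl(y\lt(x_1x_2)\bigr),
\]
and
\[
(yx_1)x_2 = (y\rt x_1)(y\lt x_1)x_2 = (y\rt x_1)\bigl((y\lt x_1)\rt x_2\bigr)\bigl((y\lt x_1)\lt x_2\bigr),
\]
the uniqueness of $GH$-factorization forces
\[
y\rt(x_1x_2) = (y\rt x_1)\bigl((y\lt x_1)\rt x_2\bigr), \qquad y\lt(x_1x_2) = (y\lt x_1)\lt x_2,
\]
which yields the first compatibility condition and shows that $\lt$ is a right action of $G$ on $H$. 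The analogous computation of $(y_1y_2)x$ in two ways supplies the second compatibility condition and shows that $\rt$ is a left action of $H$ on $G$. Thus $(G,H)$ is a matched pair of Lie groups.

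Finally, to identify $K$ with $G\bowtie H$, I would transport the group law from $K$ to $G\times H$ via $\mu$. For $(x_1,y_1),(x_2,y_2)\in G\times H$,
\[
(x_1y_1)(x_2y_2) = x_1(y_1x_2)y_2 = x_1(y_1\rt x_2)(y_1\lt x_2)y_2 = \bigl[x_1(y_1\rt x_2)\bigr]\bigl[(y_1\lt x_2)y_2\bigr],
\]
and the bracketed factors lie in $G$ and $H$ respectively. Applying $\mu^{-1}$ therefore gives the product rule
\[
(x_1,y_1)(x_2,y_2) = \bigl(x_1(y_1\rt x_2), (y_1\lt x_2)y_2\bigr),
\]
which is precisely \eqref{mpla-gr}. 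Hence $\mu$ is a Lie group isomorphism $G\bowtie H\cong K$. The main subtle point is keeping the handedness of the two actions straight and checking the identity $yx=(y\rt x)(y\lt x)$ rather than $xy=\cdots$; all the heavy lifting is done by associativity in $K$ combined with the uniqueness of $GH$-factorizations, so no further nontrivial argument is required.
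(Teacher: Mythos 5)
Your argument is correct and is exactly the standard one: the paper itself records this proposition without proof, citing \cite[Prop.~6.2.15]{Majid-book}, and your derivation — unique $GH$-factorization from the diffeomorphism $(x,y)\mapsto xy$, the actions read off from $yx=(y\rt x)(y\lt x)$, compatibility and the action axioms extracted from associativity, and the product rule \eqref{mpla-gr} obtained by transporting the group law — is precisely the proof given in that reference. Nothing is missing.
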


\subsection{The double cross product realization}\label{subsect-decomp-diff-can}~

In the present subsection we shall investigate, following \cite[Prop. 2.1]{MoscRang09} in which the case of $\C{Q}=\B{R}^n$ is treated in detail, the double cross product realization of the subgroup ${\rm Diff}_{\rm can,\t}(T^\ast\C{Q}) \subseteq {\rm Diff}_{\rm can}(T^\ast\C{Q})$ of canonical diffeomorphisms that preserve the canonical (Liouville) 1-form. We note from \cite[Prop. 6.3.2]{MarsdenRatiu-book} that these precisely are the ones preserving the cotangent fibers.

Along the lines of \cite[Sect. IV.12]{KoMiSl93}, let $J^\infty_{(q,0)}(T^\ast\C{Q})$ denote the set of all infinite jets $J^\infty_{(q,0)}\Phi$ of diffeomorphisms $\Phi\in {\rm Diff}_{\rm can}(T^\ast\C{Q})$, of partial derivatives with respect to the momentum variables, at $(q,0) \in T^\ast\C{Q}$. Accordingly, letting
\[
S := \{\vp\in {\rm Diff}_{\rm can,\t}(T^\ast\C{Q}) \mid J^\infty_{(q,0)}(\vp) = J^1_{(q,0)}(\vp),\quad \forall\,q\in \C{Q}\},
\]
and
\[
N := \{\psi\in {\rm Diff}_{\rm can,\t}(T^\ast\C{Q}) \mid J^1_{(q,0)}(\psi) = J^1_{(q,0)}(\Id),\quad \forall\,q\in \C{Q}\},
\]
which are both clearly subgroups, we achieve 
\[
S\times N\cong {\rm Diff}_{\rm can,\t}(T^\ast\C{Q}), \qquad (\vp,\psi)\mapsto \vp\circ\psi,
\] 
the proof of which is verbatim to that of \cite[Prop. 2.1]{MoscRang09}, see also \cite{Kac68}. As such, Proposition \ref{universal-prop-gr} yields a Kac-type decomposition
\begin{equation}\label{Kac-decomp-Diff}
{\rm Diff}_{\rm can,\t}(T^\ast\C{Q}) \cong S\bowtie N.
\end{equation}

Furthermore, for any $0\leq s\leq r$ and any $\Phi\in {\rm Diff}_{\rm can,\t}(T^\ast\C{Q})$, the projections $\pi^r_s:J^r_{(q,0)}(\Phi)\mapsto J^s_{(q,0)}(\Phi)$ of $r$-jets into $s$-jets, see for instance \cite[Subsect. 12.2]{KoMiSl93}, endow ${\rm Diff}_{\rm can,\t}(T^\ast\C{Q})$ with the structure of an inverse limit of Lie groups, for the details of which we refer the reader to \cite{Ster61}.  

Accordingly, $S \subseteq {\rm Diff}_{\rm can,\t}(T^\ast\C{Q})$ has itself the structure of a semi-direct product, which has already appeared in \cite{MarsRatiWein84}, in the study of the compressible fluid motion.

On the Lie algebra level, $\C{F}_0(T^\ast\C{Q})$ admits, by construction a natural filtration (see, for instance, \cite{FuksGelfKali72,Perchik-PhD-thesis,Perc76}) based on the degrees of the momentum variables. More precisely,
\[
\C{F}_0(T^\ast\C{Q}) =  L_{-1} \supseteq L_0 \supseteq \ldots \supseteq L_j \supseteq \ldots
\]
where $L_j$ is the set of functions that vanish on $\C{Q}\times \{0\}$ to the order $j$ (of partial differentiation with respect to the momentum variables). The Poisson bracket, then, satisfies 
\[
\{L_r,L_s\} \subseteq L_{r+s},
\]
that is, $\C{F}_0(T^\ast\C{Q})$ becomes a filtered Lie algebra, and hence, an inverse limit of Lie algebras through the projections $\pi_s^r:L_r\to L_s$ for any $0\leq s\leq r$.

Now, it follows from the exponantiation of the Lie algebras of vector fields to the Lie groups of diffeomorphisms, see for instance \cite{MoscRang09} or \cite[Prop. 4]{esen2012geometry}, that the filtrations introduced above are preserved. As such, $\G{s}^c$ exponentiates into $S$, and $\G{n}^c$ into $N$. In other words, the Lie algebra of ${\rm Diff}_{\rm can,\t}(T^\ast\C{Q}) \cong S\bowtie N$ may be identified with $\G{X}_{\rm Ham,0}(T^\ast\C{Q}) \cong \G{s}^c\bowtie \G{n}^c$.

On the other hand, since (infinite) jets are defined as the quotients of the germs of diffeomorphisms (by those whose partial derivatives vanish at all orders), the Lie algebra counterpart of \eqref{Kac-decomp-Diff} may be seen best through
\begin{equation}
\begin{split}
\G{s}^\infty = &\{h \in \C{F}(T^\ast\C{Q}) \mid  \frac{\p^k h}{\p p_{i_1}\ldots \p p_{i_k}}(q,0)=0,\,\, k\geq 2\},
\\
\G{n}^\infty = &\{h \in \C{F}(T^\ast\C{Q}) \mid h(q,0)=0\,,\, \frac{\p h}{\p p_i}(q,0)=0\}
\end{split}
\end{equation}
so that $\G{s}^\infty+\G{n}^\infty =\C{F}(T^\ast\C{Q})$, and that  $\G{s}^\infty\cap\G{n}^\infty = m_{\C{Q}\times \{0\}}^\infty$. As such, 
\begin{equation} \label{decoco}
\C{F}_0(T^\ast\C{Q}) \cong \C{F}(T^\ast\C{Q}) / m_{\C{Q}\times \{0\}}^\infty \cong \G{s}^\infty / m_{\C{Q}\times \{0\}}^\infty \oplus \G{n}^\infty / m_{\C{Q}\times \{0\}}^\infty.
\end{equation}

We conclude with the following remarks.

\begin{remark}
A quick comparison of \eqref{Coad-VF-Ham} and \eqref{coad-Ham-decomposed}, or equivalently \eqref{Coad-VF-} and \eqref{Coad-VF-II}, reveals the non-trivial effect of the graded structure, of $\C{F}_0(T^\ast\C{Q})$ and $\G{X}_{\rm Ham,0}(T^\ast\C{Q})$ respectively, in explicit calculations. More importantly, the inverse limit structures of ${\rm Diff}_{\rm can,\t}(T^\ast\C{Q})$ and $\C{F}_0(T^\ast\C{Q})$, provides a promising avenue for the symplectic and Poisson reductions via finite dimensional Lie groups (in this infinite dimensional setting).
\end{remark}

\begin{remark}
The double cross product decomposition \eqref{Kac-decomp-Diff} above can also be repeated for the other diffeomorphism groups such as the group of diffeomorphisms preserving a volume form, or those preserving the contact form. These cases are both studied in \cite{MoscRang09} in detail, for $\C{Q}=\B{R}^n$ and $\C{Q}=\B{R}^{2n+1}$, respectively.
\end{remark}

\section{Conclusions and Discussion}

We have stated and proved novel results such as the (matched pair) decomposition of the Vlasov equation, along with the dynamics of its kinetic moments. More precisely in Proposition \ref{mpdTQ} we proposed a matched pair Lie algebra decomposition of the symmetric contravariant vector fields. In Proposition \ref{prop-dualaction}, we computed the induced dual actions, while in Proposition \ref{cross-act} we obtained the induced cross actions. We exhibit the matched pair decomposition of the dynamics of kinetic moments in \eqref{coad-mp-kinetic-moments}. In  Proposition \ref{prop-SC-Poisson-identify}, a Lie algebra homomorphism from the symmetric contravariant tensor fields to the algebra $\C{F}(T^\ast\C{Q})$ of functions on the cotangent bundle is introduced, via which the matched pair decomposition of the subalgebra $\C{F}_0(T^\ast\C{Q}) \subseteq \C{F}(T^\ast\C{Q})$ was obtained in \eqref{C-decomp-1-}. Dually, we have presented the matched pair decomposition of the Vlasov equation in Corollary \ref{coroll-matched-Vlasov}. In order to transfer all these discussions to the level of Hamiltonian vector fields and the momentum-Vlasov equations, a Lie algebra homomorphism, \text{{\small GCCL}}, has been introduced in \eqref{GCCL}. Accordingly, the matched pair decomposition of the Hamiltonian vector fields has been derived in \eqref{Ham-vf} and \eqref{formal-Ham-vf}.  Thus, we could realize the momentum-Vlasov equations \eqref{momvla} as a matched pair Lie-Poisson system in Corollary \ref{coroll-matched-mom-Vlasov}. We illustrate the relation with the matched pair Lie algebras via the commutative diagram 
\begin{equation*} 
\xymatrix{
{\G{T}\C{Q}} \ar[d]^{\kappa} \ar@/_4pc/[dd]_{\text{\small GCCL}} \ar@{=}[rr]^{\eqref{GTQ-matched-pair}} && \G{s}\bowtie \G{n}  \ar[d]^{\kappa} &
\\
\C{F}_0(T^*\C{Q}) \ar[d]^\vp\ar@{=}[rr]^{\eqref{C-decomp-1-} } && \hat{\G{s}} \bowtie \hat{\G{n}} \ar[d]^\vp
&\\
{\G{X}}_{\mathrm{Ham},0}(T^*\C{Q})\ar@{=}[rr]^{\eqref{formal-Ham-vf}} && \G{s}^c\bowtie \G{n}^c \ar@{=}[r] &\hat{\G{s}}/\mathbb{R}\bowtie \hat{\G{n}},
}
\end{equation*}
while for the relation with the matched pair Lie-Poisson spaces we record 
\begin{equation*} 
\xymatrix{
{\G{T}^*\C{Q}}  \ar@{=}[rr]^{\eqref{GTstarQ-matched-pair}} && \G{s}^*\oplus \G{n}^*   &
\\
\C{F}_0^*(T^*\C{Q})  \ar[u]_{\kappa^*} \ar@{=}[rr]^{\eqref{F_0-dual-decomp}} && \hat{\G{s}}^* \oplus \hat{\G{n}}^*    \ar[u]_{\kappa^*} &
\\
{\G{X}}_{\mathrm{Ham},0}^*(T^*\C{Q})  \ar[u]_{\vp^\ast}
\ar@/^4pc/[uu]^{\text{\small GCCL}^*}
\ar@{=}[rr]^{\eqref{X_Ham-dual-decomp}} && (\hat{\G{s}}/\mathbb{R})^* \oplus (\hat{\G{n}}^c )^* \ar[u]_{\vp^\ast} \ar@{=}[r] & (\G{s}^c)^* \oplus (\G{n}^c )^*.
}
\end{equation*}
We, finally, include below an incomplete list of future works, along with open problems, related to the content of the present paper. 

\subsubsection*{\textbf{Cocycle double cross sum Lie algebras. 10-moment approximations. BBGKY hierarchy.}} Let us note that ${\mathfrak{T}\mathcal{Q}}$ can be written as a (vector space) direct sum
\begin{equation} \label{decomp-TQ-a}
{\mathfrak{T}\mathcal{Q}}=\sum_{k=0}^a \, \mathfrak{T}^{k}\mathcal{Q}\oplus \sum_{k=a+1}^\infty\,\mathfrak{T}^{k}\mathcal{Q},
\end{equation}
  for all integers $a$. In Proposition \ref{mpdTQ}, it is established that, the direct sum \eqref{decomp-TQ-a} turns out to be a matched pair for $a=1$. Similarly, one can show that the direct sum \eqref{decomp-TQ-a} is a matched pair for $a=0$ as well, though, in the latter case the Lie bracket on the first constitutive subalgebra $\mathfrak{T}^0\mathcal{Q}=\C{F}(\mathcal{Q})$ is trivial. 
  
  On the other hand, for $a\geq 2$ the decomposition \eqref{decomp-TQ-a} fails to be a matched pair Lie algebra decomposition as the subspace $\sum_{k=0}^a \, \mathfrak{T}^{k}\mathcal{Q}$ is no longer a Lie subalgebra of ${\mathfrak{T}\mathcal{Q}}$. However, it was observed in \cite{AgorMili14} that these ``extended'' decompositions for $a\geq 2$ are also Lie algebras, that we call ``cocycle double cross sum'' Lie algebras in \cite{EsGuSu20}. Such extended structures provide a unifying generalization of both the matched pair Lie algebras, and the 2-cocycle extensions of Lie algebras. What we also observe in \cite{EsGuSu20} is that among other examples of cocycle double cross sum Lie algebras are the enveloping algebras of Lie-Yamaguti algebras \cite{Kikk75,KinyWein01,Yama57}, and that the universal enveloping algebra of a cocycle double cross sum Lie algebra is a Brzeziński crossed product (with a coalgebra), \cite{Brze97-II}.
  
The algebraic/geometric analysis of the decomposition of the kinetic moments, that corresponds to degrees $a \geq 2$ in \eqref{decomp-TQ-a}, seems also a promising research area in fluid and plasma theories. For instance, in the case $a=2$, one has $10$-moment kinetic theory \cite{EsGrGuPa19} which paves the way towards to whole Grad hierarchy \cite{grad1965boltzmann} including the
entropic moments \cite{grmela2017hamiltonian}. 
We refer to \cite{Le96,PeChMoTa15,Pe90} for a collection of the related works on the kinetic moments. 

The cocycle double cross sum Lie algebra construction of \cite{EsGuSu20} may have more applications in plasma physics. A Hamiltonian analysis of the well-known BBGKY (Bogoliubov-Born-Green-Kirkwood-Yvon) hierarchy of the plasma dynamics \cite{Cercignani1997} has been achieved in \cite{MaMoWe83}, wherein the dynamics of the hierarchy is written as a Lie-Poisson equation. This Lie-Poisson framework does not admit a matched pair decomposition, hence it lies outside of the scope of the present paper. Nevertheless, it fits into the cocycle double cross sum decomposition we propose in \cite{EsGuSu20}. 

\subsubsection*{\textbf{Decomposition of the Euler-Poincar\'{e} realization of the geodesic Vlasov equation.} } A pure quadratic Lagrangian functional on the space $\mathfrak{X}_{\mathrm{\mathrm{ham}}}(T^\ast \mathcal{Q})$ of Hamiltonian vector fields was introduced in \cite{holm2009geodesic}. Then referring to this metric, the geodesic Vlasov equation has been studied in the Euler-Poincar\'{e} formalism. On the other hand, we proposed the abstract theory of matched Euler-Poincar\'{e} equations in \cite{EsenSutl17}. Using the matched pair decomposition of the Hamiltonian vector fields in \eqref{Ham-vf} and \eqref{formal-Ham-vf}, it is possible to apply the theory in \cite{EsenSutl17} to the geodesic Vlasov equation. Moreover, it is then natural to connect this Lagrangian picture to  the Hamiltonian one by a proper and non-degenerate (thanks to the quadraticity of the Lagrangian) Legendre transformation. A similar, but relatively harder problem (one with the regularity of the Lagrangian is not assumed) is posed in the following paragraph.  
  
\subsubsection*{\textbf{The (inverse) Legendre transformation of the Vlasov plasma.}} Finding a Legendre transformation between the Euler-Poincaré and the Lie-Poisson formulations of the Poisson-Vlasov equations cannot be achieved in a straightforward manner, as a result of the degeneracy of the Hamiltonian function(al). One way to overcome this difficulty is to use the Tulczyjew's triplet which allows the Legendre transformation for singular systems as well, \cite{tulczyjew1977legendre}. For the particular form of the Tulczyjew's triplet for Lie groups, which is initiated by the same motivational question, we refer the reader to \cite{esen2014tulczyjew,esen2017tulczyjew}, see also \cite{grabowska2016tulczyjew}.  
The matched pair strategy may, on the other hand, be used to construct a proper Tulczyjew's triplet for the Vlasov plasma.

\section{Acknowledgment}

The first named author (OE) is grateful to Prof. Hasan Gümral for enlightening discussions on the Vlasov plasma especially for momentum-Vlasov dynamics.  OE is also grateful to Prof. Miroslav Grmela, Prof. Michal Pavelka, Prof. Petr Vágner for enlightening discussions on (ir)reversible plasma dynamics. Both authors are grateful Prof. Mansur Ismailov for discussions done on some functional analytical details. 
Both authors gratefully acknowledge the support by T\"UB\.ITAK (the Scientific and Technological Research Council of Turkey) under the project "Matched pairs of Lagrangian and Hamiltonian Systems" with the project number 117F426.

\bibliographystyle{amsplain}
\bibliography{references}

\end{document}